\documentclass[12pt]{article}
\usepackage[margin=1in]{geometry}
\usepackage{mathrsfs, mathtools}
\usepackage[cal=boondoxo]{mathalfa}
\usepackage[title,toc,titletoc]{appendix}
\usepackage{titlesec}
\titleformat{\section}[block]{\large\normalfont\filcenter}{\thesection.}{.5em}{}
\titleformat{\subsection}[hang]{\normalfont \bfseries}{\thesubsection.}{.5em }{}[]
\usepackage[dvipsnames]{xcolor}
\usepackage{bbm}
\usepackage{bm}
\usepackage{amsmath, amssymb}	
\usepackage[shortlabels]{enumitem}
\usepackage[hang,flushmargin]{footmisc}
\usepackage[round]{natbib}
\setlength{\bibsep}{3pt plus 0.3ex}
\PassOptionsToPackage{hyphens}{url}
\usepackage[colorlinks=true,linkcolor = blue, citecolor=blue, hyperfootnotes=false, hyperindex,breaklinks]{hyperref}
\usepackage{cleveref}
\usepackage{bigints}

\newtheorem{lema}{Lemma}

\usepackage{pgf}
\usepackage[justification=centering]{caption}
\usepackage{subcaption}
\usepackage{tikz}
\usepackage{tikz-3dplot}
\usetikzlibrary{patterns, decorations.pathreplacing}
\usepackage{hhline}
\usepackage{float}
\usepackage{placeins}
\usepackage{siunitx}
\usepackage{textcomp}
\usepackage[subtle]{savetrees}
\usepackage{graphics}
\usepackage{graphicx}

\newtheorem{theorem}{Theorem}
\newtheorem{proposition}{Proposition}

\newtheorem{corollary}{Corollary}
\newtheorem{definition}{Definition}
\newtheorem{lemma}{Lemma}

\newenvironment{proof}[1][Proof]{\textbf{#1.} }{\  \rule{0.5em}{0.5em}}
\DeclareMathOperator*{\argmax}{arg\,max}
\DeclareMathOperator*{\argmin}{arg\,min}

\newcommand{\norm}[1]{\left\vert\left\vert #1\right\vert\right\vert_1}
\DeclareMathOperator*{\supp}{supp}
\DeclareMathOperator*{\cl}{cl}
\DeclareMathOperator*{\conv}{conv}

\usepackage{setspace}
\usepackage{multirow}
\definecolor{shade}{gray}{.7}
\onehalfspacing
\frenchspacing

\begin{document}
\title{Competition, Persuasion, and Search\thanks{For helpful comments and discussions, we thank Alessandro Bonatti, Joyee Deb, Th\'{e}o Durandard, Jack Fanning, Daniel Fershtman, Marina Halac, Teddy Kim, Elliot Lipnowski, Erik Madsen, Harry Pei, Larry Samuelson, Quitz\'e Valenzuela-Stookey, Juuso V\"{a}lim\"{a}ki, Rajiv Vohra, Mark Whitmeyer, Kun Zhang, and seminar audiences at Brown, Yale, U Chicago, Virtual Seminar in Economic Theory, SEA, SITE, and Stony Brook. Mekonnen also thanks the Cowles Foundation for Research in Economics, where he spent a majority of his time while conducting this research. }}

\author{Teddy Mekonnen\thanks{Department of Economics, Brown University. Contact: \href{mailto:mekonnen@brown.edu}{mekonnen@brown.edu}}\and Bobak Pakzad-Hurson\thanks{Department of Economics, Brown University. Contact: \href{mailto:bph@brown.edu}{bph@brown.edu}}}
\date{\today}
\maketitle
\thispagestyle{empty}
\setcounter{page}{0}
\begin{abstract}
How does competition in markets for information affect the creation and division of surplus? We study this question in a  search environment in which an agent searches sequentially for a high-quality good and learns about the quality of sampled goods by repeatedly purchasing signals from profit-maximizing information brokers. Brokers design and price signals but can commit only to spot contracts. We characterize the equilibrium payoff set as a function of the market structure---the number of competing brokers. When search costs are low, market structure affects neither surplus generation nor its division. When costs are high, however, competition benefits the agent but reduces total surplus relative to monopoly. Methodologically, we extend repeated-games theory to stopping problems such as sequential search.

\end{abstract}

\noindent \textit{JEL Classifications: D83, D86, L15}\\
\noindent\textit{Keywords: search, information design, stopping games}
 \newpage
 \allowdisplaybreaks
\section{Introduction}
Agents in search markets aim to find high-quality goods quickly but face a key hurdle: limited information about the quality of available goods. However, in an increasingly data-driven economy, agents may acquire this information by contracting with brokers who gather, process, and sell data. Examples of such \emph{persuaded search} markets---where agents contract with ``upstream" information brokers to guide their decisions in a ``downstream" sequential search market---include: labor markets in which employers contract with pre-employment testing agencies to screen job candidates; dating markets in which individuals purchase background checks on potential dates; and used automobile markets where buyers rely on vehicle history reports to uncover hidden defects.\footnote{82\% of US private-sector firms use pre-employment testing services to screen candidates (\url{https://www.sparcstart.com/wp-content/uploads/2018/03/2017-CandE-Report.pdf}). In a 2023 survey of U.S. singles, 18\% of respondents say they have solicited a background check on their dates (\url{thrivingcenterofpsych.com/blog/dating-in-2023-here-is-where-most-singles-are-living-in-the-us/}). Similarly, according to an April 2024 survey for \textit{Experian}, 70\% of used car shoppers used a vehicle history report for their last automobile purchase. (\url{https://www.experian.com/content/dam/marketing/na/automotive/infrographic/adding-a-second-vhr-could-help-sales-infographic.pdf}). }

A primary driver of surplus in such agent–broker relationships is the broker's ability to facilitate the agent's search for goods of high match value in the downstream market. Empirical evidence supports the salience of product quality in search markets:  a majority of survey respondents, largely online shoppers, state that finding a desirable good is an important search motive \citep{coey2020discounts}. Indeed, in some search markets, downstream prices are either exogenously fixed or non-existent, making match value the sole channel for surplus creation. In labor markets, pre-employment tests typically affect hiring decisions rather than wages \citep{aut08}; in dating markets, the search goods are potential partners and thus command no price; and used car platforms like \textit{Carvana} strictly anchor prices to \textit{Kelley Blue Book} values, prohibiting price negotiations by design.\footnote{\url{www.prnewswire.com/news-releases/carvana-integrates-kelley-blue-book-suggested-retail-values-into-online-vehicle-inventory-300023390.html}}

To fix a motivating example, consider an agent searching for a used car. Increasingly, search takes place through online marketplaces such as \emph{Carvana},\footnote{\emph{Carvana} was the second-largest U.S. retailer of used cars in 2021 (\url{www.caranddriver.com/features/a60257928/carvana-recovery-progresss}). } where direct inspection of cars is infeasible. Thus, the searching agent may turn to brokers like \emph{CARFAX} and \emph{AutoCheck}, who aggregate and process data from law enforcement, insurance companies, and auto-repair centers into vehicle history reports that they often sell on a per-car basis via spot contracts.  As the reports become more expensive---which reduces the agent's share of the surplus---or less informative---which reduces surplus by diminishing the agent’s ability to distinguish high- from low-quality cars---the agent's incentive to search is also reduced. Therefore, the brokers' design and pricing of information jointly determine both the amount of surplus created in the downstream search market and how that surplus is divided between the agent and the brokers upstream.

The dependence of the downstream search on the upstream contracting environment raises natural questions: How does competition, or lack thereof, among information brokers shape market efficiency and the welfare of the searching agent? What tradeoffs does a regulator face  when deciding whether to break
up a monopolist broker? The policy question is not merely hypothetical. \textit{CARFAX} has faced antitrust litigation alleging monopoly power in vehicle-history certification; the dispute turned partly on whether its contractual practices harmed competition or promoted reliable information provision.\footnote{We briefly discuss this litigation  (\emph{Maxon Hyundai Mazda et al. v Carfax, Inc}) and our model's relationship to it, in the Conclusion.}

To address these questions, we embed a repeated contracting problem between an agent and a finite number of information brokers into a stopping game. The agent (``he") has unit demand for a good and attempts to fulfill this demand in a downstream market by engaging in sequential search, following the canonical framework of \citet{mcc70}. In each period, he incurs a search flow cost and samples one good. However, unlike the standard model, he does not observe the good’s quality directly. Instead, he can acquire a signal about the sampled good's quality from one of the brokers in the upstream information market. Each broker (``she") flexibly designs a signal and sells it through a take-it-or-leave-it spot contract. The agent buys one signal, if any, observes the corresponding signal realization, and then decides whether to stop and match with the sampled good or to continue searching in the next period.

Our model is designed as a minimal departure from the canonical \citet{mcc70} framework. The search environment, including the prior distribution of quality and the agent's search cost, is common knowledge. Thus, the payoff-relevant uncertainty at the contracting stage is object-level: before a signal is produced, neither the agent nor the brokers know the quality of the currently sampled good. We also abstract from hidden agent heterogeneity. The agent has no privately known valuation type or search cost. This abstraction is natural in data-rich digital markets in which many payoff-relevant agent characteristics are observable, for example through cookies, transaction histories, platform records, or other digital traces, and can be used to tailor the terms of the contract. Equivalently, the model studies the contracting problem for a fixed, commonly observed agent type. This allows us to isolate the dynamic distortion generated by repeatedly selling information, rather than adverse-selection distortions generated by hidden agent types.

Repeated contracting within a search framework introduces a fundamental tension. Brokers generate revenue only as long as the agent continues to search, and thus have an incentive to prolong the search process by distorting the informativeness of their signals. On the other hand, the agent is willing to purchase a signal only if it is sufficiently instrumental in distinguishing high- from low-quality goods. As we show, the resolution of this tension depends on the \emph{market structure}---the number of competing brokers.

To analyze the impact of market structure, we first consider the benchmark of history-independent (stationary) strategies. We show that the agent's payoff is higher in a competitive setting than in a monopolistic one, echoing the classic result that Bertrand competition among sellers benefits the buyer. However, we find that the market structure has no impact on total surplus, with both monopoly and competition leading to full efficiency. This benchmark invites a compelling, albeit misleading, intuition: competition serves only to increase the agent's share of the surplus without affecting efficiency.

Restricting attention to history-independent strategies, however, shuts down important regulatory concerns such as (tacit) collusion or loyalty-pricing practices, which can only be sustained through history-dependent strategies. Characterizing the full set of equilibria overturns the intuition that competition serves only to shift surplus to the searching agent. We identify a crucial threshold in the agent's search flow cost that divides the comparison of monopoly and competition into two distinct regimes. When the search cost falls below the threshold, monopoly and competition yield the same range of total surpluses \emph{and} agent payoffs. Above this threshold, however, the highest total surplus attainable under competition coincides with the lowest total surplus under monopoly, while the lowest agent payoff under competition is no smaller than the highest agent payoff under monopoly.  Moreover, for some cost values above the threshold, nearly all competitive equilibria deliver strictly higher payoffs to the agent but strictly lower total surplus than does monopoly. 

In short, our results establish that $(i)$ for low search costs, the creation and division of surplus is independent of market structure, while $(ii)$ for high search costs, competition benefits the agent but reduces total welfare. Our model of an information-rich market without hidden agent heterogeneity invites an analogous comparison of a first-degree price-discriminating monopolist to a Bertrand market, in which competition generates none of the familiar extensive-margin welfare gains from expanding market participation. Thus, one might expect that competition in our setting would merely transfer surplus from brokers to the agent. Our contribution is to show that this intuition misses a distinct intensive-margin channel: in our setting, information is an intermediate good that governs the downstream search process. Its supply and pricing create novel strategic distortions that inextricably link surplus creation in the downstream search market to its division in the upstream information market. Whenever market structure is irrelevant for surplus creation, it is also irrelevant for surplus division; and whenever competition improves the agent's payoff, it has scope to be less efficient than monopoly.

These results also speak to antitrust remedies. A breakup of an information monopolist does not necessarily replace monopoly pricing with Bertrand pricing. Instead, repeated-contracting considerations turn a familiar regulatory question---should a monopolist be broken up?---into a tradeoff between surplus division and efficiency.

At the heart of our main results is the observation that the value of a signal to the agent depends on two distinct factors. The first is standard in any decision problem: a signal is valuable insofar as it helps the agent distinguish goods worth matching with from those that justify continued search. Since brokers design signals each period, they fully control this factor. We show that brokers do not require complex signal structures; binary (``pass-fail'') signals are sufficient to characterize all equilibrium outcomes. Moreover,  these outcomes can be supported by brokers offering identical signals, so the agent derives no marginal value from purchasing multiple signals in a given period. The second factor is the agent’s continuation payoff. When the agent anticipates high future payoffs, he is less likely to stop searching today and thus has less need for information. This factor is unique to stopping games: because the agent has unit demand for the search good, he prefers to continue searching rather than match today and forgo a high continuation payoff. Consequently, a signal's value in a given period declines as the agent’s continuation payoff rises. He can thus credibly reject a contract that offers him a low payoff today if he expects to get a sufficiently high continuation payoff in future periods. This intertemporal dependence of a signal's value and continuation payoffs is the endogenous source of  bargaining power for the agent.\footnote{\cite{dana2011productquality} study a related \emph{repeated sales game} in which, in every period, a buyer purchases a good from sellers who choose both its price and quality. Because their setting is an infinitely repeated game rather than a stopping game, the consumption value of a good in the current period is independent of continuation payoffs. They show that the buyer has bargaining power under monopoly if and only if he has it under competition. By contrast, in our persuaded search setting, the agent’s value of a signal in a given period depends directly on his continuation payoff. As our results will show, this distinction implies that the agent in our model has a strictly stronger bargaining position under competition than under monopoly---he has bargaining power under competition whenever he has it under a monopoly, but the converse is not true.} 

How does the agent's endogenous bargaining power differ by market structure? Consider first the monopolistic case with a single broker. The agent is naturally in a weak bargaining position because he faces a take-it-or-leave-it offer from his only source of information. He may attempt to gain leverage by threatening to reject unfavorable spot contracts in anticipation of high continuation values. In a stationary equilibrium, however, continuation values are fixed, rendering such threats non-credible. As we show, this lack of bargaining power results in a unique stationary equilibrium which features full surplus extraction---an efficient outcome that delivers the agent his lowest possible payoff.

A richer set of outcomes arises under non-stationary equilibria, precisely because continuation payoffs can now be conditioned on the history of play. This flexibility can render the agent’s threats credible, thereby endogenously granting him bargaining power. Importantly, contracts offered by the broker are multidimensional, specifying both a price and a signal structure. Any bargaining power that sustains an efficient equilibrium yielding the agent a given expected payoff (e.g., the broker selling an informative signal at a high price) can equally sustain an inefficient one yielding him the same payoff (e.g., the broker selling a less informative signal at a lower price). As a result, once the agent acquires bargaining power, the equilibrium set expands to include outcomes that are both more favorable to him and less efficient.

Next, consider a competitive market structure with multiple brokers. The unique stationary equilibrium outcome here features Bertrand competition: in each period, at least two brokers offer a fully informative signal for free,\footnote{Or, alternatively, a (binary) signal that induces the same search behavior, which we show always exists.} allowing the agent to both generate and consume the maximum surplus possible in the search market. Similar to the stationary equilibrium under a monopoly, this outcome is efficient, but unlike the monopolistic case, the agent now receives the entire surplus. When we move beyond stationary strategies in competitive settings, brokers may tacitly collude to recover some of the surplus they would otherwise compete away.\footnote{We do not model an official communication stage between brokers. Nonetheless, ``tacit collusion'' arises when brokers coordinate on/select a broker-preferred equilibrium. Alternatively, one can envision such coordination arising through explicit communication channels.}  Yet the agent can blunt such collusion by once again threatening to reject contracts that yield him too little surplus. Relative to monopoly, however, competition yields the agent an advantage: rejecting one broker’s contract does not preclude obtaining information in the current period, since the agent can instead accept another broker’s offer. In this way, competition provides the agent with an ``inside option'' that amplifies his endogenous bargaining power. As a result, competition more readily sustains equilibria that yield high agent payoffs as well as inefficient outcomes. 

Formally, we show that the agent's endogenous bargaining power is closely tied to his search flow cost. This is because the largest feasible agent's continuation payoff increases as the search flow cost decreases, making his threats more credible. Under monopoly, there exists a threshold search cost below which threats are credible, supporting any surplus split in equilibrium. Above the threshold, threats lose credibility, and the broker’s preferred outcome—full surplus extraction—emerges as the unique equilibrium outcome. Thus, the equilibrium payoff set in the monopolistic setting exhibits a \emph{bang-bang} structure: depending on the search cost, either a folk-theorem or a unique equilibrium outcome obtains. With multiple brokers, the agent’s ``inside option" makes  threats credible for a broader range of search costs. Hence, a folk-theorem result obtains under competition whenever it does in a monopolistic setting. Moreover, even for search cost parameters where full surplus extraction is the unique equilibrium outcome in a monopolistic setting, competition can allow for infinitely many equilibrium outcomes, nearly all of which are less efficient than the unique equilibrium outcome of a monopoly. This result holds regardless of the number of competing brokers, implying that the set of all equilibrium outcomes does not shrink to the familiar Bertrand outcome in the limit as competition increases.

We take an agnostic approach when deriving our welfare comparisons across market structures in that we do not rely on a specific equilibrium selection rule.\footnote{Indeed, selecting specific equilibria may lead to conclusions that conflict or do not generalize across the entire set of equilibria. For example, if one restricts attention to the ``best case" (i.e., efficiency-maximizing) equilibrium outcome per market structure, then both monopoly and competition would yield the same total surplus but the latter would yield a higher agent payoff. In contrast, if one selects the ``worst-case" (i.e., efficiency-minimizing) equilibrium outcome, then monopoly dominates competition in terms of efficiency but yields the agent his lowest possible payoff. Neither of these selection rules therefore captures the full nuance of welfare implications across the set of all equilibria.} Rather, we consider the \emph{entire} set of equilibrium payoffs. To our knowledge, this paper is the first to do so for a stopping game with rich action spaces. The classical theory of repeated games primarily focuses on  infinitely repeated, simultaneous-move stage games in which a player's payoff in the  repeated game equals the discounted sum of her stage-game payoffs. However, many of the tools for characterizing  equilibrium payoffs and folk-theorem results do not readily apply to our persuaded search game, which diverges from the classical theory in three key ways: First, stopping games are not repeated games because they lack well-defined stage games.\footnote{As we elaborate in \Cref{sec:single-agent}, this distinction implies that expected payoffs in a stopping game are not additively separable into a flow payoff and a continuation value.} This implies that a player's equilibrium payoff cannot be bounded by the usual stage-game minimax payoff. Second, we study a game in which players do not move simultaneously in each period. As a result, equilibrium strategies must satisfy sequential rationality within each period. Finally, as is typical among search models, our game features a search cost rather than discounting.\footnote{The mechanism behind our main results---the agent's endogenous bargaining power derived from the interplay between a signal's value and continuation payoffs, and the inside option arising from the presence of competing brokers---naturally extends to the case where the agent's search friction is driven by discounting rather than a flow cost.} Thus, the well-known finding that the \emph{one-shot deviation principle} implies sequential rationality does not immediately transfer to our setting. To contend with these differences, we adapt the set-valued dynamic programming approach of  \cite*{abr90} (henceforth, ``APS") to our setting, and further reduce their set-valued fixed-point operator to a single-dimensional fixed-point problem, yielding a tractable characterization of equilibrium payoff sets for stopping games. Furthermore, we complement our APS characterization by constructing a family of reward and punishment schemes that yield all equilibrium outcomes, providing one (of possibly many) equilibrium strategy profiles themselves.

\subsection{Related Literature}

 Our paper relates to a small body of work exploring the role of competitive information provision in search markets. \cite{boa19} and \cite{he23} study how competition among information designers affects outcomes in a random search market, and \cite{au2023attraction, au24} study the case of a directed search market. These papers consider an agent who engages bilaterally with a sequence of designers as in \cite{dia71}, making each designer-agent relationship short-lived and effectively monopolistic. Consequently, equilibria feature a ``Diamond Paradox,"  where the agent terminates his search in the first period and the initial designer captures the full surplus, despite the presence of competition. In contrast, we consider an agent who engages simultaneously with all brokers in each period, thereby eliminating the local monopoly power of any one broker.\footnote{Our model does not involve the agent searching for brokers. Given the dynamic nature of search problems, a natural way to include this feature in a richer model would be to allow the agent to pay a cost to add an additional broker to his ``consideration set'' in all future periods as in \cite{Fershtman_and_pavan}. The agent's ability to simultaneously engage with multiple brokers, albeit at a cost, leads to similar strategic considerations as in our model. In particular, it would not lead to the familiar Diamond Paradox: a broker who currently is the only one in the agent's consideration set can be disciplined to offer some of the surplus from search to the agent by the threat of the agent adding a rival broker to his consideration set and triggering a Bertrand price war in future periods.} This leads to a richer set of equilibrium outcomes, ranging from Bertrand-like competition to collusion among the brokers. 

A larger literature examines the role of information on search market outcomes when a searching agent contracts with a single information designer.  \cite{sato23} characterize the optimal signal for a designer who influences the search behavior of an agent engaged in ordered search \`a la \cite{wei79}. Similarly, \cite{do22} and \cite{hu22} characterize the optimal signal for a designer with intrinsic preferences over the outcomes of a large random search market. \cite{mek23} do the same for a profit-maximizing information broker; they find a unique stationary equilibrium outcome of a persuaded search game between a single broker and a searching agent. Our work refines theirs by showing that in monopoly market structures, their unique stationary equilibrium outcome is in fact the unique equilibrium outcome for high search costs whereas a folk-theorem result obtains for low search costs. 

A complementary set of papers  \citep[see, e.g.,][]{bonatti2020consumer,kayaRoy2024repeated} enriches the downstream physical-good market while simplifying the information environment. The closest comparison along these lines is \citet{dilme2025repeated}, who studies repeated trade between a privately informed buyer and a sequence of price-setting sellers who observe noisy records of the buyer's past purchases. Given the applied setting motivating our analysis, we instead abstract away downstream pricing and private values in order to isolate how competition among upstream information brokers affects information provision, surplus creation, and surplus division in search.

Beyond search markets, the persuasion literature also studies the impact of competition in  both static settings \citep{kamenica2017competition, gen17, au20} and dynamic ones \citep{li18, li21, wu23}. These papers consider the canonical, non-transferable utility persuasion setting in which ``senders" design information so as to influence the action taken by a ``receiver." Because monetary payments are not allowed, receiver's welfare is determined by the amount of information senders provide, and the literature characterizes conditions under which competition increases information provision \citep{kamenica2019bayesian}. In our model, however, welfare depends not just on the level of information provision but also on the monetary transfers from the agent to the brokers.\footnote{Our paper also has more remote connections to work on information acquisition in financial markets. For example, \citet{Wittwer2026} studies financial intermediaries who acquire information about asset returns before investing on behalf of clients, and shows that competition among intermediaries can amplify moral hazard and reduce market efficiency. While the settings and mechanisms differ substantially, both her paper and ours share the finding that competition need not improve welfare in markets where information plays a central role.}

Finally, our work also relates to the literature on the design and pricing of information \citep{adm86, adm90, es07, ber15, ber18, bon24, rod24}. However, all of these papers consider the sale of information by a monopolist in a static market. As far as we are aware, our paper is the first to study the competitive sale of information in a dynamic market.

\section{Model}\label{model}

\subsection{Search environment}\label{sec:search environment}
A risk-neutral agent faces an optimal stopping problem: He has a unit demand, and he sequentially searches without recall for a good to consume. In each period, the agent incurs a search cost $k>0$ and samples a good. The good's quality (or match value), denoted by $\theta\in \Theta=[0,1]$, is identically and independently distributed according to an absolutely continuous distribution $F$ with full support on $\Theta$ and a prior mean of $m^\varnothing \coloneqq \int_\Theta \theta dF(\theta)$. We assume  $m^\varnothing >k$ so that the agent has an incentive to search.

If the agent searches without observing the quality of sampled goods, then he stops his search after the first period, yielding an expected payoff of $\underline u_k\coloneqq m^\varnothing-k$. In contrast, if the agent observes the quality of each sampled good perfectly, his payoff is given by $\bar u_k>\underline u_k$, where $\bar u_k$ is characterized as the unique value of $u$ that solves
\[
k=\int^1_{u}(\theta-u)dF(\theta)
\]
as shown in \cite{mcc70}. 

\subsection{Information}
We depart from the classic theory of sequential search by assuming that the agent cannot directly observe the quality of sampled goods. Instead, he acquires information by repeatedly contracting with $n\geq 1$ risk-neutral brokers, with $N\coloneqq \{1,\ldots, n\}$ denoting the set of brokers. Each broker designs and sells an informative signal about the quality of a sampled good. Because all parties are risk neutral, the only payoff-relevant information is the expected quality of a  sampled good. We therefore model a signal simply as a  posterior-mean distribution $G:\Theta\to[0,1]$. For example, a fully informative signal is given by the distribution  $F$ while an uninformative signal is given by the distribution  $G^\varnothing$ that is degenerate on the prior mean. From \cite{bla53}, the set of feasible posterior-mean distributions is equivalent to the set of mean-preserving contractions of $F$, which we denote by $\mathcal{G}(F)$.

\subsection{Timing}
At the beginning of each period $t=1,2,\ldots$ in which the agent is engaged in search, he incurs a search cost of $k$ and samples a good of quality $\theta_t\sim F$. Neither the agent nor the brokers observe $\theta_t$. Each broker $i\in N$ makes a take-it-or-leave-it (TIOLI) offer $o_{it}\coloneqq (p_{it}, G_{it})$ consisting of a price $p_{it}\in\mathbb{R}_+$ and a posterior-mean distribution $G_{it}\in \mathcal{G}(F)$. Let $O_t\coloneqq (o_{it})_{i\in N}$ denote the profile of period-$t$ offers, and let $O_{-it}\coloneqq (o_{jt})_{j\in N\backslash\{i\}}$. In addition to the offers made by the brokers, the agent has access to a null offer $o_{\emptyset t}\coloneqq (p_{\emptyset t}, G_{\emptyset t})$, with $p_{\emptyset t}=0$ and $G_{\emptyset t}=G^\varnothing$ for all $t$.

Given the profile of offers $O_t$, the agent purchases either from  one of the brokers in $N$ or takes the null offer. We denote the agent's purchase decision as $w_t\in W\coloneqq N\cup\{\emptyset\}$. Given  $w_t$, a realization $m_t\sim G_{w_t t}$ is drawn and publicly observed. Finally, the agent decides whether to consume the sampled good and stop searching $(d_t=1)$, in which case the game ends, or to continue searching $(d_t=0)$, in which case the game continues on to $t+1$. 

Let $\mathcal{O}\coloneqq \mathbb{R}_+\times \mathcal{G}(F)$ be the set of all TIOLI offers. Given a profile of offers $O_{t}\in \mathcal{O}^{n}$, purchase decision $w_t\in W$, and stopping decision $d_t\in \{0,1\}$, the agent's ex-post period-$t$ payoff is  $d_t\theta_t-k-p_{w_t t}$, and Broker $i$'s ex-post period-$t$ payoff is  $p_{it}\mathbbm{1}_{[w_t=i]}$.

\subsection{Discussion of modeling assumptions}
Before we proceed with our analysis, let us remark on some of our modeling choices. Our model implicitly assumes that: $(a)$ brokers lack long-term commitment power and thus sell information via spot rather than long-term contracts; $(b)$ the marginal cost of providing more informative signals in the upstream market is  negligible; and $(c)$ the information the agent acquires in the upstream market does not alter the ex-post match value of a downstream good---for example, by affecting its price. While these assumptions are not suitable for every setting, they are  satisfied in the motivating examples we discuss in the introduction.\footnote{For example, \emph{CARFAX} sells vehicle history reports on a per-car basis, and \emph{ProfileXT}---a pre-employment testing agency---charges per candidate screened. Moreover, pre-employment testing agencies typically provide employers with coarse ``pass/fail" reports despite having more granular data on job candidates' test performance, suggesting that the marginal cost of a more informative test is negligible \citep{aut08}.}

Additionally, we assume that the agent acquires information from at most one broker in a given period. An alternative assumption would be to let the agent acquire information from any subset of brokers. This would allow the agent to combine signals in a similar spirit to \cite{kamenica2017competition} and  learn more than what is possible by observing the signal of any one broker. Finally, we assume that signal realizations are publicly observable. An alternative assumption would be that signal realizations are privately observed by the agent. This alternative would necessitate keeping track of a public history for the brokers, a private history for the agent, and each broker's beliefs over the agent's private history. However, as will become clear in \Cref{prelim}, neither of these alternative assumptions changes our main results.

\subsection{Histories, strategies, payoffs, and equilibrium}

A history at the beginning of period $t$ (for which the agent is still engaged in search) contains all past TIOLI offers, purchase decisions, and signal realizations. Let $h^{t}\coloneqq (O_\tau, w_\tau, m_\tau)_{\tau<t}$ denote an arbitrary period-$t$ history, with $h^1=\emptyset$.\footnote{We do not include $\{d_\tau\}_{\tau<t}$ in the history as it is redundant; the game reaches Period $t$ if and only if $d_\tau=0$ for all $\tau<t$.} We denote the set of all period-$t$ histories by $H^{t}$, and the set of all histories by $H\coloneqq \cup_{t\geq 1}H^t$.

A behavioral strategy for Broker $i\in N$ is given by a universally measurable mapping $\sigma_{i}:  H\to \Delta(\mathcal{O})$, which maps each history $h\in H$ into a distribution over the set of TIOLI offers $\mathcal{O}$.\footnote{Formally, let $\Delta(\Theta)$ denote the space of Borel probability measures on $\Theta$, endowed with the weak-* topology. This makes $\Delta(\Theta)$ a Polish space. Since $\mathcal{G}(F)$ is a closed subset of $\Delta(\Theta)$, it too is a Polish space. We endow all Polish spaces with their Borel sigma-algebras, and all product spaces with their product sigma-algebras. Thus, $\mathcal{O}^n\times W\times \Theta$ and $H^t\equiv(\mathcal{O}^n\times W\times \Theta)^{t-1}$ for all $t=2,3,\ldots$ are standard Borel spaces. Finally, $H$ is also a Borel space as it is the countable union of finite-product Borel spaces. A function is universally measurable if it is Borel-measurable under the completion of every Borel measure.} Let $\Sigma_P$ denote the set of all such strategies (all the brokers have the same strategy space). 

A behavioral strategy for the agent is given by a pair of universally measurable mappings $\sigma_{A}\coloneqq (\sigma^\mathcal{w}_A, \sigma^\mathcal{d}_A)$, where the first mapping $\sigma^\mathcal{w}_{A}: H\times \mathcal{O}^{n}\to \Delta(W)$ captures the purchase decision, and the second mapping $\sigma^\mathcal{d}_{A}: H\times \mathcal{O}^{n}\times W\times \Theta\to [0,1]$ captures the search decision. Specifically, following some history $h\in H$ and a profile of offers $O\in\mathcal{O}^n$, the agent purchases a signal from $w\in W$ with probability $\sigma_A^\mathcal{w}(w|h, O)$ and continues his search following some signal realization $m\in \Theta$ with probability $\sigma_A^\mathcal{d}(h, O, w, m)$. Let $\Sigma_A$ denote the set of all such strategies for the agent.

Each strategy profile  $\sigma\coloneqq \big((\sigma_i)_{i\in N}, \sigma_A\big)\in \Sigma_P^n\times \Sigma_A$ induces a distribution over offers, purchase decisions, and ultimately, a stopping time $\tilde T$ which is the (random) time when the agent ends his search. In all equilibrium constructions below, the stopping time is finite almost surely and the displayed expectations are finite. Thus, given a strategy profile $\sigma$ and a history $h^{t}$, Broker $i$'s expected payoff in Period $t$ is given by
\[
\label{eq:broker-payoff}
\tag{1}
\mathbf V_{it}(\sigma|h^{t})=\mathbb{E}_{\sigma(h^t)}\left[\sum_{\tau=t}^{\tilde T}p_{i\tau}\mathbbm{1}_{[w_\tau=i]}\right],
\]
where the expectation is taken with respect to the distribution induced by $\sigma$ starting from history $h^t$. Similarly, the agent's expected payoff is given by 
\[
\label{eq:agent-payoff}
\tag{2}
\mathbf U_t(\sigma|h^t)=\mathbb{E}_{\sigma(h^t)}\left[\theta_{\tilde T} -\sum_{\tau=t}^{\tilde T}\Big(p_{w_\tau \tau}+k\Big)\right].
\]
When considering payoffs in period $t=1$, we simply write $\mathbf V_i(\sigma)$ and $ \mathbf U(\sigma)$ instead of $\mathbf V_{i1}(\sigma|h^1)$ and $\mathbf U_{1}(\sigma|h^1)$, respectively. 

A strategy profile $\sigma$ constitutes an \emph{equilibrium} if it is sequentially rational, i.e., for all  $t\geq 1$ and all $h^t\in H$,
\begin{enumerate}[$(\alph*)$]
 \item $\mathbf V_{it}(\sigma|h^t)\geq \mathbf V_{it}\big(\tilde \sigma_{i},(\sigma_j)_{j\in N\backslash\{i\}}, \sigma_A|h^t)$ for all $\tilde \sigma_{i}\in \Sigma_P$ and for all  $i\in N$,  and 
    \item $\mathbf U_t(\sigma|h^t)\geq \mathbf U_{t}((\sigma_i)_{i\in N}, \tilde \sigma_A|h^t)$ for all $\tilde \sigma_A\in \Sigma_A$.
    \end{enumerate}

Let $\mathcal{E}\subseteq\mathbb{R}^{n+1}_+$ be the set of equilibrium payoff profiles, that is, $y\in \mathcal{E}$ if and only if there exists an equilibrium $\sigma$ such that $y_i=\mathbf V_i(\sigma)$ for each $i\in N$ and  $y_{n+1}=\mathbf U(\sigma)$. In what follows, it will be useful to parameterize the equilibrium payoff set by the number of competing brokers and the search cost. Thus, we write $\mathcal{E}(n,k)$ to denote the equilibrium payoff set of a game between $n$ brokers and an agent who has a search cost of $k$.

\section{Main Results}\label{main results}
In this section, we present the paper's three main results. After presenting these main results, we discuss the organization of their proofs.

Our first main result compares monopolistic and competitive settings based on the total surplus that can be generated in an equilibrium of each setting. Given a set $X$, we write $2^X$ to denote the power set of $X$. Define the operator $\mathcal{W}:\cup_{n\geq 1} 2^{\mathbb{R}^{n+1}_+}\to 2^{\mathbb{R}_+}$ such that, given $n\geq 1$ and some non-empty set $E\subseteq \mathbb{R}_+^{n+1}$, the corresponding total surplus set is given by
\[
\mathcal{W}(E)\coloneqq\left\{\norm{y}: y\in E\right\}.
\]
Most relevantly for our analysis, $\mathcal{W}(\mathcal{E}(n,k))$ denotes the set of total surplus levels attained in equilibrium with market structure $(n,k)$.

\begin{theorem}\label{thm:ewso}
There exists a unique $k^*\in (0,m^\varnothing)$ such that for all $n\geq 1$, 
\begin{enumerate}[$(\roman*)$]
    \item $\mathcal{W}(\mathcal{E}(n,k))=[\underline u_k, \bar u_k]$ for all $k\leq k^*$, and
    \item  $\sup\hspace{.2em}\mathcal{W}(\mathcal{E}(n,k))=\inf\hspace{.2em} \mathcal{W}(\mathcal{E}(1,k))=\bar u_k$ for all $k> k^*$.
\end{enumerate}
\end{theorem}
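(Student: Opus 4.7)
The plan is to characterize $\mathcal{E}(n,k)$ through the APS-type one-dimensional fixed-point operator promised in the introduction, and extract from it the structure of $\mathcal{W}(\mathcal{E}(n,k))$. I would organize the argument into three steps, with a final step addressing how the threshold governs both market structures.

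First, I would establish the universal inclusion $\mathcal{W}(\mathcal{E}(n,k))\subseteq[\underline u_k,\bar u_k]$ for every $n,k$. The upper bound $\bar u_k$ is the first-best of the underlying McCall stopping problem: because broker signal provision is costless, the joint payoff equals the expected match value minus accumulated search costs, which is maximized by optimal stopping under perfect information. The lower bound holds because the agent can always accept $o_{\varnothing t}$ and stop after one period, guaranteeing $\underline u_k$, while broker payoffs are nonnegative. Achievability of $\bar u_k$ is immediate in either structure: under monopoly, a stationary equilibrium in which the broker offers the fully informative signal $F$ at a price making the agent indifferent to his outside option yields joint surplus $\bar u_k$; under competition, a Bertrand-style stationary equilibrium does the same. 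This pins down $\sup\mathcal{W}(\mathcal{E}(n,k))=\bar u_k$ universally, resolving the upper endpoint of (i) and the sup statement of (ii).

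Second, I would isolate $k^*$ from the monopoly problem. Using self-generation, let $\bar U(1,k)$ denote the supremum of the agent's equilibrium payoff under monopoly; I would argue it satisfies a scalar fixed-point equation $\bar U=\max\{\underline u_k,\Phi_k(\bar U)\}$, where $\Phi_k(U)$ is the most an agent-preferred equilibrium can deliver when broker deviations are punished by reverting to continuation value $U$. The key monotonicity is that the value of information $v(F,U)-U=\int_U^1(\theta-U)dF(\theta)-k$ is strictly decreasing in $U$ and vanishes at $U=\bar u_k$ by definition of $\bar u_k$. Combined with the comparative statics of $\underline u_k$ and $\bar u_k$ in $k$, this delivers a unique $k^*\in(0,m^\varnothing)$ at which $\Phi_k$ ceases to admit a solution strictly above $\underline u_k$. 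For $k>k^*$, the only fixed point is $\underline u_k$, and the monopolist's best response at every on-path history is to offer $F$ at a price extracting the entire marginal value above the agent's continuation; this forces total surplus to equal $\bar u_k$ at every history, giving $\inf\mathcal{W}(\mathcal{E}(1,k))=\bar u_k$ and finishing (ii).

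Third, to establish (i), I would construct equilibria achieving every $S\in[\underline u_k,\bar u_k]$ when $k\leq k^*$. The point $\bar u_k$ is already covered. To attain $\underline u_k$, I build a non-stationary equilibrium in which brokers offer uninformative (or prohibitively priced) signals on path, the agent takes the null offer and stops in period one for payoff $\underline u_k$, and any broker deviation triggers reversion to the agent-optimal continuation equilibrium with payoff near $\bar u_k$; by the second step, such a punishment is credible exactly when $k\leq k^*$, so the agent rationally rejects any attempted positive-price deviation because $v(G,\bar u_k)-\bar u_k\leq 0$ for every $G\in\mathcal{G}(F)$. Intermediate surpluses are obtained by concatenating on-path phases of the two extreme equilibria, or equivalently by public randomization, using that the equilibrium payoff set is closed under such operations in the APS framework. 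For $n\geq 2$, the same constructions carry through: additional brokers only enlarge the set of credible punishments by providing the agent an ``inside option,'' so the monopoly folk theorem implies the competitive one.

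The main obstacle will be showing that a \emph{single} threshold $k^*$ simultaneously governs monopoly uniqueness and the cross-$n$ folk theorem. This requires verifying that the scalar operator $\Phi_k$ is monotone and continuous in $k$ so that $k^*$ is well defined and unique in $(0,m^\varnothing)$, and that the set of credible competitive punishments strictly contains the monopoly one for every $k$, so that the same cutoff serving as the collapse point under monopoly suffices to support the folk theorem for all $n\geq 1$. Nailing down this equivalence, as opposed to proving the two regimes hold with \emph{some} threshold each, is the technical heart of the argument.
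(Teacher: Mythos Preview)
Your overall architecture---feasibility bounds, a one-dimensional fixed point pinning down $k^*$ in the monopoly game, then a folk-theorem construction for $k\le k^*$ that extends to all $n$---matches the paper's route. But the core step~3 construction has a genuine gap that would make the argument fail as written.

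The condition you invoke, ``$v(G,\bar u_k)-\bar u_k\le 0$ for every $G\in\mathcal G(F)$,'' holds for \emph{every} $k\in(0,m^\varnothing)$: in your own notation $v(F,U)-U=c_F(U)-k$, which vanishes at $U=\bar u_k$ by definition and is weakly smaller for coarser $G$. So this inequality cannot be what singles out $k^*$. More concretely, if after a broker deviation the agent's continuation is $\bar u_k$ \emph{regardless of whether he accepts or rejects}, then the agent compares $c_G(\bar u_k)+\bar u_k-p$ (accept) to $c_{G^\varnothing}(\bar u_k)+\bar u_k$ (null). For $k$ small enough that $\bar u_k\ge m^\varnothing$, the right side is just $\bar u_k$, so the broker can offer $(p,F)$ with any $p<k=c_F(\bar u_k)$ and be accepted, earning strictly positive profit. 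Your ``autarky'' equilibrium then unravels. The paper's fix is an \emph{asymmetric} continuation: if the agent \emph{accepts} the deviating broker's offer, continuation falls to $\underline u_k$; if he \emph{rejects}, continuation is $\bar u_k$. The agent then rejects any positive-price deviation iff $c_F(\underline u_k)+\underline u_k\le c_{G^\varnothing}(\bar u_k)+\bar u_k$, i.e.\ $\Phi(k)\le 0$, and $k^*$ is the unique root of $\Phi$. This is the threshold you need in step~2, and the asymmetric punishment is what makes step~3 work.

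Two smaller points. First, the model has no public randomization device, and ``concatenating phases'' needs care because payoffs here are not additively separable across periods; the paper instead generates every intermediate surplus $u\in[\underline u_k,\bar u_k]$ \emph{directly} via a pass-fail signal $G^{\mathbf x_k(u)}$ and a matching price (Lemmas~3--4), which also delivers any split of that surplus among players without randomization. Second, your step~2 is framed through the agent's maximal payoff $\bar U$, whereas the paper runs the fixed point on the broker's minimax $\nu_k$; these are dual ($\bar U=\bar u_k-\nu_k$), but the paper's formulation makes the ``broker has no profitable deviation'' step cleaner because $\nu_k$ is literally the payoff the broker can guarantee against the harshest sequentially rational response.
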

\medskip

In words, this theorem states that for small enough search costs, any feasible total surplus of the search market---that is, any surplus $u\in[\underline u_k,\bar u_k]$---can arise in equilibrium, regardless of the number of brokers.\footnote{We provide a characterization of the feasible surplus in \Cref{sec:feasible}.} Hence, the equilibrium total surplus set is equivalent under monopolistic and competitive settings. In contrast, for large enough search costs, the \emph{highest} equilibrium total surplus that can be generated in a competitive setting equals the \emph{lowest} equilibrium total surplus that can be generated in a monopolistic setting, and both coincide with the highest feasible total surplus of the search market.

 Of course, with transferable utility, higher total surplus in one equilibrium than another does not imply that the agent is better off in the former. Our second main result compares  monopolistic and competitive settings based on the agent's payoff. Define the operator $\mathcal{A}:\cup_{n\geq 1} 2^{\mathbb{R}^{n+1}_+}\to 2^{\mathbb{R}_+}$ such that, given $n\geq 1$ and some non-empty set $E\subseteq \mathbb{R}_+^{n+1}$, the corresponding agent's payoff set is given by 
 \[
 \mathcal{A}(E)\coloneqq \{y_{n+1}:y\in E\},
 \]
where $\mathcal{A}(\mathcal{E}(n,k))$ denotes the set of agent payoffs attained in equilibrium with market structure $(n,k)$.
 
\begin{theorem}\label{thm:awso}
Let $k^*$ be the threshold in \autoref{thm:ewso}. For all $n\geq 1$, 
\begin{enumerate}[$(\roman*)$]
    \item $\mathcal{A}(\mathcal{E}(n,k))=[\underline u_k, \bar u_k]$ for all $k\leq k^*$, and
    \item  $\inf \hspace{.2em}\mathcal{A}(\mathcal{E}(n,k))\geq \sup \hspace{.2em}\mathcal{A}(\mathcal{E}(1,k))=\underline u_k$ for all $k> k^*$.
\end{enumerate}
\end{theorem}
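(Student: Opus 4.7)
The plan is to combine \autoref{thm:ewso} with a short individual-rationality (IR) bound on the agent and then pin down exactly when that bound binds. First I would establish the containment $\mathcal{A}(\mathcal{E}(n,k))\subseteq [\underline u_k,\bar u_k]$ for every $n\geq 1$ and every $k$. The upper bound is immediate from \autoref{thm:ewso}, since the agent's payoff cannot exceed total surplus and total surplus is at most $\bar u_k$. The lower bound is the IR argument: the agent can always purchase the null offer in every period and stop in Period 1, which guarantees $m^\varnothing - k = \underline u_k$, so sequential rationality gives $\mathbf U(\sigma)\geq \underline u_k$ in every equilibrium $\sigma$. This already delivers the easy half of $(ii)$: $\inf\mathcal{A}(\mathcal{E}(n,k))\geq \underline u_k$ for all $n$.

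For achievability in $(i)$, I would construct, for each target $u\in[\underline u_k,\bar u_k]$, an equilibrium in which the agent's payoff is exactly $u$. The natural template is an efficient equilibrium (total surplus $\bar u_k$) in which the broker(s) collectively earn $\bar u_k - u$: on path, a fully informative signal is sold at prices summing to $\bar u_k - u$; off path, any broker deviation triggers a continuation in which the agent's payoff is raised toward $\bar u_k$ (the Bertrand-style continuation when $n\geq 2$, and the folk-theorem continuation implied by \autoref{thm:ewso}$(i)$ when $n=1$). These rejection threats are credible precisely when $k\leq k^*$, which is why the full interval of agent payoffs obtains in this regime.

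For the non-trivial half of $(ii)$, I would show $\sup \mathcal{A}(\mathcal{E}(1,k))=\underline u_k$ whenever $k>k^*$. By \autoref{thm:ewso}$(ii)$, every monopoly equilibrium is efficient, so it suffices to rule out equilibria in which the broker's payoff falls short of $\bar u_k - \underline u_k$. The key idea is the bang-bang structure from the introduction: at any reachable history, the largest agent continuation value that can be sustained is strictly below the threshold needed to deter the broker from raising her price until only $\underline u_k$ remains for the agent. Formally, I would invoke the APS-style fixed-point characterization developed later in the paper, which reduces the monopoly equilibrium payoff correspondence to a one-dimensional fixed-point problem whose unique fixed point, when $k>k^*$, is full surplus extraction.

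The hard part will be making the uniqueness claim in $(ii)$ precise: \autoref{thm:ewso} pins down monopoly equilibrium total surplus but not its division, so ruling out history-dependent equilibria in which the agent secures more than $\underline u_k$ cannot be done by a static argument and genuinely requires the APS machinery. Once that uniqueness is in hand, \autoref{thm:awso} assembles from the IR bound, the construction in $(i)$, and the projection of the monopoly equilibrium payoff set onto the agent's coordinate.
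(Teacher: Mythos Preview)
Your overall outline matches the paper's logic: the paper derives \autoref{thm:awso} from \autoref{thm:monopoly} and \autoref{thm:competition} (the full characterizations $\mathcal{E}(n,k)=\mathcal{F}(n,k)$ for $k\le k^*$ and $\mathcal{E}(1,k)=\{(\bar u_k-\underline u_k,\underline u_k)\}$ for $k>k^*$) and then simply projects onto the agent's coordinate. Your treatment of $(ii)$ is essentially the same and is fine.

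There are, however, two concrete problems with your construction for $(i)$. First, the ``efficient equilibrium with a fully informative signal at price $\bar u_k-u$'' is not efficient. The agent's stopping threshold equals his continuation payoff; if his continuation payoff is $u<\bar u_k$ and he observes $F$, he stops whenever $\theta>u$, not whenever $\theta>\bar u_k$, so total surplus falls strictly below $\bar u_k$ (and the stationarity equation $p=c_F(u)-k$ forces a different price than $\bar u_k-u$). The paper handles this via pass/fail signals $G^{\mathbf x_k(\|y\|)}$ (\autoref{lemma:pass-fail-sufficiency}, \autoref{lemma:stationary-implement}), whose binary support is chosen so that the agent's stopping behavior is invariant to any continuation payoff in $[\underline u_k,\bar u_k]$; this is precisely what decouples surplus creation from surplus division.

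Second, for $n=1$ you propose to use ``the folk-theorem continuation implied by \autoref{thm:ewso}$(i)$'' as the punishment. But \autoref{thm:ewso}$(i)$ only says $\mathcal{W}(\mathcal{E}(1,k))=[\underline u_k,\bar u_k]$; it tells you nothing about how surplus is split, so it does not deliver a continuation equilibrium in which the \emph{agent} earns near $\bar u_k$. That continuation is exactly what you are trying to prove exists, so the argument is circular. The paper avoids this by proving the full folk theorem $\mathcal{E}(1,k)=\mathcal{F}(1,k)$ directly via the APS fixed-point reduction (\autoref{lemma:monopoly-lower-bound}--\autoref{lemma:monopoly-fixed-point}); once that is in hand, $(i)$ is a one-line projection $\mathcal{A}(\mathcal{F}(n,k))=[\underline u_k,\bar u_k]$. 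Since you already plan to develop this machinery for $(ii)$, the cleanest fix is to use it for $(i)$ as well rather than attempting a separate construction.
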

\medskip

In words, this theorem establishes that for small enough search costs, the agent can attain any share of the feasible surplus in equilibrium, regardless of the number of brokers. Hence, the agent's payoff set is equivalent under competitive and monopolistic settings. In contrast, for large enough search costs, the agent's \emph{lowest} payoff in a competitive setting is no smaller than his \emph{highest} payoff in a monopolistic setting. In fact, under monopoly, the agent receives no more than the lowest feasible surplus.

Taken together, \autoref{thm:ewso} and \autoref{thm:awso} imply that both competition and search costs qualitatively affect the set of equilibria. Our proofs reveal that for $k> k^*$, there is a unique equilibrium payoff under a monopoly: the highest feasible total surplus $\bar u_k$ is generated in equilibrium but the agent captures only the lowest feasible payoff $\underline u_k$ while the remaining surplus $\bar u_k-\underline u_k$ is extracted by the monopolist (see \autoref{thm:monopoly} in \Cref{sec:proofmain}). For $k\leq k^*$, a folk-theorem result obtains---any feasible surplus creation and division is attained in equilibrium---regardless of the number of brokers (see \autoref{thm:competition} in \Cref{sec:proofmain}).

The contrast in these two theorems suggests that there is a tension between efficiency and the agent's welfare: competition at best does not improve efficiency, and at worst yields more inefficient outcomes. However, from the agent's perspective, competition at worst does not hurt him, and at best gives him a higher share of the surplus. The following result formalizes that these welfare comparisons are strict: there exist cost parameters which admit equilibrium outcomes in a competitive setting that are strictly less efficient, but that yield the agent a strictly higher payoff, than any equilibrium outcome in a monopolistic setting.\medskip

\begin{theorem}\label{thm:strict}    
Let $k^*$ be the same threshold as in \autoref{thm:ewso}. There exists a threshold $k^{**}\in (k^*,m^\varnothing)$ and a constant $\delta>0$ such that for each $k\in (k^*, k^{**}]$ and each $n>1$, there exists a non-empty open set $E\subset \mathcal{E}(n,k)$ with $\inf \hspace{.2em}\mathcal{W}\big(\mathcal{E}(1,k)\big)>\sup \hspace{.2em}\mathcal{W}\big(E)+\delta$ and $\inf \hspace{.2em}\mathcal{A}\big(E\big)>\sup\hspace{.2em} \mathcal{A}\big(\mathcal{E}(1,k)\big)+\delta$.
\end{theorem}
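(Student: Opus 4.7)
The plan is to prove \Cref{thm:strict} by a continuity-and-perturbation argument anchored at the threshold $k^*$ of \Cref{thm:ewso}. At $k = k^*$, \Cref{thm:ewso}(i) and \Cref{thm:awso}(i) jointly imply that for each $n > 1$, $\mathcal{E}(n, k^*)$ admits equilibrium payoff profiles $y^\circ$ whose total surplus $w^\circ$ and agent payoff $u^\circ$ can be chosen strictly in the interior of $[\underline u_{k^*}, \bar u_{k^*}]$---for instance, $w^\circ = \tfrac{2}{3}\underline u_{k^*} + \tfrac{1}{3}\bar u_{k^*}$ and $u^\circ = \tfrac{1}{2}(\underline u_{k^*} + w^\circ)$. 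I would first write down, using the APS-style construction that supports the folk theorem in \Cref{thm:competition}, an explicit strategy profile sustaining $y^\circ$: an on-path phase in which the brokers post a designated contract with aggregate price $w^\circ - u^\circ$, and off-path phases that revert to the Bertrand-like stationary outcome, giving the agent a continuation value close to $\bar u_k$.

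Next, I would verify that each binding incentive constraint at $(k, y) = (k^*, y^\circ)$ holds with strict slack. The relevant constraints are: (i) the agent prefers accepting the designated offer to rejecting it and taking his off-path continuation value; (ii) no broker prefers undercutting or offering a more informative signal; and (iii) the promised off-path continuation payoffs are themselves elements of $\mathcal{E}(n, k)$, which is the APS fixed-point property. Because $y^\circ$ sits strictly inside the folk-theorem box, the slack in each constraint is strictly positive and depends continuously on $(k, w^\circ, u^\circ)$; continuity of $\underline u_k$, $\bar u_k$, and the feasible-surplus correspondence in $k$ then lets me choose $\delta > 0$ no larger than one-third the minimum slack and $k^{**} > k^*$ as the largest cost for which every slack, possibly after a continuous reparameterization of the designated price to keep the split $w^\circ - u^\circ$ fixed, remains at least $2\delta$. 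For each $k \in (k^*, k^{**}]$ the resulting strategy profile is an equilibrium; the open neighborhood $E$ is obtained by perturbing the designated price, and the strict gaps with monopoly follow from \Cref{thm:ewso}(ii) and \Cref{thm:awso}(ii), which respectively give $\inf \mathcal{W}(\mathcal{E}(1,k)) = \bar u_k$ and $\sup \mathcal{A}(\mathcal{E}(1,k)) = \underline u_k$.

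The main obstacle is verifying that the competitive equilibrium set at $k = k^*$ genuinely has an open interior containing $y^\circ$---not merely a thin ``curve'' of payoffs---and that the APS one-dimensional fixed-point characterization varies continuously in $k$ so strict slack survives perturbations above $k^*$. The \emph{inside-option} mechanism highlighted in the introduction is what makes this possible: under competition the agent's off-path payoff from rejecting a single broker is still bounded below by what the other $n - 1$ brokers can offer in the same period, not merely by his next-period continuation value. This additional bargaining leverage is the reason the folk-theorem threshold under competition can exceed that of monopoly, and it is what allows a uniform $\delta$ and a common $k^{**}$ to work across all $n > 1$. Establishing this rigorously would rely on the explicit APS construction underlying \Cref{thm:competition} together with a careful bookkeeping of how the inside option loosens each binding constraint as $k$ crosses $k^*$.
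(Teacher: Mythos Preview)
Your continuity-and-perturbation plan has a genuine gap at its key step. You assert that because $y^\circ$ lies in the interior of the feasible box, every incentive constraint at $(k^*,y^\circ)$ holds with \emph{strict} slack; but the constraint that actually binds as $k$ crosses $k^*$ is the broker's no-deviation condition, and its slack at $k^*$ is \emph{not} governed by how far $y^\circ$ sits from the boundary of $\mathcal F(n,k^*)$. Unwinding $\mathcal P_k(F;y)$ in \eqref{eq:wtp}, the relevant inequality is $\Phi(k)\le \big(c_{G^{\mathbf x_k(\norm{y})}}(\bar u_k)-c_{G^\varnothing}(\bar u_k)-P(y)\big)^+$, where $\Phi(k)\coloneqq c_F(\underline u_k)+\underline u_k-c_{G^\varnothing}(\bar u_k)-\bar u_k$. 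At $k=k^*$ one has $\Phi(k^*)=0$ \emph{exactly}, so any strict slack must come from the inside-option term on the right, and that term is positive only when the on-path price $P(y^\circ)=(1-F(\mathbf x_{k^*}(\norm{y^\circ})))(\norm{y^\circ}-y^\circ_{n+1})$ is small relative to $c_{G^{\mathbf x_{k^*}(\norm{y^\circ})}}(\bar u_{k^*})$. Your particular choice $w^\circ-u^\circ=\tfrac16(\bar u_{k^*}-\underline u_{k^*})$ gives no control over this comparison. The paper's route (\autoref{thm:multiple} via \autoref{lemma:epsilon}) isolates precisely this mechanism: it restricts attention to the slab $\mathcal F^\epsilon(n,k)=\{y:\norm{y}-y_{n+1}\le\epsilon\}$, which keeps $P(y)$ uniformly small and guarantees the inside-option slack is at least $\epsilon F(\bar u_k)>0$, dominating $\Phi(k)$ for $k$ in a right-neighborhood of $k^*$.

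There is also a circularity in your self-generation step: you promise off-path continuations and then appeal to ``the APS fixed-point property'' to place them in $\mathcal E(n,k)$, but for $k>k^*$ you have not established that the autarky profile $y^A$ is itself an equilibrium payoff---that is part of what must be proved. The paper avoids this by showing the entire slab $\mathcal F^\epsilon(n,k)$, which contains $y^A$, $y^B$, and every target $y$, is self-generating in one stroke. This also delivers the open set $E\subset\mathcal F^\epsilon(n,k)$ immediately, whereas ``perturbing the designated price'' yields at best a one-parameter curve, not an open subset of $\mathbb R^{n+1}$.
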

\medskip

We make two observations  when it comes to \autoref{thm:strict}. First, our proof of \autoref{thm:strict} reveals that for $n>1$ and $k\in(k^*,k^{**}]$, the subset of  $\mathcal{E}(n,k)$ that yields weakly higher total welfare than $\inf\hspace{.2em}\mathcal{W}(\mathcal{E}(1,k))$ or that yields weakly lower agent payoff than $\sup\hspace{.2em}\mathcal{A}(\mathcal{E}(1,k))$ is a negligible set. Because the set $E$ described in \autoref{thm:strict} is open and non-empty, it is also non-negligible. Therefore, for $k\in(k^*,k^{**}],$ ``almost all'' equilibrium payoff profiles under competition feature strictly lower total welfare (and strictly higher agent payoff) than does any equilibrium payoff profile under monopoly, implying that ``almost all'' equilibrium payoff profiles under competition are inefficient.\footnote{We say a subset $E'\subseteq\mathcal{E}(n, k)$ is a negligible set if $\mathcal{E}(n,k)$ has positive Lebesgue measure in $\mathbb{R}^{n+1}$ while $E'$ has Lebesgue measure zero. Given any $n>1$ and $k\in(k^*, k^{**}]$, the existence of a non-empty open subset of $\mathcal{E}(n,k)$  implies that the equilibrium payoff set has positive measure in $\mathbb{R}^{n+1}$. By contrast, as we show in the proof of \autoref{thm:strict}, the equilibrium payoff profiles that yield weakly higher total welfare than $\inf\hspace{.2em}\mathcal{W}(\mathcal{E}(1,k))$---i.e., the set of Pareto efficient payoff profiles---lie on a hyperplane in $\mathbb{R}^{n+1}$, and thus have zero measure. The same holds for equilibrium payoff profiles that yield weakly lower agent payoff than $\sup\hspace{.2em}\mathcal{A}(\mathcal{E}(1,k))$.}

Second, notice that $\delta$ is selected independently of $n>1$ and $k\in(k^*,k^{**}],$ further implying that there always exists a non-negligible set of equilibrium outcomes that is uniformly bounded away from efficiency under competition. As we will discuss in \Cref{prelim}, the Bertrand outcome in which brokers give away full information for free is efficient, which implies that the set of equilibrium outcomes does not converge to the Bertrand outcome as $n\to \infty$. 

\subsection{Policy implication: breaking up a monopolist}
The comparison of welfare across different market structures is naturally of interest in certain applied and policy settings. Here we offer one such application of our main results: consider a regulator (it) which must decide whether to break up a monopoly in the market for information brokers. Specifically, given the agent's search cost $k\in (0,m^\varnothing)$, the regulator must choose between a market structure $(1,k)$ and $(n,k)$ for some $n>1$. Which market structure should the regulator choose if it is interested in the efficiency of the search market equilibrium? If it is interested in agent's welfare?

\autoref{thm:ewso}-\autoref{thm:strict} find that the regulator's objective matters. If the regulator maximizes total surplus, monopoly is weakly preferred whenever search costs are high; if the planner maximizes the agent’s payoff, competition is weakly preferred.
 
However, a caveat to this general statement is that there may be many equilibria, and therefore, equilibrium selection rules  matter for some search cost parameters. Our upcoming machinery formalizes results under selection rules that treat market structures symmetrically whenever they generate ``the same'' outcomes, either for total welfare or for agent welfare.
 
 Given a market structure $(\hat n, k)$, suppose payoff profile $y^*(\hat n, k)\in\mathcal{E}(\hat n, k)$ is selected. We refer to $y^*$ as the \emph{equilibrium selection rule}. We say that an equilibrium selection rule is \emph{W-symmetric} if, given $n', n''\geq 1$, the selection rule satisfies $\norm{y^*(n',k)}=\norm{y^*(n'',k)}$ whenever $\mathcal{W}\big(\mathcal{E}(n',k)\big)=\mathcal{W}\big(\mathcal{E}(n'',k)\big)$. Similarly, we say an equilibrium selection rule is \emph{A-symmetric} if $y^*_{n'+1}(n',k)=y^*_{n''+1}(n'',k)$ whenever $\mathcal{A}\big(\mathcal{E}(n',k)\big)=\mathcal{A}\big(\mathcal{E}(n'',k)\big)$. For example, $y^*(\hat n, k)\in\argmin_{y\in \mathcal{E}(\hat n, k)} \norm{y}$ is a W-symmetric selection rule. Such a selection rule would be appropriate for a pessimistic regulator who is interested in maximizing total surplus in a ``worst-case" scenario. In this case, the regulator would solve $\max_{\hat n\in\{1,n\}}\norm{y^*(\hat n, k)}\equiv\max_{\hat n\in\{1,n\}}\min_{y\in \mathcal{E}(\hat n, k)}\norm{y}$.

\begin{corollary} Let $k^*$ be the threshold in \autoref{thm:ewso}, and let $n>1$.
    \begin{enumerate}
        \item For any $k>k^*$, a regulator that seeks to maximize the total surplus (weakly) prefers market structure $(1,k)$ over $(n,k)$ under any equilibrium selection rule.  Moreover, for any $k\in (0,m^\varnothing)$, the regulator (weakly) prefers market structure $(1,k)$ if the selection rule is W-symmetric.

        \item For any $k>k^*$, a regulator that seeks to maximize the agent's payoff (weakly) prefers market structure $(n,k)$ over $(1,k)$ under any equilibrium selection rule.  Moreover, for any $k\in (0,m^\varnothing)$, the regulator (weakly) prefers market structure $(n,k)$ if the selection rule is A-symmetric.
    \end{enumerate}
\end{corollary}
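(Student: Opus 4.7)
The plan is to derive the corollary directly from \autoref{thm:ewso} and \autoref{thm:awso}, splitting on whether $k$ lies above or below the threshold $k^*$.

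For Part~1, I would first treat the high-cost regime $k>k^*$. Setting $n=1$ in \autoref{thm:ewso}(ii) yields $\sup\,\mathcal{W}(\mathcal{E}(1,k))=\inf\,\mathcal{W}(\mathcal{E}(1,k))=\bar u_k$, so $\mathcal{W}(\mathcal{E}(1,k))=\{\bar u_k\}$, i.e., every monopoly equilibrium is fully efficient. Applying \autoref{thm:ewso}(ii) again for $n>1$ gives $\norm{y^*(n,k)}\leq \sup\,\mathcal{W}(\mathcal{E}(n,k))=\bar u_k=\norm{y^*(1,k)}$ for any selection rule $y^*$, establishing the weak preference for monopoly. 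The ``Moreover'' clause then only needs verification for $k\leq k^*$: there \autoref{thm:ewso}(i) gives $\mathcal{W}(\mathcal{E}(1,k))=\mathcal{W}(\mathcal{E}(n,k))=[\underline u_k,\bar u_k]$, so W-symmetry of $y^*$ immediately forces $\norm{y^*(1,k)}=\norm{y^*(n,k)}$, yielding indifference (and hence weak preference for monopoly).

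Part~2 is entirely symmetric. For $k>k^*$, \autoref{thm:awso}(ii) with $n=1$ gives $\mathcal{A}(\mathcal{E}(1,k))=\{\underline u_k\}$, while the same theorem for $n>1$ gives $\inf\,\mathcal{A}(\mathcal{E}(n,k))\geq \underline u_k$, so any selection rule satisfies $y^*_{n+1}(n,k)\geq \underline u_k=y^*_{2}(1,k)$. For $k\leq k^*$, \autoref{thm:awso}(i) equates the agent-payoff sets across $n=1$ and $n>1$, and A-symmetry of $y^*$ then delivers equality of the selected agent payoffs.

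I do not anticipate any substantive obstacle: the corollary reduces to an accounting exercise once the set-valued statements of \autoref{thm:ewso} and \autoref{thm:awso} are in hand. The only subtle point worth flagging is that for $k>k^*$ the monopoly welfare set (respectively, agent-payoff set) collapses to the singleton $\{\bar u_k\}$ (respectively, $\{\underline u_k\}$), which is precisely what allows \emph{any} selection rule---not just a symmetric one---to pin down the welfare comparison in the high-cost regime.
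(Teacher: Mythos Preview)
Your proposal is correct and matches the paper's intent: the corollary is stated without proof, as it follows immediately from \autoref{thm:ewso} and \autoref{thm:awso} in exactly the manner you describe. Your observation that the $n=1$ instance of part $(ii)$ in each theorem collapses the monopoly set to a singleton is precisely the key step.
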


\bigskip

The remainder of the paper is organized around proving the main results. \Cref{prelim} introduces tools which make the analysis more tractable. Then, \Cref{sec:proofmain} provides four propositions which characterize the set of equilibrium payoffs as a function of the number of brokers, $n$, and the search cost, $k$, and shows how these propositions imply our main results. \Cref{sec:conclusion} concludes. Throughout, we present additional lemmas which are useful in the analysis; we relegate proofs of supporting lemmas (found in \Cref{prelim}) to the Online Appendix, while the proofs of the propositions in \Cref{sec:proofmain} are housed in \ref{sec:appendix_proposition}.

\section{Constructing Equilibrium Sets}\label{prelim}

A key departure from standard repeated games is that today’s value of information depends on tomorrow’s anticipated payoff. A higher continuation payoff makes the agent more willing to keep searching, which lowers his willingness to pay for information about today’s draw. This feedback is the source of the agent’s endogenous bargaining power. This section characterizes the agent's stopping problem given beliefs about future payoffs, then embeds this search problem into the broader contracting game to construct equilibrium sets.

\subsection{Single-agent search problem}\label{sec:single-agent}

Let us consider a hypothetical search problem in which there are no brokers but the agent observes a realization from a distribution $G\in\mathcal{G}(F)$ in each period for free. This hypothetical setting is instructive in characterizing the lowest and the highest payoffs possible in the persuaded search game.

Let $u_{t+1}\in\mathbb{R}$ be the agent's expected payoff in period $t+1$. The agent's decision in period $t$ depends on the realized posterior mean $m$: he stops his search if  $m>u_{t+1}$, and he continues his search if $m<u_{t+1}$.\footnote{If $m=u_{t+1}$ then the agent is indifferent between stopping his search or continuing. We specify how the agent breaks this indifference in equilibrium in \Cref{sec:self_gen}.} Therefore, the continuation payoff $u_{t+1}$ also serves as the agent's period-$t$ stopping threshold. The agent's expected payoff in period $t$ is thus  
\begin{align*}
u_t&=\int_\Theta(m-u_{t+1})^+dG(m)-k+u_{t+1}\\[6pt]
&=\int_{u_{t+1}}^1(1-G(m))dm-k+u_{t+1},
\end{align*}
where for any $x\in \mathbb{R}$ we define $x^+:=\max\{x,0\}$, and where the last equality follows from integration by parts. Let $c_G:[0,1]\to \mathbb{R}_+$ be defined as
\begin{align*}
 c_G(u)&\coloneqq \int^{1}_{u}(1-G(m))dm.
\end{align*}
For any $G\in\mathcal{G}(F)$, the mapping $u\mapsto c_G(u)$ is continuous, with  $c_G(0)=m^\varnothing$ and $c_G(1)=0$. It is also convex, with a right derivative of $\partial_+ c_G(u)=G(u)-1$ for all $u\in[0,1)$. Thus, $c_G$ is non-increasing, and strictly decreasing over any interval over which it is positive. Moreover, the mapping $G\mapsto c_G$ is  decreasing in mean-preserving contractions: if $G'$ is a mean-preserving contraction of $G''$, then $c_{G'}\leq c_{G''}$ pointwise. Hence, for any $G\in\mathcal{G}(F)$,  $c_{G^\varnothing }\leq c_G\leq c_F$ pointwise.\footnote{See \cite{mek23} for a discussion of these properties.}

The agent's expected payoff in period $t$ can thus be written as 
\[
u_t=\underbrace{c_G(u_{t+1})-k}_{\substack{\text{within-period}\\ \text{payoff}}}\hspace*{1em}+\underbrace{u_{t+1}}_{\substack{\text{continuation}\\ \text{payoff}}}.
\]
The within-period payoff captures the agent's value from consuming the good today net of the flow search cost he incurs. The continuation payoff captures the agent's value from continuing his search. Notice that the within-period payoff depends directly on the continuation payoff, which implies that expected payoffs are not additively separable across time periods.

Because this hypothetical setting is stationary, the agent's expected payoff in each period is the same. In particular, $u_t=u$  for all $t$, where $u$ is a solution to the equation 
\begin{align*}
 \label{eq:1}
\tag{3}
 k&=c_G(u).
\end{align*}
The expression in \eqref{eq:1}, seen as a function of $u$, has a unique solution because $c_G$ is a continuous and decreasing function with $c_G(0)=m^\varnothing>k>0=c_G(1)$. Given a search cost $k\in (0,m^\varnothing)$, let $\mathcal u_k:\mathcal{G}(F)\to [0,1]$ be the mapping capturing the solution to \eqref{eq:1} so that $k=c_G(\mathcal{u}_k(G))$ for all $G\in\mathcal{G}(F)$, as depicted in \autoref{fig:single-agent}.

\begin{figure}[ht]
 \begin{center}
\begin{tikzpicture}[xscale=8, yscale=4]
 \draw [->, help lines,   thick] (0,0) -- (1.1,0);
 \draw [->, help lines,   thick] (0.03,0) --(.03,1.1);
 \node[right=3pt] at (1.1,0) {$u$};
 \node[left=3pt] at (0.1, 1.1) {$c$};

 \draw [] (1,-.02)node[below]{$1$}--(1,.02);
 \draw [] (.03,-.02) node[below]{$0$}--(.03,.02);
 \node[left]at (0.03,.945){$m^\varnothing $};
 \draw [dotted] (.03, 0)--(.03,.945);
 \draw [] (.3,-0.01) node[below]{$m^\varnothing $}--(.3,.02);

\draw[cyan, opacity=.5, domain=0.03:1,smooth,variable=\x, thick] plot ({\x},{(1-\x)^1.65});
\draw[thick, cyan, opacity=.15](.03, .945)--(.3,0)--(1,0);
\draw[thick, red, opacity=.6](.03, .945)--(.3,0)--(1,0);
\draw[ thick] plot [smooth, tension=2] (0.03,.945)to [in=130, out=295](.37, .33) to [in=175, out=310](1,0);

\draw[thick, dashed] (0.03,0.25)node[left]{\footnotesize{$k$}}--(1,.25);

\draw [dotted, thick] (.57,-0.02) node[below]{\footnotesize{$\bar u_k$}}--(.57,.25);
\draw [dotted, thick] (.23,-.02) node[below]{\footnotesize{$\underline u_k$}}--(.23,.25);
\draw [dotted, thick] (.44,-.02) node[below]{\footnotesize{$\mathcal u_k(G)$}}--(.44,.25);
\end{tikzpicture}
\end{center}
\captionsetup{oneside,margin={0cm,0cm},justification=justified, singlelinecheck=false}
\caption{This figure depicts $c_F$ in blue, $c_{G^\varnothing }$ in red, and a function $c_G$ in black for some arbitrary $G\in\mathcal{G}(F)$. The continuation values $\bar u_k$, $\underline u_k$, and $\mathcal{u}_k(\cdot)$ correspond to the points where each respective curve intersects with the dashed line representing search cost $k$. }\label{fig:single-agent}
\end{figure}
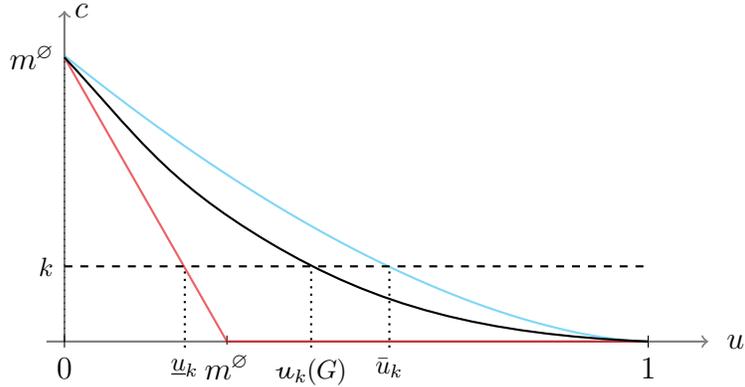

For a fixed $k\in (0, m^\varnothing )$, $\mathcal{u}_k(G')\leq \mathcal{u}_k(G'')$ for any $G', G''\in\mathcal{G}(F)$ such that $G'$ is a mean-preserving contraction of $G''$.\footnote{ This is because  $G''$ is a Blackwell more informative signal than $G'$, and more informative signals are (weakly) more valuable in single-agent decision problems.} Consequently, the lowest payoff the agent can earn is $\underline u_k\coloneqq \mathcal u_k(G^\varnothing )=m^\varnothing -k$, while the highest payoff is $\bar u_k\coloneqq \mathcal u_k(F)>\underline u_k$. We refer to $\bar u_k$ as the \emph{McCall payoff}, to $\underline u_k$ as the \emph{autarky payoff}, and to $\bar u_k-\underline u_k$ as the \emph{full surplus}. The following lemma describes how payoffs change as a function of the agent's search cost.

\begin{lemma}\label{lemma:surplus-compstat-cost}
For any $G\in\mathcal{G}(F)$, the expected payoff $\mathcal{u}_k(G)$ is strictly decreasing in $k$ over the interval $(0,m^\varnothing)$. Furthermore, the full surplus, $\bar u_k-\underline u_k$ is strictly decreasing in $k$ over the interval $(0,m^\varnothing)$.
\end{lemma}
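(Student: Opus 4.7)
The plan is to dispatch the two claims separately, exploiting the defining equation $k=c_G(\mathcal u_k(G))$ in each case. For strict monotonicity of $\mathcal u_k(G)$, I would argue by contradiction using only the weak monotonicity of $c_G$ already established in the text: if $0<k_1<k_2<m^\varnothing$ and $u_i\coloneqq \mathcal u_{k_i}(G)$ satisfied $u_1\le u_2$, then $k_1=c_G(u_1)\ge c_G(u_2)=k_2$, contradicting $k_1<k_2$. Hence $u_1>u_2$, which is the desired strict comparative static.

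For the full-surplus claim, I use $\underline u_k=m^\varnothing-k$ together with the implicit characterization $k=c_F(\bar u_k)$. Because $F$ is absolutely continuous, so is $c_F$, with $c_F'(t)=F(t)-1$ almost everywhere. Fixing $0<k_1<k_2<m^\varnothing$ and using $\bar u_{k_1}>\bar u_{k_2}$ from the first part, the fundamental theorem of calculus gives
\[
k_2-k_1 \;=\; c_F(\bar u_{k_2})-c_F(\bar u_{k_1}) \;=\; \int_{\bar u_{k_2}}^{\bar u_{k_1}}\bigl(1-F(t)\bigr)\,dt.
\]
Combining this with $\bar u_{k_1}-\bar u_{k_2}=\int_{\bar u_{k_2}}^{\bar u_{k_1}}dt$ reduces the claim to
\[
\bigl(\bar u_{k_1}-\underline u_{k_1}\bigr)-\bigl(\bar u_{k_2}-\underline u_{k_2}\bigr) \;=\; (\bar u_{k_1}-\bar u_{k_2})-(k_2-k_1) \;=\; \int_{\bar u_{k_2}}^{\bar u_{k_1}}F(t)\,dt,
\]
so it suffices to show that this final integral is strictly positive.

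The main obstacle is certifying strict (rather than merely weak) positivity of the integral, which requires $F>0$ on a positive-measure subset of $(\bar u_{k_2},\bar u_{k_1})$. Part 1 already supplies the interval positive length, so it is enough to establish $F(\bar u_k)>0$ for every $k\in(0,m^\varnothing)$. I would deduce this from the hypothesis $\bar u_k>\underline u_k$ noted at the outset of the excerpt: were $F(\bar u_k)=0$, then $c_F$ would be affine with slope $-1$ on $[0,\bar u_k]$ and satisfy $c_F(0)=m^\varnothing$, forcing $k=c_F(\bar u_k)=m^\varnothing-\bar u_k$ and hence $\bar u_k=\underline u_k$, a contradiction. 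Continuity of $F$ then propagates $F(\bar u_{k_1})>0$ to a subinterval of $(\bar u_{k_2},\bar u_{k_1})$ of positive length, delivering the required strict inequality.
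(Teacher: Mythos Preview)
Your proof is correct. Part 1 is essentially identical to the paper's argument: both observe that $c_G(\mathcal u_{k'}(G))=k'<k''=c_G(\mathcal u_{k''}(G))$ and invoke the (weak) monotonicity of $c_G$ to force $\mathcal u_{k'}(G)>\mathcal u_{k''}(G)$.

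For Part 2 you take a genuinely different route. The paper differentiates: from $\underline u_k=m^\varnothing-k$ it has $\partial\underline u_k/\partial k=-1$, and from $c_F(\bar u_k)=k$ it implicitly differentiates to obtain $\partial\bar u_k/\partial k=-1/(1-F(\bar u_k))<-1$, whence $\bar u_k-\underline u_k$ is strictly decreasing. You instead work with finite differences and the fundamental theorem of calculus, reducing the claim to $\int_{\bar u_{k_2}}^{\bar u_{k_1}}F(t)\,dt>0$. Both approaches ultimately hinge on the same fact, namely $F(\bar u_k)>0$. The paper derives this from $\bar u_k>0$, which tacitly presumes the support of $F$ reaches down to $0$; your argument---that $F(\bar u_k)=0$ would force $c_F$ affine with slope $-1$ on $[0,\bar u_k]$, hence $\bar u_k=m^\varnothing-k=\underline u_k$, contradicting $\bar u_k>\underline u_k$---is self-contained and does not rely on any support assumption. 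Your integral approach also sidesteps the question of whether $k\mapsto\bar u_k$ is everywhere differentiable (the paper's implicit differentiation needs $F$ continuous at $\bar u_k$, which absolute continuity alone does not guarantee pointwise). So your argument is marginally more elementary and more robust, at the cost of a few extra lines.
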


\subsection{Feasible payoffs}\label{sec:feasible}
We now return to the contracting setting between the agent and the brokers. In any equilibrium, each broker's payoff must be non-negative as prices are non-negative, the agent's payoff must weakly exceed his autarky payoff since he can always guarantee this payoff by taking the null offer and immediately ending his search, and the total surplus generated in any equilibrium cannot exceed the McCall payoff as this is the highest possible surplus in the search market. Thus, for $n\geq 1$ brokers and a search cost $k\in (0, m^\varnothing )$, the feasible payoff set is given by 
\[
\mathcal{F}(n,k)\coloneqq \left\{y\in\mathbb{R}^{n+1}_+: y_{n+1}\geq \underline u_k \text{ and } \norm{y}\leq \bar u_k \right\}.
\]
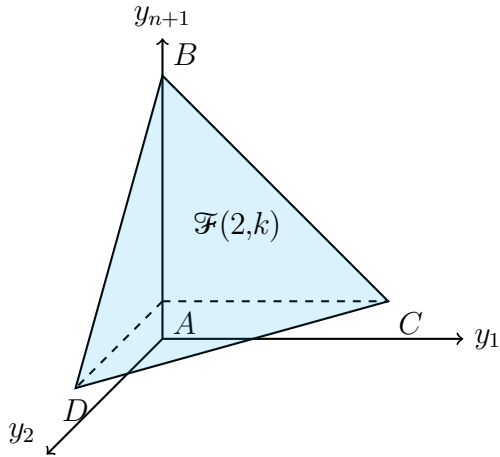
\begin{figure}[ht]
 \begin{center}
\begin{tikzpicture}[scale=.9]
     \coordinate (A) at (0, .5, 0);
     \coordinate (B) at (3, .5, 0);
     \coordinate (C) at (0, 3.5, 0);
     \coordinate (D) at (0,.5, 3);

     \draw[thick] (B) -- (C) --(D) -- cycle;

    \draw[thick, dashed] (A) -- (B);
    \draw[thick, dashed] (A) -- (D);

     \draw[->, thick] (0,0,0) -- (4,0,0) node[right] {$y_1$};
     \draw[->, thick] (0,0,0) -- (0,4,0) node[above] {$y_{n+1}$};
     \draw[->, thick] (0,0,0) -- (0,0,4) node[ left] {$y_2$};

     \node at (A) [below right=-1.5] {$A$};
     \node at (B) [below right=-1.5] {$C$};
     \node at (C) [above right] {$B$};
     \node at (D) [left] {$D$};

     \fill[cyan, opacity=0.15] (B) -- (C) -- (D) -- cycle;
     \node [align=center, above] at (1,1.15) {$\mathcal{F}(2,k)$};
\end{tikzpicture}
\end{center}
\captionsetup{oneside,margin={0cm,0cm},justification=justified, singlelinecheck=false}
\caption{This figure depicts the feasible set $\mathcal{F}(n,k)$ for $n=2$ and $k\in(0,m^\varnothing).$ Vertices $A, B, C$, and $D$ correspond to, respectively,  the autarky payoff profile $(0,0,\underline u_k)$, the ``Bertrand'' payoff profile $(0,0,\bar u_k)$, and the broker optimal payoff profiles $(\bar u_k-\underline u_k,0,\underline u_k)$ and $(0,\bar u_k-\underline u_k,\underline u_k)$.  The face $BDC$ depicts all efficient outcomes, i.e., payoff profiles with $\norm{y}=\bar u_k$.} 
 \label{fig:prism}
\end{figure}

The shaded blue simplex in \autoref{fig:prism} depicts the feasible set $\mathcal{F}(2,k)$ in a duopoly market for some search cost $k\in(0,m^\varnothing)$. For any $n\geq 1$ and $k\in (0,m^\varnothing)$, it is straightforward to see that $\mathcal{F}(n,k)$ is an $(n+1)$-dimensional simplex. Importantly, it has a non-empty interior because $\bar u_k>\underline u_k>0$. Naturally, $\mathcal{E}(n,k)\subseteq \mathcal{F}(n,k)$.

\subsection{Equilibrium outcomes and self-generating sets} \label{sec:self_gen}
In this subsection, we  characterize equilibrium payoff sets by taking a  dynamic programming approach. This is tractable in our model since each good is drawn i.i.d., making all continuation periods identical. Specifically, we study \emph{self-generating} subsets of $\mathcal{F}(n,k)$ by adapting APS to a sequential-move stopping game with no discounting.

To formally describe our approach, we first introduce history-independent \emph{simple strategies}. A simple strategy for Broker $i\in N$, denoted by $\alpha_i\in \Delta( \mathcal O)$, is a randomization over TIOLI offers. Let $\alpha\coloneqq (\alpha_i)_{i\in N}$ and let $\alpha_{-i}\coloneqq (\alpha_j)_{j\in N\backslash\{i\}}$. Similarly, the agent's simple strategy is given by $\beta\coloneqq (\beta^\mathcal{w}, \beta^\mathcal{d})$ where $\beta^\mathcal{w}:\mathcal{O}^n\to \Delta(W)$ is a randomization over purchase decisions, and $\beta^\mathcal{d}:\mathcal{O}^n\times W\times \Theta\to [0,1]$ is the probability with which the agent continues to search. 

We also introduce mappings of the form $\psi:\mathcal{O}^n\times W\times \Theta\to\mathbb{R}^{n+1}$, which we refer to as a \emph{reward function}. The vector $\psi(O, w, m)$ represents the continuation payoff profile if the agent continues searching after the brokers make TIOLI offers $O\in\mathcal{O}^n$, the agent accepts the offer from $w\in W$, and a signal realization of $m\in \Theta$ is observed. In this case, $\psi_i(O,w,m)$ for $i\in N$ represents Broker $i$'s continuation payoff and  $\psi_{n+1}(O,w,m)$ represents the agent's continuation payoff. We denote the image of the mapping $\psi$ by $Im_\psi$.

Given a profile of offers $O$, agent's simple strategy $\beta$, and a reward function $\psi$, Broker $i$'s payoff is given by 
\begin{align*}
V_i(O, \beta, \psi)\coloneqq p_i\beta^\mathcal{w}(i|O)+\sum_{w\in W}\beta^\mathcal{w}(w|O)\int_\Theta \psi_i(O, w, m)  \beta^\mathcal{d}(O, w, m) dG_w(m)
\end{align*}
and the agent's payoff is given by
\begin{align*}
U(O, \beta, \psi)\coloneqq-k&+\sum_{w\in W} \beta^\mathcal{w}(w|O)\left[-p_w+\int_\Theta m\big(1-\beta^\mathcal{d}(O,w,m)\big) dG_w(m)\right]\\
&+\sum_{w\in W} \beta^\mathcal{w}(w|O)\int_\Theta \psi_{n+1}(O, w, m)\beta^\mathcal{d}(O,w,m) dG_w(m).
\end{align*}

\begin{definition}\label{def:self-generation}
A payoff profile $y\in\mathcal{F}(n,k)$ is supported by simple strategies $(\alpha, \beta)$ and a reward function $\psi$ if the following hold:
\begin{enumerate}[$(\alph*)$]
\item $y$ is generated by $(\alpha, \beta, \psi)$: For all $i\in N$, 
\[
y_i=\mathbb{E}_{\alpha}\left[ V_i(O, \beta, \psi) \right],
\]and 
\[
y_{n+1}=\mathbb{E}_{\alpha}\left[ U(O, \beta, \psi) \right],
\]

where $\mathbb{E}_\alpha$ is the expectation over $\mathcal{O}^n$ with respect to the probability induced by $\alpha$.

\item No broker has a profitable deviation: For all $i\in N$ and all  $\tilde o_i\in \mathcal{O}$,
\[
y_i\geq \mathbb{E}_{\alpha_{-i}}\big[ V_i\left(\tilde o_i, O_{-i}, \beta,\psi \right) \big],
\]
where $\mathbb{E}_{\alpha_{-i}}$ is the expectation over $\mathcal{O}^{n-1}$ with respect to the probability induced by $\alpha_{-i}$.
\item $\beta$ is sequentially rational with respect to $\psi$: For all $O\in\mathcal O^n$ and all $\tilde \beta\coloneqq (\tilde \beta^\mathcal{w}, \tilde \beta^\mathcal{d})$,
\[
U(O, \beta, \psi)\geq U(O, \tilde \beta, \psi).
\]
\end{enumerate}
\end{definition}\

Simply put, a feasible payoff profile $y$ is supported by a tuple $(\alpha, \beta,\psi)$ if $y$ is the outcome of the prescribed simple strategies and continuation payoffs, and  if these simple strategies are optimal given the continuation payoffs. Optimality here means that, taking the continuation payoffs as given, the agent's simple strategy is a best response to any profile of TIOLI offers. Furthermore, taking the reward function and the agent's simple strategy as given, each Broker $i$ makes a TIOLI offer that is a best response against the simple strategies of the other brokers. 

\begin{definition}\label{def:gen}
 A set $E\subseteq \mathcal{F}(n,k)$ is self-generating if each $y\in E$ is supported by a tuple $(\alpha,\beta,\psi)$ with $Im_\psi\subseteq E$.
\end{definition}

In other words, a self-generating set $E$  can ``support itself" in the sense that each payoff in $E$ is an outcome of optimal simple strategies when continuation values are chosen from $E$ itself.

We denote the union of all self-generating sets by $\mathcal{S}(n,k)$. By construction, $\mathcal{S}(n,k)$ is itself self-generating, which makes it the largest self-generating set. In the discounted repeated-game environment of APS, there is an equivalence between $\mathcal{S}(n,k)$ and the equilibrium payoff set $\mathcal{E}(n,k)$, with the former offering a recursive characterization of the latter. 

The same conclusion does not apply immediately in our persuaded-search game. First, payoffs in our persuaded search game are not discounted sums of stage-game payoffs. Payoffs are instead accumulated until an endogenous stopping time, and there is no discount factor. Second, the per-period interaction in our model is sequential: brokers first choose TIOLI offers, and the agent then chooses an offer and decides whether to continue after observing the signal realization. We thus require the agent's strategy to be optimal following any profile of offers and signal realization. An analogous within-period sequential rationality condition is not needed in APS because they consider a setting where the per-period interaction is a simultaneous-move game. Finally, APS consider finite action sets, whereas brokers in our persuaded-search game choose from a continuum of offers. This requires a larger class of simple strategies and reward functions,  raising measurability issues of the kind emphasized by \cite{aumann1961borel}. We therefore use the APS recursive idea but separate the classical notion of self-generation from its measurable implementation.

\begin{definition}\label{def:selfgenerate_measurable}
A Borel set $E\subseteq\mathcal{F}(n,k)$ is measurably self-generating if there exist universally measurable functions $(\bm{\alpha}, \bm{\beta}, \bm{\psi})$ with $\bm\alpha:E\to\Delta(\mathcal{O})^n$, $\bm{\beta^\mathcal{w}}:E\times \mathcal{O}^n\to \Delta(W)$, $\bm{\beta^\mathcal{d}}:E\times \mathcal{O}^n\times W\times \Theta\to[0,1]$, and $\bm{\psi}:E\times \mathcal{O}^n\times W\times \Theta\to\mathbb{R}^{n+1}$ such that
\begin{enumerate}[$(\alph*)$]
    \item Each $y\in E$ is supported by $(\bm{\alpha}[y], \bm{\beta}[y], \bm{\psi}[y])$, and \footnote{Here, $\bm{\alpha}[y]=\bm{\alpha}(y)$, $\bm{\beta}[y]\coloneqq (O,w,m)\mapsto\big(\bm{\beta^\mathcal{w}}(y, O),  \bm{\beta^\mathcal{d}}(y, O, w, m)\big)$, and $\bm{\psi}[y]\coloneqq (O,w,m)\mapsto\bm{\psi}(y, O,w,m)$.} 
    \item $Im_{\bm{\psi}} \subseteq E$.
\end{enumerate}
\end{definition}

The main distinction between \autoref{def:gen} and \autoref{def:selfgenerate_measurable} is that for the latter, the supporting tuple is selected through a single collection of universally measurable maps defined on $E$. Let $\mathcal{S}^{m}(n,k)$ be the union of all measurably self-generating sets. Since any measurably self-generating set $E$ is, by definition, self-generating, we have $\mathcal{S}^m(n,k)\subseteq \mathcal{S}(n,k)$. The following lemma relates these two sets to the equilibrium payoff set.

\begin{lemma}\label{lemma:equilibrium-characterization}
For all $n\geq 1$ and $k\in (0, m^\varnothing)$, $\mathcal{S}^m(n,k)\subseteq\mathcal{E}(n,k) \subseteq \mathcal{S}(n,k)$.
\end{lemma}

Therefore, any Borel set $E$ that is measurably self-generating is automatically a subset of the equilibrium payoff set. Furthermore, if we show that $\mathcal{S}(n,k)$ is  itself Borel and measurably self-generating, then the chain of set inclusions in \autoref{lemma:equilibrium-characterization} reduces to a chain of set equalities, giving us an APS-style equivalence in our setting.

While we can use \autoref{lemma:equilibrium-characterization} to recursively characterize (subsets of) the equilibrium payoff set, it is nevertheless a daunting task to check whether some given set $E$ is self-generating. We thus first cut down the difficulty of this task: given a threshold $x\in\Theta$, let $G^x\in\mathcal{G}(F)$ be a posterior-mean distribution given by
\[
 \label{eq:pass-fail-distibution}
\tag{4}
G^{x}(m)=\begin{cases}
     0 &\mbox{if}  ~~~m<\mathbb{E}_F[\theta|\theta\leq  x]  \\
       F(x) &\mbox{if} ~~~ \mathbb{E}_F[\theta|\theta\leq  x]\leq m<\mathbb{E}_F[\theta|\theta\geq  x]  \\
     1 &\mbox{if} ~~~m\geq \mathbb{E}_F[\theta|\theta\geq  x]
     \end{cases}
\]
for any $m\in[0,1]$, and 
\[
 \label{eq:pass-fail-function}
\tag{5}
c_{G^x}(u)=\max\left\{c_{G^\varnothing }(u), c_F(x)+\big(1-F(x)\big)(x-u)\right\}
\]
for any $u\in[0,1].$
We refer to $G^x$ as a \emph{pass-fail} signal (with threshold $x$) because it represents the posterior-mean distribution induced by revealing if a good's quality is above the threshold $x$---in which case the good ``passes"---or if it is below the threshold---in which case the good ``fails." The following lemma shows that any surplus $u\in [\underline u_k, \bar u_k]$ can be generated by some {pass-fail} signal.

\begin{lemma}\label{lemma:pass-fail-sufficiency}
For each $k\in (0,m^\varnothing )$, there exists a continuous and strictly decreasing function $\mathbf x_k:[\underline u_k, \bar u_k]\to \Theta$  with  $\bar u_k=\mathbf x_k(\bar u_k)<\mathbf x_k(\underline u_k)<1$ such that for all $u\in [\underline u_k,\bar u_k]$, $u=\mathcal{u}_k(G^{\mathbf x_k(u)})$.
\end{lemma}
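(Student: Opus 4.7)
My plan is to construct $\mathbf x_k(u)$ as the solution on $[u,1)$ to a one-parameter family of scalar equations obtained by substituting \eqref{eq:pass-fail-function} into the defining equation. Writing $\phi_u(x):=c_F(x)+(1-F(x))(x-u)$ for convenience, the formula \eqref{eq:pass-fail-function} gives $c_{G^x}(u)=\max\{m^\varnothing-u,\phi_u(x)\}$, so the target equation $u=\mathcal u_k(G^x)$ is equivalent to $\max\{m^\varnothing-u,\phi_u(x)\}=k$. Because $u\in[\underline u_k,\bar u_k]$ implies $m^\varnothing-u\le k$ (with equality only at $u=\underline u_k$), the problem reduces to finding $x$ with $\phi_u(x)=k$.

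To produce such an $x$, I would first analyze the shape of $\phi_u(\cdot)$. Using $c_F'(x)=-(1-F(x))$, the product rule gives $\phi_u'(x)=-f(x)(x-u)$, so $\phi_u$ is non-decreasing on $[0,u]$ and non-increasing on $[u,1]$, with maximum $\phi_u(u)=c_F(u)$ and $\phi_u(1)=0$. For $u\in[\underline u_k,\bar u_k]$, monotonicity of $c_F$ together with the defining equation $c_F(\bar u_k)=k$ yields $\phi_u(u)=c_F(u)\ge c_F(\bar u_k)=k$, while $\phi_u(1)=0<k$. The intermediate value theorem applied to $\phi_u$ on $[u,1]$ then produces a solution $x\in[u,1)$ to $\phi_u(x)=k$; I take $\mathbf x_k(u)$ to be this solution. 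By construction $c_{G^{\mathbf x_k(u)}}(u)=\max\{m^\varnothing-u,k\}=k$, so $u=\mathcal u_k(G^{\mathbf x_k(u)})$; at $u=\bar u_k$ this forces $\mathbf x_k(\bar u_k)=\bar u_k$.

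For strict monotonicity and the remaining endpoint conditions, I would pick $u_1<u_2$ in $[\underline u_k,\bar u_k]$ and set $x_i:=\mathbf x_k(u_i)\in[u_i,1)$. The envelope-type identity
\[
\phi_{u_1}(x_2)-\phi_{u_2}(x_2)=(1-F(x_2))(u_2-u_1)>0,
\]
which is strict because $x_2<1$ implies $F(x_2)<1$, shows $\phi_{u_1}(x_2)>k$. Since $\phi_{u_1}$ is non-increasing on $[u_1,1]$ and equals $k$ at $x_1$, it follows that $x_1>x_2$, so $\mathbf x_k$ is strictly decreasing. Combined with $\phi_{\underline u_k}(1)=0<k$, this gives the chain $\bar u_k=\mathbf x_k(\bar u_k)<\mathbf x_k(\underline u_k)<1$. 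Continuity of $\mathbf x_k$ will then follow from joint continuity of $(u,x)\mapsto\phi_u(x)$: along any convergent subsequence $\mathbf x_k(u_n)\to x^*$ with $u_n\to u^*$, the limit satisfies $\phi_{u^*}(x^*)=k$ with $x^*\ge u^*$, which together with uniqueness of the crossing on the decreasing branch of $\phi_{u^*}$ forces $x^*=\mathbf x_k(u^*)$.

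The hard part, such as it is, is the envelope identity displayed above: once it is in hand, strict monotonicity, the bound $\mathbf x_k(\underline u_k)<1$, and continuity all reduce to routine one-dimensional analysis on the decreasing branch of $\phi_u$. The only assumption beyond what is stated in the model that my argument relies on is that the crossing $\phi_u(x)=k$ is unique on $[u,1]$, which holds as soon as $f>0$ a.e.\ on the relevant range.
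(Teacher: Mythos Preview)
Your proposal is correct and follows essentially the same route as the paper: both introduce the auxiliary function $\phi_u(x)=c_F(x)+(1-F(x))(x-u)$ (the paper calls it $L_k(x,u)$), compute its derivative to see it is single-peaked at $x=u$, and then apply the intermediate value theorem on the decreasing branch to produce $\mathbf x_k(u)$. Two small differences are worth noting. First, the paper applies the IVT on the subinterval $[\bar u_k,1)$ (using $L_k(\bar u_k,u)\ge k$) rather than on $[u,1)$; this immediately yields $\mathbf x_k(u)\ge\bar u_k$, a bound the paper uses downstream, though it also follows from your strict-monotonicity argument combined with $\mathbf x_k(\bar u_k)=\bar u_k$. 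Second, your envelope identity $\phi_{u_1}(x_2)-\phi_{u_2}(x_2)=(1-F(x_2))(u_2-u_1)$ gives a cleaner proof of strict monotonicity that does not require $\phi_u$ to be \emph{strictly} decreasing on $[u,1]$, and you explicitly address continuity via a subsequence argument, which the paper omits. Your caveat about uniqueness when $f$ vanishes on intervals is honest; the paper's footnote asserting strict monotonicity of $L_k(\cdot,u)$ implicitly relies on the same positivity of $f$.
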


In fact, not only can we generate any total surplus $u \in [\underline u_k, \bar u_k]$, but as our next lemma shows, we can also divide this surplus arbitrarily among the $n$ brokers and the agent (subject to feasibility constraints) in order to generate any payoff profile $y$ with $\norm{y}=u$. To state the lemma, we first introduce a class of TIOLI offers: Given a payoff profile $y\in\mathcal{F}(n,k)$, define the offer $o^*(y)\coloneqq \big(P(y), G^{\mathbf x_k(\norm{y})}\big)\in\mathcal{O}$ where 
\[
\label{eq:price}
\tag{6}
P(y)\coloneqq c_{G^{\mathbf x_k(\norm{y})}}(y_{n+1})-c_{G^{\mathbf x_k(\norm{y})}}(\norm{y}).
\]Let $O^*(y)\in \mathcal{O}^n$ be the profile of TIOLI offers in which each Broker $i\in N$ offers $o^*(y)$, and let $O^*_{-i}(y)$ be the profile in which each Broker $j\in N\backslash\{i\}$ offers $o^*(y)$. 

\begin{lemma}\label{lemma:stationary-implement}
Fix $n\geq 1$ and $k\in (0,m^\varnothing)$. Each $y\in\mathcal{F}(n,k)$ is generated by a tuple $(\alpha, \beta,\psi)$ such that
\begin{enumerate}[$(\roman*)$]
    \item $\alpha_i\big(o^*(y)\big)=1$ for all $i\in N$,
    \item $\beta$ is sequentially rational with respect to $\psi$, and 
    \item $\psi\left(O^*(y), w, m\right)=y$ for all $(w, m)\in W\times \Theta$.
\end{enumerate}
\end{lemma}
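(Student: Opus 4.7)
The plan is to construct the tuple $(\alpha,\beta,\psi)$ directly, choosing $\alpha_i=\delta_{o^*(y)}$ for every $i \in N$ and taking the continuation payoff function to be constant: $\psi(O,w,m) \equiv y$ on all of $\mathcal{O}^n \times W \times \Theta$. With this constant continuation, the agent's problem reduces to a static one—after observing any profile $O$, he purchases some $w \in W$ to maximize $-p_w+c_{G_w}(y_{n+1})$ and then stops iff the signal realization strictly exceeds $y_{n+1}$. So I take $\beta^\mathcal{d}(O,w,m) = \mathbbm{1}[m \leq y_{n+1}]$ (continue when $m \leq y_{n+1}$, stop when $m > y_{n+1}$) and let $\beta^\mathcal{w}$ select any maximizer off-path, while on-path mixing across the $n$ brokers with weights $q_i = y_i/\sum_{j\in N} y_j$ (with arbitrary convention when $\sum_j y_j = 0$). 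Sequential rationality of $\beta$ with respect to $\psi$ will then hold by construction.

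The main computation is a two-part verification. For the agent: using $c_{G^{\mathbf{x}_k(\|y\|)}}(\|y\|) = k$ from \autoref{lemma:pass-fail-sufficiency}, accepting any broker's offer yields $-k-P(y)+c_{G^{\mathbf{x}_k(\|y\|)}}(y_{n+1})+y_{n+1}=y_{n+1}$, whereas the null offer yields $\max\{m^\varnothing,y_{n+1}\}-k\leq y_{n+1}$ because $y_{n+1}\geq \underline u_k = m^\varnothing - k$. Thus the agent is indeed willing to mix across the brokers on-path and his expected period-$1$ payoff is $y_{n+1}$. For each broker $i$: using $\psi\equiv y$, the payoff reduces to $V_i = q_i P(y) + (1-\pi)y_i$, where $\pi$ is the on-path stopping probability. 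In the non-degenerate range $\|y\|>\underline u_k$, both $y_{n+1}$ and $\|y\|$ lie strictly between the two atoms of $G^{\mathbf{x}_k(\|y\|)}$, giving $\pi=1-F(\mathbf{x}_k(\|y\|))$. Combining \eqref{eq:pass-fail-function} with the definition of $P(y)$ yields the pivotal identity $P(y)=(1-F(\mathbf{x}_k(\|y\|)))(\|y\|-y_{n+1})=\pi(\|y\|-y_{n+1})$. This forces $\sum_i q_i = 1$ (so no weight lands on the null) and delivers $V_i = q_i P(y) + (1-\pi)y_i = y_i$ immediately.

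The delicate step—the one I expect to be the main obstacle—is verifying that the on-path stopping probability really equals $\pi=1-F(\mathbf{x}_k(\|y\|))$; equivalently, that $y_{n+1}$ lies between the two atoms of the pass-fail signal. The upper bound $y_{n+1}\leq\|y\|<\mathbb{E}_F[\theta\mid\theta\geq \mathbf{x}_k(\|y\|)]$ follows because $\|y\|=\mathcal{u}_k(G^{\mathbf{x}_k(\|y\|)})$ cannot reach the top atom (else $c_{G^{\mathbf{x}_k(\|y\|)}}(\|y\|)=0\neq k$). The lower bound $\mathbb{E}_F[\theta\mid\theta\leq \mathbf{x}_k(\|y\|)]\leq\underline u_k\leq y_{n+1}$ requires observing that at the boundary $\|y\|=\underline u_k$ the signal $G^{\mathbf{x}_k(\underline u_k)}$ is exactly ``as good as uninformative'' for the agent, so that the low atom coincides with $\underline u_k$; then strict monotonicity of $\mathbf{x}_k$ (from \autoref{lemma:pass-fail-sufficiency}) and of $x \mapsto \mathbb{E}_F[\theta\mid\theta\leq x]$ yield strict inequality for $\|y\|>\underline u_k$. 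The degenerate corner $\|y\|=\underline u_k$ forces $y_{n+1}=\underline u_k$, $y_i=0$ for all $i$, and $P(y)=0$, so the construction holds trivially regardless of the tie-breaking in $\beta^\mathcal{d}$ and $\beta^\mathcal{w}$, completing the proof.
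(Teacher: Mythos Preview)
Your proof is correct and follows essentially the same approach as the paper's: the same on-path offer $o^*(y)$, the same constant continuation $\psi\equiv y$ (the paper only pins down $\psi$ at $O^*(y)$, but your global choice is the natural completion), the same stopping rule $\beta^{\mathcal d}=\mathbbm{1}_{[m\le y_{n+1}]}$, and the same proportional mixing $q_i=y_i/\sum_j y_j$. The only cosmetic differences are that the paper packages the atom bounds $\mathbb{E}_F[\theta\mid\theta\le \mathbf{x}_k(\norm{y})]\le \underline u_k\le y_{n+1}<\mathbb{E}_F[\theta\mid\theta\ge \mathbf{x}_k(\norm{y})]$ into a separate lemma (their Lemma~A.1) and invokes it together with a companion Lemma~A.2 for the linear form of $c_{G^{\mathbf{x}_k(\norm{y})}}$, whereas you argue these bounds inline; and the paper treats the boundary $\norm{y}=\underline u_k$ uniformly via the half-open convention in the definition of $G^x$, while you split it off as a degenerate case.
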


\autoref{lemma:stationary-implement} has three important implications. First, for any subset $E\subseteq\mathcal{F}(n,k)$, each payoff profile $y\in E$ is associated with a tuple $(\alpha, \beta, \psi)$ that satisfies Points $(a)$ and $(c)$ of \autoref{def:self-generation}. Hence, to show that $E\subseteq \mathcal{S}(n,k)$, we need only show that the tuple $(\alpha, \beta, \psi)$ also satisfies Points $(b)$ of \autoref{def:self-generation} and that $Im_\psi\subseteq E$. We will use this approach to characterize (subsets of) $\mathcal{S}(n,k)$. Second, \autoref{lemma:stationary-implement} implies that even if the agent could purchase signals from multiple brokers in a given period, he would find it sub-optimal to do so.\footnote{Suppose the agent could buy a signal from more than one broker in a given period. Consider an on-path strategy profile in which all brokers offer the same deterministic pass/fail signal. Combining signals in this case would only give redundant information, so the agent would buy from at most one broker. Now consider an off-path strategy in which some Broker $i\in N$ offers a deviation contract $o_i\neq o^*(y)$. The agent may now want to combine the Broker $i$'s deviation signal with the pass/fail signal of another Broker $j\neq i$. However, in this case, Broker $i$ would have been better off by deviating to sell the combined signal, in which case, the agent would again have no need to buy from multiple brokers. Therefore, any equilibrium we construct in our model remains an equilibrium of this alternative game.} Third, because the reward function described in \autoref{lemma:stationary-implement} does not depend on the realized posterior means, our results on surplus generation and division extend to settings where the agent privately observes signal realizations.

\begin{figure}[ht]
 \begin{center}
\begin{tikzpicture}[xscale=8, yscale=4]
 \draw [->, help lines,   thick] (0,0) -- (1.05,0);
 \draw [->, help lines,   thick] (0.03,0) --(.03,1.05);
 \node[right] at (1.05,0) {$u$};
 \node[] at (0, 1) {$c$};

 \draw [dotted] (.03, 0)--(.03,.945);

  \draw[cyan, opacity=.5, domain=0.03:1,smooth,variable=\x, thick] plot ({\x},{(1-\x)^1.65});
   \draw[thick, cyan, opacity=.15](.03, .945)--(.3,0)--(1,0);
 \draw[thick, red, opacity=.6](.03, .945)--(.3,0)--(1,0);

\draw[domain=0.19:.925,smooth,variable=\x,  thick] plot ({\x},{-.52*\x+.482});
\draw[thick](.0327, .945)--(.195,.378);
\draw[thick](.92,.005)--(.965,.005);

\draw[thick, dashed] (0.03,0.17)node[left]{\footnotesize{$k$}}--(1,.17);
\draw [dotted, thick] (.6,0) node[below]{\footnotesize{$\norm{y}$}}--(.6,.17);
 \draw [dotted, thick] (.35,0) node[below]{\footnotesize{$y_{n+1}$}}--(.35,.3);
 \node [below] at (.92,0){\footnotesize{$m_2$}};
 \draw [dotted, thick] (.195,0) node[below]{\footnotesize{$m_1$}}--(.195,.383);
  \draw [dotted, thick] (.83,0) node[below]{\footnotesize{$\mathbf x_k$}}--(.83,.055);
\draw [decorate,decoration={brace,amplitude=5pt, mirror}]
(0.35,0.3) -- (0.35,.17) node [black, midway,xshift=-0.65cm]
{\footnotesize $P(y)$};
 \end{tikzpicture}
 \captionsetup{oneside,margin={0cm,0cm},justification=justified, singlelinecheck=false}
\caption{This figure depicts $c_F$ in blue, $c_{G^\varnothing }$ in red, and  $c_{G^{\mathbf x_k(\norm{y})}}$ in black. The price $P(y)$ is given by the difference between $c_{G^{\mathbf x_k(\norm{y})}}(y_{n+1})$ and $c_{G^{\mathbf x_k(\norm{y})}}(\norm{y})=k$. The figure also depicts the support of $G^{\mathbf x_k(\norm{y})}$,  $m_1\coloneqq\mathbb{E}_F[\theta|\theta\leq \mathbf x_k(\norm{y})]$ and  $m_2\coloneqq\mathbb{E}_F[\theta|\theta\geq \mathbf x_k(\norm{y})]$. }
 \label{fig:generate-feasible-payoff}
 \end{center}
 \end{figure}
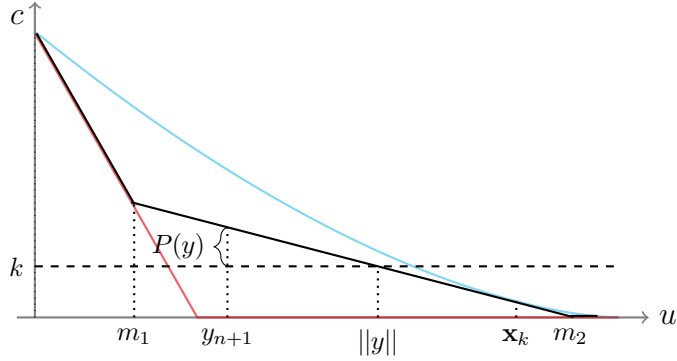

Let us now give some intuition for \autoref{lemma:pass-fail-sufficiency} and \autoref{lemma:stationary-implement} via a geometric argument. In order to generate a payoff profile $y\in\mathcal{F}(n,k)$, we must first show that it is possible to generate a total surplus  of $\norm{y}$ and give the agent only a  payoff of $y_{n+1}$ while the remaining surplus, $\norm{y}-y_{n+1}$, is captured by the brokers (collectively). A priori, it is not clear if this is possible, since an agent who earns a continuation payoff (which we have shown in \Cref{sec:single-agent} is the stopping threshold) of $y_{n+1}$ would have an incentive to stop his search earlier than an agent who earns a higher continuation payoff of $\norm{y}$. 

However, such surplus creation and division is indeed possible. In  \autoref{fig:generate-feasible-payoff}, the curves $c_F$, $c_{G^\varnothing}$, and $c_{G^{\mathbf x_k(\norm{y})}}$ are depicted in blue, red, and black respectively. It is clear from the figure that $c_{G^{\mathbf x_k(\norm{y})}}(\norm{y})=k$, which implies that $\norm{y}=\mathcal{u}_k(G^{\mathbf x_k(\norm{y})})$. In other words, if the agent observes $G^{\mathbf x_k(\norm{y})}$ for free in each period, then his continuation payoff would be $\norm{y}$, and he would optimally stop his search whenever the posterior mean quality exceeds $\norm{y}$. Notice that the pass-fail signal has binary support: a posterior mean of $m_1\coloneqq \mathbb{E}_F\big[\theta|\theta\leq \mathbf x_k(\norm{y})\big]<\norm{y}$ (``fail") and a posterior mean of $m_2\coloneqq\mathbb{E}_F\big[\theta|\theta\geq \mathbf x_k(\norm{y})\big]>\norm{y}$ (``pass"). Thus, an agent who observes $G^{\mathbf x_k(\norm{y})}$ for free in each period would stop only when he sees a ``pass" signal realization, which happens with probability $1-F\big(\mathbf x_k(\norm{y})\big)$. 

Next, suppose the agent's continuation value is lowered to $y_{n+1}\leq\norm{y}$, in which case he would optimally stop his search whenever the posterior mean quality exceeds $y_{n+1}$. However, it is still the case that the agent stops his search only when sees a ``pass," because $m_1<y_{n+1}<m_2$ (see \autoref{lemma:support}). Thus, despite having a higher incentive to quit his search, the agent's probability of stopping remains $1-F\big(\mathbf x_k(\norm{y})\big)$. 

Finally, the price is specified so that an agent who pays $P(y)$ and observes $G^{\mathbf x_k(\norm{y})}$ in each period earns a payoff of $y_{n+1}$. Since a surplus of $\norm{y}$ is generated, the remaining $\norm{y}-y_{n+1}$ is captured by the brokers. When all brokers offer $o^*(y)$, the agent is indifferent across all $n$ TIOLI offers. Hence, he can randomize his purchase decision in such a way that each broker's payoff is $y_j$, thereby generating the payoff profile $y$ as desired.

\section{Proof of main results}\label{sec:proofmain}
In this section, we offer a characterization of the equilibrium payoff set as a function of $n$ and $k$ (Propositions 1-4), and show that our main results (Theorems 1-3) follow as a consequence of this characterization.

Our first proposition fully characterizes the equilibrium payoff set for any search cost in a monopolistic setting. In particular, $\mathcal{E}(1,k)$ has a ``bang-bang" structure: either there is a unique equilibrium payoff or a folk-theorem result obtains. Furthermore, when a unique equilibrium payoff obtains, the equilibrium is efficient with the monopolist extracting the full surplus.

\begin{proposition}\label{thm:monopoly}
There exists a unique $k^*\in (0, m^\varnothing)$ such that $\mathcal{E}(1,k)=\mathcal{F}(1,k)$ for all $k\leq k^*$ and $\mathcal{E}(1,k)=\{(\bar u_k-\underline u_k,\underline u_k)\}$ for all $k>k^*$. 
\end{proposition}

The proof of the proposition, located in the Appendix, first characterizes $\mathcal{S}(1,k)$ by using a fixed-point argument, and we show that it has a bang-bang structure. Then, we show that $\mathcal{S}(1,k)$ is itself measurably self-generating. By \autoref{lemma:equilibrium-characterization}, this implies that $\mathcal{S}(1,k)=\mathcal{E}(1,k)$. Here, we provide a sketch that takes a more constructive approach. For simplicity (and just for the sketch), let us take $\mathcal{S}(1,k)=\mathcal{E}(1,k)$ as a given. 

For a given $k\in(0,m^\varnothing)$, we first define the monopolist's minimax payoff by
\begin{align*}
\nu_k\coloneqq \inf_{\beta, \psi} \hspace*{.2em} \sup_{o\in \mathcal{O}}  \hspace*{.2em} &V_1(o, \beta, \psi)\\[6pt]
\text{s.t.}\hspace*{1em} &  \beta \text{ is sequentially rational with respect to } \psi, \text{ and }\\[6pt]
& Im_\psi\subseteq \mathcal{E}(1,k).
\end{align*}
This minimax payoff diverges from the standard repeated games minimax payoff \citep{fud86} in two key ways. First, unlike repeated simultaneous-move games, here the agent moves after the brokers. Thus, the agent's action in the minimax formulation must be sequentially rational, as captured by the first constraint. Second, because the contracting problem we study is a stopping game, it lacks a well-defined stage game. As a result, $\nu_k$ is not a minimax of the stage-game payoffs but instead depends on the continuation payoffs that can be used to reward or punish in an equilibrium, as captured by the second constraint.

Despite these differences, we may still interpret $\nu_k$ as the payoff the monopolist can guarantee herself when she faces an adversarial (but sequentially rational) agent and unfavorable equilibrium continuation payoffs. We shall now show that any $y\in\mathcal{F}(1,k)$ such that $y_1\geq \nu_k$ is an equilibrium payoff. In other words, the equilibrium payoff set is characterized by the convex hull of three payoff profiles: $(\bar u_k-\underline u_k, \underline u_k)$, $(\nu_k,\underline u_k)$, and $(\nu_k, \bar u_k-\nu_k)$, which is a triangle as depicted by the shaded red set in \autoref{fig:lemma 6}.

\begin{figure}[ht]
 \begin{center}
\begin{tikzpicture}[scale=.8]
[xscale=1.75, yscale=1.5]
    \draw[->, thick] (0,0) -- (4,0) node[right] {$y_1$};
    \draw[->, thick] (0,0) -- (0,4) node[above] {$y_{n+1}$};

    \draw[thick] (0,.5) node[left] {$\underline u_k$} -- (3,.5);
    \draw[thick] (3,.5) -- (0,3.5) node[left] {$\bar u_k$};
    \draw[dotted] (3,0) node[below] {$\bar u_k - \underline u_k$} -- (3,0.4);
    \draw[thick, cyan, opacity=.15] (1,0) -- (1,4);
     \draw[thick, red, opacity=.4] (1,0)  -- (1,4);
      \draw[thick] (1,0) node[below] {$\nu_k$} -- (1,0);
    
     \draw[fill=cyan, opacity=.15] (0,.5) -- (0,3.5) -- (3,.5) --cycle;
         \draw[fill=red, opacity=.4] (3,.5) -- (1,2.5) -- (1,.5) -- cycle;

\end{tikzpicture}
\end{center}
\captionsetup{oneside,margin={0cm,0cm},justification=justified, singlelinecheck=false}
\caption{The shaded simplex (i.e. the union of the red and blue areas) is the feasible set $\mathcal{F}(1,k)$, and the red area is the closure of a hypothetical equilibrium payoff set $\mathcal{E}(1,k)$ for an arbitrary value $\nu_k$.} 
 \label{fig:lemma 6}
\end{figure}

Our approach is constructive in that for each payoff profile in the equilibrium payoff set, we provide a strategy profile that supports it. Of course, we make no claim that each outcome in the equilibrium payoff set must be supported by the strategy profile we construct. Nevertheless, the profiles we construct are qualitatively similar to one another, allowing for a ``recipe'' on achieving any payoff profile in the equilibrium payoff set.
 
To that end, how can we obtain $\hat y$ in the red triangle of \autoref{fig:lemma 6} as an equilibrium payoff profile? By \autoref{lemma:stationary-implement}, we can focus on stationary on-path strategies: each period, the monopolist proposes the TIOLI offer $o^*(\hat y)$, the agent accepts, and the continuation payoff remains $\hat y$. Thus, the agent stops searching if he observes a posterior mean $m > \hat y_{n+1}$ and continues otherwise. 

Any deviations by the agent following a proposal of $o^*(\hat y)$ are ignored. In contrast, if the monopolist deviates from proposing $o^*(\hat y)$, our construction maximally punishes her, the scope of which depends on the value of $\nu_k$. If the agent rejects the deviation offer, play transitions to a phase where the agent is rewarded with a continuation payoff profile $y^r = (\nu_k, \bar u_k - \nu_k)$ (``$r$" for reject). If the agent accepts the deviation offer, both are punished and the continuation payoff profile becomes $y^a = (\nu_k, \underline u_k)$ (``$a$" for accept). The continuation payoff profiles following deviation, $y^r$ and $y^a$, are depicted in \autoref{fig:construct-onpath}.

Of course, $y^r$ and $y^a$ themselves must also be equilibrium payoff profiles. To support $y^r$, we again use stationary on-path strategies: the monopolist repeatedly proposes $o^*(y^r)$. If she deviates and the agent rejects, play remains at $y^r$; if the agent accepts, play transitions to $y^a$ (see \autoref{fig:construct-offpath-good}). Similarly, to support $y^a$, the monopolist proposes $o^*(y^a)$ each period; if he deviates and the agent rejects, play transitions to $y^r$, while if the agent accepts, play remains at $y^a$ (see \autoref{fig:construct-offpath-bad}).

\begin{figure}[ht]
\begin{center}
\hspace*{-2em}
\captionsetup[subfigure]{oneside,margin={1cm,-1cm}}
\begin{subfigure}[t]{0.3\textwidth}
\centering
\begin{tikzpicture}[scale=.8]
    \draw[->, thick] (0,0) -- (4,0) node[right] {$y_1$};
    \draw[->, thick] (0,0) -- (0,4) node[above] {$y_{n+1}$};

    \draw[thick] (0,.5) node[left] {$\underline u_k$} -- (3,.5);
    \draw[thick] (3,.5) -- (0,3.5) node[left] {$\bar u_k$};
    \draw[dotted] (3,0) node[below] {$\bar u_k - \underline u_k$} -- (3,0.4);
    \draw[thick, cyan, opacity=.15] (1,0) -- (1,4);
     \draw[thick, red, opacity=.4] (1,0)  -- (1,4);
      \draw[thick] (1,0) node[below] {$\nu_k$} -- (1,0);
    
     \draw[fill=cyan, opacity=.15] (0,.5) -- (0,3.5) -- (3,.5) --cycle;
         \draw[fill=red, opacity=.4] (3,.5) -- (1,2.5) -- (1,.5) -- cycle;
   \filldraw[black] (1.35,1.35) circle (1pt) node[black, right]{$\hat y$};
    \filldraw[black] (1,2.5) circle (1pt) node[black, above right=-.25em]{$y^r$};
    \filldraw[black] (1,.5) circle (1pt) node[black, below right=-.25em]{$y^a$};
 \draw[->,thick]  (1.35,1.35) to  [looseness=2, out= 150, in=200](.98,2.5)  ;
  \draw[->,ultra thick,dash dot] (1.35,1.35) to [looseness=2, out= 200, in=150](.98,.57);
\end{tikzpicture}
\caption{Deviation from $o^*(\hat y)$}
\label{fig:construct-onpath}
\end{subfigure}
\hspace{2em}
\begin{subfigure}[t]{0.3\textwidth}
\centering
\begin{tikzpicture}[scale=.8]
    \draw[->, thick] (0,0) -- (4,0) node[right] {$y_1$};
    \draw[->, thick] (0,0) -- (0,4) node[above] {$y_{n+1}$};

    \draw[thick] (0,.5) node[left] {$\underline u_k$} -- (3,.5);
    \draw[thick] (3,.5) -- (0,3.5) node[left] {$\bar u_k$};
    \draw[dotted] (3,0) node[below] {$\bar u_k - \underline u_k$} -- (3,0.4);
    \draw[thick, cyan, opacity=.15] (1,0) -- (1,4);
     \draw[thick, red, opacity=.4] (1,0)  -- (1,4);
      \draw[thick] (1,0) node[below] {$\nu_k$} -- (1,0);
    
     \draw[fill=cyan, opacity=.15] (0,.5) -- (0,3.5) -- (3,.5) --cycle;
         \draw[fill=red, opacity=.4] (3,.5) -- (1,2.5) -- (1,.5) -- cycle;
    \filldraw[black] (1,2.5) circle (1pt) node[black, above right=-.25em]{$y^r$};
    \filldraw[black] (1,.5) circle (1pt) node[black, below right=-.25em]{$y^a$};
   \draw[->,thick]  (.98,2.5) to [looseness=45, out= 110, in=5] (1.1,2.45)  ;
   \draw[->,ultra thick, dash dot] (1,2.5) to[looseness=1, out= 200, in=150] (.98,.57);
\end{tikzpicture}
\caption{Deviation from $o^*(y^r)$}
\label{fig:construct-offpath-good}
\end{subfigure}
\hspace*{2em}
\begin{subfigure}[t]{0.3\textwidth}
\centering
\begin{tikzpicture}[scale=.8]
    \draw[->, thick] (0,0) -- (4,0) node[right] {$y_1$};
    \draw[->, thick] (0,0) -- (0,4) node[above] {$y_{n+1}$};

    \draw[thick] (0,.5) node[left] {$\underline u_k$} -- (3,.5);
    \draw[thick] (3,.5) -- (0,3.5) node[left] {$\bar u_k$};
    \draw[dotted] (3,0) node[below] {$\bar u_k - \underline u_k$} -- (3,0.4);
    \draw[thick, cyan, opacity=.15] (1,0) -- (1,4);
     \draw[thick, red, opacity=.4] (1,0)  -- (1,4);
      \draw[thick] (1,0) node[below] {$\nu_k$} -- (1,0);
    
     \draw[fill=cyan, opacity=.15] (0,.5) -- (0,3.5) -- (3,.5) --cycle;
         \draw[fill=red, opacity=.4] (3,.5) -- (1,2.5) -- (1,.5) -- cycle;
    \filldraw[black] (1,2.5) circle (1pt) node[black, above right=-.25em]{$y^r$};
    \filldraw[black] (1,.5) circle (1pt) node[black, below right=-.25em]{$y^a$};

    \draw[->,ultra thick, dash dot]  (1,.49) to [looseness=45, out= 240, in=125] (.98,.57)  ;
   \draw[->,thick] (1,.5) to  [looseness=1, out= 45, in=295](1.02,2.4);
\end{tikzpicture}
\caption{Deviation from $o^*(y^a)$.}
\label{fig:construct-offpath-bad}
\end{subfigure}
\end{center}
 \captionsetup{oneside,margin={0cm,0cm},justification=justified, singlelinecheck=false}
\caption{This figure illustrates the change in continuation payoff profiles following a monopolist's deviation to an off-path TIOLI offer. The solid and dashed lines indicate the change if the agent rejects or accepts the deviation offer, respectively.}
\label{fig:construct}
\end{figure}

Notice that when the equilibrium payoff profile is either  $y^r$ or $y^a$,  $\nu_k$ is the monopolist's on-path payoff as well as her punishment continuation payoff if she deviates. This dual role of $\nu_k$ is key to the fixed-point characterization of the equilibrium payoff set in our proof in the Appendix. This approach is different from the standard APS construction, which defines a fixed-point operator over subsets of $\mathcal{F}(1,k)$, while our approach seeks for a single-dimensional fixed-point $\nu_k$ over an interval.

To understand the bang-bang structure, consider the construction in \autoref{fig:construct} for an arbitrary $\nu_k > 0$. The agent can credibly discipline the monopolist only if the continuation payoff from rejecting a deviation ($y^r_{n+1}$) sufficiently exceeds the payoff from accepting it ($y^a_{n+1}$). Graphically, the agent's ``credibility'' is measured by the vertical gap between $y^r$ and $y^a$ at $\nu_k$, which is decreasing in $\nu_k$. Hence, if the agent’s threat is credible at some $\nu_k > 0$, it is strictly more credible at $\nu'_k=0$; and conversely, if his threat is non-credible at $\nu_k=0$, then it is non-credible for any other $\nu_k$. This bang-bang nature of credibility forces the solution to one of the corners: either $\nu_k=0$, in which case the entire feasible set can be supported in equilibrium, or $\nu_k=\bar u_k-\underline u_k$, in which case there is a unique equilibrium outcome that features full surplus extraction by the monopolist.

We present a graphical depiction of the equilibrium set for different values of $k$ in \autoref{fig:lemma 7}.  Intuitively, the agent has little bargaining power in a monopolistic setting: the singular broker is the only source of information, and she moves first in each period by making a TIOLI offer. The agent can threaten to reject any offer that does not guarantee him a large enough share of the surplus, but such threats are not credible unless they are sequentially rational. In particular, such a threat is credible only if the agent is willing to forgo an informative signal about the currently sampled good in exchange for a high continuation payoff.

\begin{figure}[ht]
\begin{center}
\hspace*{-5em}
\captionsetup[subfigure]{oneside,margin={1cm,-1cm}}\begin{subfigure}[t]{0.4\textwidth}
\centering
\begin{tikzpicture}[scale=.8]
    \draw[->, thick] (0,0) -- (4,0) node[right] {$y_1$};
    \draw[->, thick] (0,0) -- (0,4) node[above] {$y_{n+1}$};

    \draw[thick] (0,.2) node[left] {$\underline u_{k'}$} -- (2,.2);
    \draw[thick] (2,.2) -- (0,2) node[left] {$\bar u_{k'}$};
    \draw[dotted] (2,0) node[below] {$\bar u_{k'} - \underline u_{k'}$} -- (2,.2);
    
    \draw[dashed, fill=cyan, opacity=.15] (0,.2) -- (2,.2) -- (0,2) -- cycle;

         \fill[cyan, opacity =.15] (2, 0.2) circle (3pt);
    \fill[red, opacity =.4] (2, 0.2) circle (3pt);

\end{tikzpicture}
\hspace*{-18mm}\caption{Equilibrium set for $k'> k^*$}
\label{fig:lemma 7_a}
\end{subfigure} 
\hspace*{4em}
\captionsetup[subfigure]{oneside,margin={5cm,-6cm},justification=centering} 
\begin{subfigure}[t]{0.3\textwidth}
\centering
\raisebox{0cm}{ 
\begin{tikzpicture}[scale=1]
[xscale=1.75, yscale=1.5]
    \draw[->, thick] (0,0) -- (4,0) node[right] {$y_1$};
    \draw[->, thick] (0,0) -- (0,4) node[above] {$y_{n+1}$};

    \draw[thick] (0,.5) node[left] {$\underline u_{k''}$} -- (3,.5);
    \draw[thick] (3,.5) -- (0,3.5) node[left] {$\bar u_{k''}$};
    \draw[dotted] (3,0) node[below] {$\bar u_{k''} - \underline u_{k''}$} -- (3,0.4);
    
        \draw[fill=cyan, opacity=.15] (3,.5) -- (0,3.5) -- (0,.5) -- cycle;
    \draw[fill=red, opacity=.4] (3,.5) -- (0,3.5) -- (0,.5) -- cycle;

\end{tikzpicture}
}
\vspace*{-7mm}\caption{Equilibrium set for $k''\leq k^*$}
\label{fig:lemma 7_b}
\end{subfigure}
 \captionsetup{justification=justified,singlelinecheck=false}
 \caption{Panel (a) presents the feasible payoff set $\mathcal{F}(1,k')$ for $k'>k^*$, shaded in blue, and the equilibrium payoff set $\mathcal{E}(1,k')$ which is given by the unique broker optimal outcome $(\bar u_{k'}-\underline u_{k'},\underline u_{k'})$, shaded in red. Panel (b) presents the equilibrium payoff set $\mathcal{E}(1,k'')$ for $k''\leq k^*$, shaded in red, which is equal to the entire feasible set $\mathcal{F}(1,k'').$ Note that because $k''<k',$ $\mathcal{F}(1,k'')\neq \mathcal{F}(1,k')$ (see \autoref{lemma:surplus-compstat-cost} for details).}
\label{fig:lemma 7}
\end{center}
\end{figure}
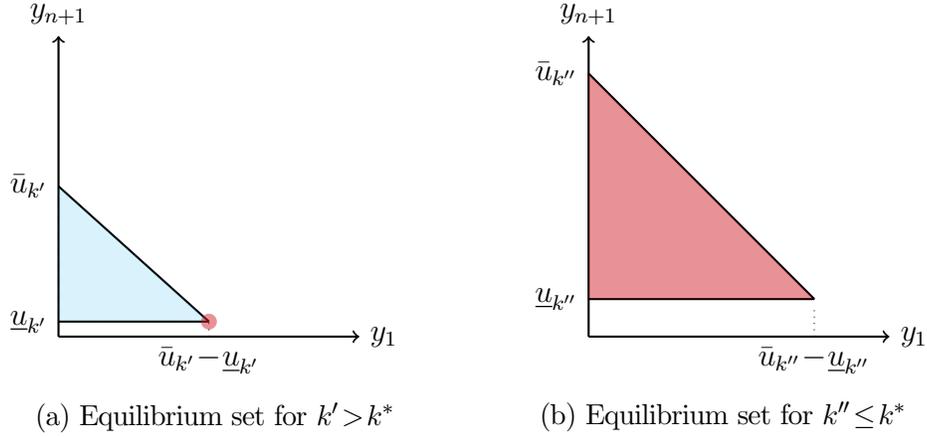


When $k$ is large (i.e., $k>k^*$), even a continuation payoff of $\bar u_k$, which is the highest continuation payoff the agent can earn, is still too low to compensate the agent for forgoing an informative signal about the currently sampled good. In this case, the agent's threat is not credible, leaving him in a weak bargaining position relative to the broker. Consequently, there is a unique equilibrium payoff that features full surplus extraction by the broker. 

Conversely, when $k$ is low (i.e., $k\leq k^*$), $\bar u_k$ is large enough to compensate the agent for forgoing an informative signal, making his threats credible. In this case, any surplus split between the agent and the broker can be sustained in equilibrium, which yields a folk-theorem result. The value $k^*$ is the unique value that balances the loss from forgoing information today with the gain from a high continuation payoff.

How does the above intuition change in a competitive setting? When  $n>1$, the agent can now acquire information from multiple sources, which puts him in a stronger bargaining position. Specifically, rejecting the TIOLI offer from one broker does not entail altogether forgoing the opportunity to learn about the currently sampled good. Hence, the threats the agent uses in a monopolistic setting become even more effective in securing him a higher payoff in a competitive setting. This improved efficacy of the agent's threats has two implications: $(i)$ There exists a Bertrand equilibrium outcome in which at least two brokers offer full information to the agent for free in each period, and $(ii)$ when a folk-theorem result obtains in a monopolistic setting, it also obtains in a competitive setting.

The next two propositions prove these two claims. With some abuse of notation, for each $k\in (0,m^\varnothing)$ and $n\geq 1$, let $y^A\in\mathcal{F}(n,k)$ denote the \emph{autarky payoff profile} where $y_i^A=0$ for all $i\in N$ and $y^A_{n+1}=\underline u_k$, and let $y^B\in\mathcal{F}(n,k)$ denote the \emph{Bertrand payoff profile} where $y_i^B=0$ for all $i\in N$ and $y^B_{n+1}=\bar u_k$.

\begin{proposition}\label{thm:bertrand}
For each $k\in (0, m^\varnothing)$ and $n>1$, the Bertrand payoff profile $y^B\in\mathcal{E}(n,k)$.  
\end{proposition}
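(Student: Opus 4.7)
The plan is to apply \autoref{lemma:equilibrium-characterization} by showing that the singleton $E \coloneqq \{y^B\}$ is itself self-generating, which immediately delivers $y^B \in \mathcal{E}(n,k)$. By \autoref{lemma:stationary-implement}, there exists a tuple $(\alpha, \beta, \psi)$ with $\alpha_i(o^*(y^B))=1$ for every $i\in N$, $\beta$ sequentially rational with respect to $\psi$, and $\psi(O^*(y^B), w, m) = y^B$ for all $(w,m)\in W\times\Theta$, that generates $y^B$. This already delivers Points $(a)$ and $(c)$ of \autoref{def:self-generation}. I extend $\psi$ so that $\psi(O,w,m) = y^B$ for every profile $O\in\mathcal{O}^n$, which gives Point $(d)$ because $Im_\psi = \{y^B\} = E$. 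The work is therefore concentrated in verifying Point $(b)$: no broker has a profitable unilateral deviation from $o^*(y^B) = (0, G^{\mathbf{x}_k(\bar u_k)})$.

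The key input is the value of the agent's outside option. Since the continuation payoff following any history is $\psi_{n+1} \equiv \bar u_k$, the agent's sequentially rational stopping threshold is exactly $\bar u_k$, so for any offer $(p, G)$ accepted today his payoff is $-k - p + \bar u_k + c_{G}(\bar u_k)$. Applied to any non-deviating broker $j$, whose offer has price $0$ and signal $G^{\mathbf{x}_k(\bar u_k)}$, \autoref{lemma:pass-fail-sufficiency} yields $c_{G^{\mathbf{x}_k(\bar u_k)}}(\bar u_k) = k$, so the agent's payoff from purchasing from $j$ is exactly $\bar u_k$.

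Now consider a unilateral deviation by broker $i$ to some $\tilde o_i = (\tilde p_i, \tilde G_i)$. Because $n>1$, there remains at least one non-deviating broker $j\neq i$, so the agent can still secure $\bar u_k$. The agent's payoff from purchasing from $i$ instead is $-k - \tilde p_i + \bar u_k + c_{\tilde G_i}(\bar u_k)$, and since $\tilde G_i \in \mathcal{G}(F)$ is a mean-preserving contraction of $F$, we have $c_{\tilde G_i}(\bar u_k)\leq c_F(\bar u_k) = k$. Hence the agent's payoff from $i$ is at most $\bar u_k - \tilde p_i$. If $\tilde p_i > 0$ the agent strictly prefers $j$, so sequential rationality of $\beta$ forces $\beta^{\mathcal{w}}(i\,|\,\tilde o_i, O^*_{-i}(y^B)) = 0$, and if $\tilde p_i = 0$ then broker $i$'s revenue is zero regardless of the purchase probability. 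Since $\psi_i \equiv 0$, broker $i$'s total deviation payoff is $\tilde p_i \cdot \beta^{\mathcal{w}}(i\,|\,\tilde o_i, O^*_{-i}(y^B))\leq 0 = y_i^B$, which is exactly the no-profitable-deviation condition in Point $(b)$.

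The main subtlety, and where the hypothesis $n>1$ is essential, is precisely this outside-option step: with a single broker the agent would have to resort to the null offer, which delivers strictly less than $\bar u_k$ and would leave positive rents for a deviator to extract. Once $n>1$, the argument reduces to the standard Bertrand logic, and all four conditions of \autoref{def:self-generation} hold with $E = \{y^B\}$, so $\{y^B\}$ is self-generating and \autoref{lemma:equilibrium-characterization} yields $y^B \in \mathcal{E}(n,k)$.
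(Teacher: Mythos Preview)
Your proof is correct and takes essentially the same self-generation route as the paper, showing that $\{y^B\}$ is self-generating and then invoking \autoref{lemma:equilibrium-characterization}. The only cosmetic difference is that the paper takes the on-path offer to be $(0,F)$ directly rather than using \autoref{lemma:stationary-implement} to obtain $o^*(y^B)=(0,G^{\mathbf x_k(\bar u_k)})$; both work because $c_F(\bar u_k)=c_{G^{\mathbf x_k(\bar u_k)}}(\bar u_k)=k$, so the agent's outside option from a non-deviating broker is $\bar u_k$ either way.
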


\begin{proposition}\label{thm:competition}
Let $k^*$ be the same threshold as in \autoref{thm:monopoly}. For all $k\leq  k^*$ and all $n>1$, $\mathcal{E}(n,k)=\mathcal{F}(n,k)$.
\end{proposition}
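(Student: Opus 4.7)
The plan is to invoke \autoref{lemma:equilibrium-characterization} and reduce the claim to showing that $\mathcal{F}(n,k)$ is self-generating whenever $k \leq k^*$ and $n > 1$; since $\mathcal{E}(n,k) \subseteq \mathcal{F}(n,k)$ holds trivially by feasibility, self-generation delivers the reverse inclusion and hence equality. To verify self-generation, I would fix an arbitrary $y \in \mathcal{F}(n,k)$ and construct a tuple $(\alpha,\beta,\psi)$ that supports $y$ with $Im_\psi \subseteq \mathcal{F}(n,k)$.

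On path, I would apply \autoref{lemma:stationary-implement}: every broker offers $o^*(y)$, the agent's purchase probabilities are calibrated so that each broker $i$ earns $y_i$, and the on-path continuation is $\psi(O^*(y),w,m) = y$. Off path, for any offer profile $O$ in which exactly one broker $i$ deviates, I would specify that the agent's best response is to reject $i$'s offer and accept a remaining $o^*(y)$ offer, with continuations $\psi(O,w,m) = y^r$ when $w \neq i$ and $\psi(O,i,m) = y^a$ when $w = i$. Here $y^r_i = y^a_i = 0$ (punishing the deviator), $y^r_{n+1} = \bar u_k$ (rewarding the agent for rejecting), and $y^a_{n+1} = \underline u_k$ (punishing the agent for accepting), with the remaining broker coordinates filled so that $y^r, y^a \in \mathcal{F}(n,k)$. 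For offer profiles with multiple simultaneous deviations, I would assign the Bertrand continuation $y^B$ from \autoref{thm:bertrand}. All continuations lie in $\mathcal{F}(n,k)$, so condition $(d)$ of \autoref{def:self-generation} is satisfied.

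Given this construction, broker $i$'s deviation yields her $0$ in the current period and $y^r_i = 0$ in continuation, hence a total payoff of $0 \leq y_i$, provided rejection is the agent's best response. The main step is therefore to verify that the agent is sequentially rational following a deviation. Using the definition of $P(y)$ and the mapping $\mathbf x_k$ from \autoref{lemma:pass-fail-sufficiency}, this reduces to a scalar inequality comparing the agent's value from rejecting $i$ and accepting a remaining $o^*(y)$ offer (with continuation $\bar u_k$) against the value from accepting the best possible deviation $(\tilde p, \tilde G)$ (with continuation $\underline u_k$). I expect the argument to parallel the monopoly proof of \autoref{lemma:monopoly-fixed-point} establishing $\nu_k = 0$ for $k \leq k^*$, since the agent's inside option of a non-deviating broker's $o^*(y)$ offer only strengthens rejection credibility relative to the monopoly case.

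The hard part is this sequential rationality check: verifying the inequality for every possible deviation $(\tilde p, \tilde G) \in \mathcal{O}$, including the most informative free offer $(0, F)$, and specifying a tie-breaking rule under which the agent's behavior is consistent across all histories. The key technical content is to show that the same threshold $k^*$ that governs the monopoly folk theorem suffices here, with the inside option in the competitive case being strictly more valuable than the null offer available under monopoly, so competition never requires a stricter condition.
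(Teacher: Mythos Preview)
Your proposal is correct and follows essentially the same route as the paper: show $\mathcal{F}(n,k)$ is self-generating by using \autoref{lemma:stationary-implement} on path and a reward/punishment continuation structure off path, then reduce Point~$(b)$ to a willingness-to-pay inequality that is controlled by the same $\Phi(k)\leq 0$ condition defining $k^*$ in the monopoly proof. The only cosmetic differences are that the paper's punishment continuations $y^A,y^B$ zero out \emph{all} brokers (not just the deviator), and the paper does not distinguish single from multiple deviations in its $\psi$; neither affects the argument. One small imprecision: after a deviation, the agent's sequentially rational choice among the non-deviating options may be the null offer rather than $o^*(y)$ (depending on whether $c_{G^{\mathbf{x}_k(\norm{y})}}(\bar u_k)-P(y)\geq c_{G^\varnothing}(\bar u_k)$), so the paper takes the max of the two---but either way the deviator is rejected and earns zero, so your conclusion stands.
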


We have thus far shown the agent's increased bargaining power when $k\leq k^*$ gives rise to a folk-theorem result, regardless of the underlying market structure. The following result considers the effects of the agent's bargaining power when the search cost satisfies $k>k^*$ and $n>1$. Even when a monopolistic setting sustains a unique equilibrium payoff, a competitive setting may sustain uncountably many equilibrium outcomes. 

To state our result, let us first define the following subset of feasible payoffs: given a constant $\epsilon>0$, let
\[
\mathcal{F}^\epsilon(n,k)\coloneqq \mathcal{F}(n,k)\,\cap\,\{y\in\mathbb{R}_+^{n+1}: \norm{y}-y_{n+1}\leq \epsilon\},
\]
which is a polytope. Because $\mathcal{F}(n,k)$ has a non-empty interior, notice that for any $\epsilon >0$, the subset $\mathcal{F}^\epsilon(n,k)$ also has a non-empty interior. Moreover, notice that both the autarky payoff profile $y^A$ and the Bertrand payoff profile $y^B$ satisfy $\norm{y^A}-y^A_{n+1}=\norm{y^B}-y^B_{n+1}=0$, and therefore, $y^A,y^B\in\mathcal{F}^\epsilon(n,k)$ for all $\epsilon > 0$. 

\begin{proposition}\label{thm:multiple}
Let $k^*$ be the same threshold as in \autoref{thm:monopoly}. There exists a threshold $k^{**}\in (k^*,m^\varnothing)$ and a constant $\epsilon>0$  such that for each $k\leq k^{**}$ and each $n>1$, $\mathcal{F}^\epsilon(n,k)\subseteq\mathcal{E}(n,k)$.
\end{proposition}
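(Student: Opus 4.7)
The plan is to adapt the proof of \autoref{thm:competition}, showing that $\mathcal{F}^\epsilon(n,k)$ is self-generating for suitable $\epsilon>0$ and $k^{**}>k^*$; by \autoref{lemma:equilibrium-characterization}, this yields $\mathcal{F}^\epsilon(n,k)\subseteq\mathcal{E}(n,k)$. For each $y\in\mathcal{F}^\epsilon(n,k)$ I employ the same tuple $(\alpha,\beta,\psi)$ as in \eqref{eq:cont}: on-path each broker offers $o^*(y)$, while after a unilateral broker deviation the continuation is $y^A$ if the agent accepts the deviator and $y^B$ otherwise. Conditions (a) and (c) of \autoref{def:self-generation} follow from \autoref{lemma:stationary-implement}, and since $y^A$ and $y^B$ both satisfy $\norm{y}-y_{n+1}=0\le\epsilon$ they lie in $\mathcal{F}^\epsilon(n,k)$, so condition (d) holds as well. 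The crux is condition (b).

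As in \autoref{thm:competition}, condition (b) reduces to $\mathcal{P}_k(G;y)=0$ for every $G\in\mathcal{G}(F)$ and $y\in\mathcal{F}^\epsilon(n,k)$. Taking the worst case $G=F$ (since $c_G\le c_F$ pointwise), this is equivalent to
\[
\phi_y(k)\coloneqq c_F(\underline u_k)-(\bar u_k-\underline u_k)-\max\bigl\{c_{G^{\mathbf{x}_k(\norm{y})}}(\bar u_k)-P(y),\ c_{G^\varnothing}(\bar u_k)\bigr\}\le 0.
\]
Because $c_{G^{\mathbf{x}_k(\norm{y})}}$ is $1$-Lipschitz, $P(y)=c_{G^{\mathbf{x}_k(\norm{y})}}(y_{n+1})-k\le \norm{y}-y_{n+1}\le\epsilon$ for $y\in\mathcal{F}^\epsilon$. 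Using the pass-fail formula \eqref{eq:pass-fail-function}, a direct calculation shows that $\sup_{y\in\mathcal{F}^\epsilon(n,k)}\phi_y(k)$ converges, as $\epsilon\to 0$, to
\[
\phi_A(k)\coloneqq c_F(\underline u_k)-(\bar u_k-\underline u_k)-c_{G^{\mathbf{x}_k(\underline u_k)}}(\bar u_k),
\]
the value of $\phi_y$ at $y=y^A$. Since $\phi_y(k)$ depends on $y$ only through $(\norm{y},y_{n+1})$, the bound is automatically uniform in $n>1$.

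The essential observation is that the monopoly threshold $k^*$ from \autoref{thm:monopoly} is characterized by $\phi_M(k^*)=0$, where $\phi_M(k)\coloneqq c_F(\underline u_k)-(\bar u_k-\underline u_k)-c_{G^\varnothing}(\bar u_k)$. Intuitively, $\phi_M$ is the monopoly analog of $\phi_A$: a deviating monopolist leaves the agent with only the null offer as alternative, so $c_{G^\varnothing}(\bar u_k)$ replaces $c_{G^{\mathbf{x}_k(\underline u_k)}}(\bar u_k)$ inside the max. Because $\mathbf{x}_k(\underline u_k)<1$ by \autoref{lemma:pass-fail-sufficiency}, the pass-fail $G^{\mathbf{x}_k(\underline u_k)}$ is strictly more informative than $G^\varnothing$, so $c_{G^{\mathbf{x}_k(\underline u_k)}}(\bar u_k)>c_{G^\varnothing}(\bar u_k)$ and therefore $\phi_A(k^*)<\phi_M(k^*)=0$. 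This strict gap reflects the agent's extra inside option under competition, which strictly relaxes the no-deviation condition at $y^A$ relative to monopoly. Continuity of $\phi_A$ in $k$, together with the $O(\epsilon)$ slack coming from the Lipschitz bound on $P(y)$, then allows one to pick $k^{**}\in (k^*,m^\varnothing)$ and $\epsilon>0$ such that $\sup_{y\in\mathcal{F}^\epsilon(n,k)}\phi_y(k)\le 0$ for all $k\in (k^*,k^{**}]$ and all $n>1$. For $k\le k^*$, \autoref{thm:competition} already gives the conclusion.

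The main obstacle is the uniform control of $\sup\phi_y$ over $\mathcal{F}^\epsilon(n,k)$. While the binding constraint intuitively sits at $y=y^A$, the $\epsilon$-slack and the case split on which term is active inside the max in $\phi_y$ require a delicate perturbation analysis; in particular, one must verify that for $\epsilon$ sufficiently small the first term dominates throughout $\mathcal{F}^\epsilon(n,k)$ so that the worst case does not collapse onto $\phi_M$ (which, unlike $\phi_A$, fails for $k>k^*$), and that the thresholds $k^{**}$ and $\epsilon$ can be selected independently of $n$.
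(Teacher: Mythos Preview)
Your proposal is correct and follows essentially the same route as the paper: show $\mathcal{F}^\epsilon(n,k)$ is self-generating using the tuple from \eqref{eq:cont}, reduce condition~(b) to $\mathcal{P}_k(F;y)=0$, and exploit the strict gap $c_{G^{\mathbf{x}_k(\underline u_k)}}(\bar u_k)>c_{G^\varnothing}(\bar u_k)$ at $k=k^*$ (your $\phi_A(k^*)<\phi_M(k^*)=0$) together with continuity in $k$. The paper's \autoref{lemma:epsilon} resolves the ``obstacles'' you flag by making the concrete choice $\epsilon\coloneqq\inf_{k\in(k^*,m^\varnothing)}c_{G^{\mathbf{x}_k(\underline u_k)}}(\bar u_k)>0$ and then bounding $\mathcal{P}_k(F;y)\le\bigl(\Phi(k)-\epsilon F(\bar u_k)\bigr)^+$ directly (using $P(y)\le(1-F(\bar u_k))\epsilon$ rather than just $P(y)\le\epsilon$), which automatically forces the first branch of the $\max$ and delivers a single $\epsilon$ uniform in both $k$ and $n$.
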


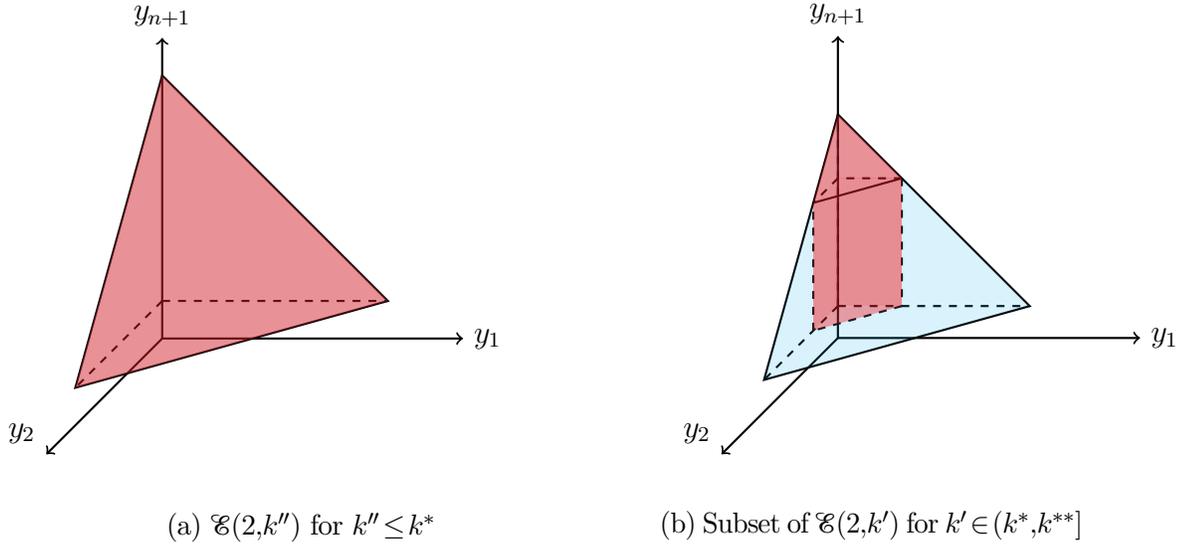
\begin{figure}[ht]
\begin{center}
\hspace*{-5.5em}
\captionsetup[subfigure]{oneside,margin={1cm,-1cm}}\begin{subfigure}[t]{0.4\textwidth}
\centering
\begin{tikzpicture}[scale=1]
     \coordinate (A) at (0, .5, 0);
     \coordinate (B) at (3, .5, 0);
     \coordinate (C) at (0, 3.5, 0);
     \coordinate (D) at (0,.5, 3);

     \draw[thick] (B) -- (C) --(D) -- cycle;

    \draw[thick, dashed] (A) -- (B);
    \draw[thick, dashed] (A) -- (D);

     \draw[->, thick] (0,0,0) -- (4,0,0) node[right] {$y_1$};
     \draw[->, thick] (0,0,0) -- (0,4,0) node[above] {$y_{n+1}$};
     \draw[->, thick] (0,0,0) -- (0,0,4) node[above left] {$y_2$};


      \fill[cyan, opacity=0.15] (B) -- (C) -- (D) -- cycle;
    \fill[red, opacity=0.4] (B) -- (C) -- (D) -- cycle;
\end{tikzpicture}
\hspace*{-18mm}\captionsetup{oneside,margin={.5cm,0cm}}\caption{$\mathcal{E}(2,k'')$ for $k''\leq k^*$}
\label{fig:comp set a}
\end{subfigure} 
\hspace*{6em}
\captionsetup[subfigure]{oneside,margin={5cm,-6cm},justification=centering} 
\begin{subfigure}[t]{0.33\textwidth}
\centering
\raisebox{0cm}{ 
\begin{tikzpicture}[scale=.85]
    \coordinate (A) at (0, .5, 0);
    \coordinate (B) at (3, .5, 0);
    \coordinate (C) at (0, 3.5, 0);
    \coordinate (D) at (0, .5, 3);
    \coordinate (E) at (1, .5, 0);
    \coordinate (F) at (0, .5, 1);
    \coordinate (A2) at (0, 2.5, 0);  
    \coordinate (E2) at (1, 2.5, 0);  
    \coordinate (F2) at (0, 2.5, 1);  

    \draw[thick] (B) -- (C) -- (D) -- cycle;

    \draw[thick, dashed] (A) -- (B);
    \draw[thick, dashed] (A) -- (D);
    \draw[thick, dashed] (A) -- (C);
     \draw[thick] (E2) -- (F2);
        \draw[thick, dashed] (E2) -- (A2);
           \draw[thick, dashed] (A2) -- (F2);
              \draw[thick, dashed] (E) -- (F);
                 \draw[thick, dashed] (E2) -- (E);
                    \draw[thick, dashed] (F) -- (F2);

    \draw[->, thick] (0,0,0) -- (4.7272,0,0) node[right] {$y_1$};
    \draw[->, thick] (0,0,0) -- (0,4.7272,0) node[above] {$y_{n+1}$};
    \draw[->, thick] (0,0,0) -- (0,0,4.7272) node[above left] {$y_2$};


    \fill[cyan, opacity=0.15] (B) -- (C) -- (D) -- cycle;

    \fill[red, opacity=0.15] (A) -- (E) -- (F) -- cycle;  
    \fill[red, opacity=0.23] (A2) -- (E2) -- (F2) -- cycle;  
    \fill[red, opacity=0.15] (A) -- (A2) -- (F2) -- (F) -- cycle;  
    \fill[red, opacity=0.15] (E) -- (E2) -- (A2) -- (A) -- cycle;  
    \fill[red, opacity=0.15] (F) -- (F2) -- (E2) -- (E) -- cycle;  

    \fill[red, opacity=0.1] (A2) -- (E2) -- (E) -- (A) -- cycle;  
    \fill[red, opacity=0.1] (F2) -- (E2) -- (E) -- (F) -- cycle;  
    \fill[red, opacity=0.1] (A2) -- (F2) -- (F) -- (A) -- cycle;  


    \fill[red, opacity=0.4] (C) -- (F2) -- (A2) -- cycle;

      \fill[red, opacity=0.4] (C) -- (E2) -- (A2) -- cycle;
      \end{tikzpicture}
}
\vspace*{0.01mm}\captionsetup{oneside,margin={-8cm,-10cm}}
\caption{Subset of $\mathcal{E}(2,k')$ for $k'\in(k^*,k^{**}]$}
\label{fig:comp set b}
\end{subfigure}
 \captionsetup{justification=justified,singlelinecheck=false}
 \caption{Panel (a) presents the equilibrium payoff set $\mathcal{E}(2,k'')$ for $k''\leq k^*$, shaded in red, which is equal to the entire feasible payoff set $\mathcal{F}(2,k'').$  Panel (b) presents the feasible payoff set $\mathcal{F}(2,k')$ for $k'\in(k^*,k^{**}]$, shaded in blue, and a subset of the equilibrium payoff set $\mathcal{F}^\epsilon(2,k')\subseteq\mathcal{E}(2,k')$ with a non-empty interior, shaded in red. Note that because $k''<k',$ $\mathcal{F}(2,k'')\neq \mathcal{F}(2,k')$ (see \autoref{lemma:surplus-compstat-cost} for details).}
\label{fig:comp set}
\end{center}
\end{figure}

\autoref{fig:comp set} presents graphical depictions of the characterizations offered in \autoref{thm:competition} and \autoref{thm:multiple} for $n=2$. For some $k''\leq k^*,$ Panel (a) displays $\mathcal{E}(2,k'')=\mathcal{F}(2,k'')$ as the shaded red area. For some $k'\in(k^*,k^{**}],$ Panel (b) displays a  subset $\mathcal{F}^\epsilon(2, k')\subseteq \mathcal{E}(2,k')$ (for some $\epsilon >0$) with a non-empty interior as the shaded red area. 

\bigskip

We now use the four propositions to establish our main results (Theorems 1-3). From \autoref{thm:monopoly} and \autoref{thm:competition}, we conclude that when $k\leq k^*$, the total surplus set and the agent's payoff set are given by $\mathcal{W}\big(\mathcal{E}(n,k)\big)=\mathcal{W}\big(\mathcal{F}(n,k)\big)=\mathcal{A}\big(\mathcal{E}(n,k)\big)=\mathcal{A}\big(\mathcal{F}(n,k)\big)=[\underline u_k, \bar u_k]$ for all $n\geq 1$, which establishes Point $(i)$ in both  \autoref{thm:ewso} and \autoref{thm:awso}.

In contrast, when $k> k^*$, \autoref{thm:monopoly} implies that $\mathcal{W}\big(\mathcal{E}(1,k)\big)=\{\bar u_k\}$, while \autoref{thm:bertrand} implies that $\sup \mathcal{W}\big(\mathcal{E}(n,k)\big)=\bar u_k$ for any $n>1$, which establishes Point $(ii)$ of \autoref{thm:ewso}. Similarly, when $k> k^*$, \autoref{thm:monopoly} implies that $\mathcal{A}\big(\mathcal{E}(1,k)\big)=\{\underline u_k\}$, whereas for any $n\geq 1$, $\inf \hspace{.2em}\mathcal{A}\big(\mathcal{E}(n,k)\big)\geq \inf\hspace{.2em} \mathcal{A}\big(\mathcal{F}(n,k)\big)=\underline u_k$, which establishes Point $(ii)$ of  \autoref{thm:awso}.

Finally, \autoref{thm:multiple} implies that there exists a constant $\epsilon>0$ such that for all $n>1$ and all $k\in(k^*, k^{**}]$, there exists a non-empty and open subset $E\subset \mathcal{F}^\epsilon(n,k)\subseteq \mathcal{E}(n,k)$. Furthermore, we can pick $E$ such that for all $y\in E$,  $\norm{y}+\epsilon<\bar u_k =\inf \hspace{.2em}\mathcal{W}\big(\mathcal{E}(1,k)\big)$, and  $y_{n+1}-\epsilon>\underline u_k=\sup  \hspace{.2em}\mathcal{A}\big(\mathcal{E}(1,k)\big)$, which establishes \autoref{thm:strict}.

\section{Conclusion}\label{sec:conclusion}

This paper studies the role of competition in persuaded search markets, where an agent engaged in sequential search acquires information on the quality of sampled goods by repeatedly purchasing signals from a finite set of information brokers. We show that greater competition among brokers expands the set of equilibria in ways that raise the agent’s payoff but, perhaps counterintuitively, reduces total surplus. 

These results stem from the interaction between competition and bargaining power in brokered search markets, where the agent's continuation value directly affects both his probability of terminating his search and also his willingness to pay for information. Under a monopoly, the agent has bargaining power if and only if he can credibly reject a lackluster offer from the broker, which in turn holds if and only if he is willing to forego an informative signal today in order to induce more informative or cheaper signals in future periods. Under competition, however, rejecting any one broker’s offer does not require foregoing information in the current period; the agent can instead acquire information from another competing broker. Competition thus provides the agent with an inside option, strengthening his bargaining position relative to the monopolistic case. This allows the agent to capture more of the surplus, but also sustains inefficient outcomes.

These findings highlight that market structure affects the creation of surplus, not merely its division. In particular, we identify an intensive-margin channel through which competition can lower total surplus by weakening brokers' incentives to provide efficient information. Of course, in richer environments with hidden agent heterogeneity, competition may also generate efficiency gains on the extensive margin by expanding participation in the market for information. Nevertheless, the intensive-margin channel we identify would still operate as a  countervailing force, and its welfare consequences can be severe: for certain search costs, our model reveals that the competitive setting supports a ``no-trade'' equilibrium, precipitating a complete information market breakdown, while monopoly does not.  

From an applied perspective, our welfare results highlight a non-trivial policy tradeoff: whether a regulator should “break up” a monopolist into rival brokers depends on whether the objective is to enhance the agent’s welfare or to maximize efficiency. Such nuanced policy implications echo recent legal decisions. Between 2013 and 2016, \emph{CARFAX} was engaged in a class-action lawsuit wherein plaintiffs alleged that \emph{CARFAX} engaged in anti-competitive practices and had become a de facto monopoly, controlling 90\% of the market for certifying pre-owned vehicles. While \emph{CARFAX} did not dispute its monopoly status, it argued that its use of short-term contracts promoted market efficiency because it disciplined \emph{CARFAX} to always provide high-quality information.\footnote{See details on the case \emph{Maxon Hyundai Mazda et al. v Carfax, Inc} and the associated arguments referenced above at \url{https://web.archive.org/web/20241119090407/https://casetext.com/case/mazda-v-carfax-inc-1}.} \emph{CARFAX} won both the lawsuit and the appeal.

More broadly, in an increasingly data-driven world, it is essential to understand the interdependence between information markets and traditional markets for goods and services. How does the sale of information shape outcomes in traditional markets? What policies best align the incentives of information providers with those of market participants? And how should regulators balance efficiency against equitable distribution of surplus when these objectives diverge across these two markets? Our paper addresses these questions in the context of search and, we hope, provides a foundation for examining them more broadly.

\appendix
\addcontentsline{toc}{section}{Appendices}
\renewcommand{\thesubsection}{Appendix \Alph{subsection}}

\section*{Appendix}\label{appendix}
 \subsection{Additional useful lemmas}

\begin{lema}\label{lemma:support}
For all $k\in(0,m^\varnothing)$ and $u\in[\underline u_k, \bar u_k]$, $\mathbb{E}_F\big[\theta|\theta\leq \mathbf x_k(u)\big]\leq \underline u_k$ and $\bar u_k< \mathbb{E}_F\big[\theta|\theta\geq \mathbf x_k(u)\big]$.
\end{lema}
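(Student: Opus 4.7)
The plan is to fix $u \in [\underline u_k, \bar u_k]$, set $x := \mathbf x_k(u)$, and denote the two atoms of the pass-fail distribution $G^x$ as $m_1 := \mathbb{E}_F[\theta|\theta \leq x]$ and $m_2 := \mathbb{E}_F[\theta|\theta \geq x]$. \autoref{lemma:pass-fail-sufficiency} immediately gives $\bar u_k = \mathbf x_k(\bar u_k) \leq x \leq \mathbf x_k(\underline u_k) < 1$, and in particular $F(x) < 1$.

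For the inequality $\bar u_k < m_2$, I split on the position of $x$. If $x > \bar u_k$, then $m_2 \geq x > \bar u_k$ is immediate since the conditional mean of $\theta$ on $\{\theta \geq x\}$ is at least $x$. If instead $x = \bar u_k$ (equivalently, $u = \bar u_k$), I will use the defining identity $c_F(\bar u_k) = k$ together with the standard rewriting $c_F(\bar u_k) = (1-F(\bar u_k))(m_2 - \bar u_k)$, which follows from $\int_u^1 (1-F(t))\,dt = \int_u^1 (\theta - u)\,dF(\theta)$. Since $k > 0$, this forces $F(\bar u_k) < 1$ and $m_2 > \bar u_k$ simultaneously.

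For the inequality $m_1 \leq \underline u_k$, the idea is to exploit the piecewise-linear structure of $c_{G^x}$ given by (5). The null branch $c_{G^\varnothing}(u) = \max\{m^\varnothing - u,0\}$ equals $k$ only at $u = \underline u_k$, while the linear branch $c_F(x) + (1-F(x))(x - u)$ equals $k$ at $u = x + (c_F(x)-k)/(1-F(x))$. A short rearrangement using $c_F(x) = (1-F(x))(m_2 - x)$ and $m^\varnothing = F(x) m_1 + (1-F(x)) m_2$ shows that the two branches cross exactly at $u = m_1$, and that their difference at $u = \underline u_k$ equals $F(x)(m_1 - \underline u_k)$. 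Consequently, the unique solution $u = \mathcal{u}_k(G^x)$ of $c_{G^x}(u) = k$ equals $\underline u_k$ if $m_1 \geq \underline u_k$ and strictly exceeds $\underline u_k$ if $m_1 < \underline u_k$. Contrapositively, whenever $u > \underline u_k$ I obtain $m_1 < \underline u_k$.

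The remaining—and subtlest—step, which I view as the main obstacle, is the endpoint $u = \underline u_k$: a whole range of $x$ satisfies $\mathcal{u}_k(G^x) = \underline u_k$, so a priori $m_1(\mathbf x_k(\underline u_k))$ could exceed $\underline u_k$. To close the gap I will invoke the continuity of $\mathbf x_k$ from \autoref{lemma:pass-fail-sufficiency} together with the continuity of $x \mapsto \mathbb{E}_F[\theta|\theta \leq x]$ (which follows from absolute continuity of $F$): passing to the limit in the strict inequality $m_1(\mathbf x_k(u)) < \underline u_k$ as $u \downarrow \underline u_k$ yields $m_1(\mathbf x_k(\underline u_k)) \leq \underline u_k$, which completes the argument.
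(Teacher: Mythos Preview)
Your proof is correct but takes a somewhat different route from the paper, especially for the inequality $m_1\leq \underline u_k$. The paper argues directly at the endpoint: since $c_{G^\varnothing}(\underline u_k)=k$ and, by the very construction of $\mathbf x_k$ in \autoref{lemma:pass-fail-sufficiency}, the \emph{linear} branch $c_F(\mathbf x_k(\underline u_k))+(1-F(\mathbf x_k(\underline u_k)))(\mathbf x_k(\underline u_k)-\underline u_k)$ also equals $k$, the branch-crossing identity you derived (null minus linear $=F(x)(m_1-\underline u_k)$) yields $\mathbb{E}_F[\theta\mid\theta\leq\mathbf x_k(\underline u_k)]=\underline u_k$ \emph{exactly}. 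Monotonicity of $\mathbf x_k$ and of the lower conditional mean then gives $m_1\leq\underline u_k$ for all $u$. So what you flagged as the ``main obstacle'' dissolves: your own algebra, applied at the endpoint, delivers the equality without any limit. Your continuity argument is valid but roundabout by comparison.

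For $m_2>\bar u_k$, the paper handles all $u$ in one stroke via $\mathbb{E}_F[\theta\mid\theta\geq\mathbf x_k(u)]>\mathbf x_k(u)\geq\bar u_k$, the strict inequality coming from absolute continuity of $F$ and $\mathbf x_k(u)\leq\mathbf x_k(\underline u_k)<1$. This avoids your case split on whether $x=\bar u_k$ or $x>\bar u_k$, though your treatment of the boundary case via $c_F(\bar u_k)=k>0$ is of course fine.
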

\begin{proof}
Fix any $k\in (0,m^\varnothing)$ and $u\in[\underline u_k, \bar u_k]$. First, notice that $\mathbb{E}_F\big[\theta|\theta\geq \mathbf x_k(u)\big]>\mathbf x_k(u)\geq \bar u_k$, where the first inequality follows because $F$ is absolutely continuous and $\mathbf x_k(\cdot)$ is bounded from above by $\mathbf{x}_k(\underline u_k)<1$ as established in \autoref{lemma:pass-fail-sufficiency}, and the second inequality follows because $\mathbf x_k(\cdot)$ is bounded from below by $\bar u_k$, also as established in \autoref{lemma:pass-fail-sufficiency}.

Next, we argue that $\mathbb{E}_F[\theta|\theta\leq  \mathbf x_k(\underline u_k)]= \underline u_k$, which is sufficient for the desired result because $\mathbb{E}_F[\theta|\theta\leq  \mathbf x_k(u)]\leq \mathbb{E}_F[\theta|\theta\leq  \mathbf x_k(\underline u_k)]$ (because $\mathbf x_k(\cdot)$ is decreasing). To see why $\mathbb{E}_F[\theta|\theta\leq  \mathbf x_k(\underline u_k)]= \underline u_k$, notice that 
\begin{align*}
& c_{G^\varnothing}(\underline u_k)=c_{G^{\mathbf x_k(\underline u_k)}}(\underline u_k)\\[6pt]
\Longleftrightarrow  & m^\varnothing-\underline u_k=c_F\big(\mathbf x_k(\underline u_k)\big)+\big(1-F\big(\mathbf x_k(\underline u_k)\big)\big)\big(\mathbf x_k(\underline u_k)-\underline u_k\big)\\[6pt]
\Longleftrightarrow &m^\varnothing-\underline u_k=\int^1_{\mathbf x_k(\underline u_k)}\big(\theta-\mathbf x_k(\underline u_k)\big)dF(\theta)+\big(1-F\big(\mathbf x_k(\underline u_k)\big)\big)\big(\mathbf x_k(\underline u_k)-\underline u_k\big)\\[6pt]
\Longleftrightarrow &\mathbb{E}_F[\theta|\theta\leq  \mathbf x_k(\underline u_k)]=\underline u_k,
\end{align*}
where the first line follows because both terms equal $k$ by construction, the first equivalence follows because $c_{G^\varnothing}(u)=(m^\varnothing-u)^+$ and from the definition of $c_{G^x}$ in \eqref{eq:pass-fail-function}, the second equivalence follows from expressing $c_F(\cdot)$ in its integral form, and the last equivalence follows from algebraic manipulation.
\end{proof}

\begin{lema}\label{lemma:dist-function}
For all $k\in(0,m^\varnothing)$,  $u\in[\underline u_k, \bar u_k]$, and  $u'\in [\underline u_k, \bar u_k]$, $c_{G^{\mathbf{x}_k(u)}}(u')=c_F(\mathbf{x}_k(u))+(1-F(\mathbf{x}_k(u)))(\mathbf{x}_k(u)-u')$.
\end{lema}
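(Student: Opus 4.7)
The plan is to compute $c_{G^{\mathbf x_k(u)}}(u')$ directly from its integral definition and show that for $u'$ in the stated range, the affine expression on the right-hand side coincides with this integral. From Lemma~\ref{lemma:support}, the two atoms of $G^{\mathbf x_k(u)}$, namely $m_1 := \mathbb{E}_F[\theta\mid \theta\leq \mathbf x_k(u)]$ and $m_2 := \mathbb{E}_F[\theta\mid\theta\geq\mathbf x_k(u)]$, satisfy $m_1 \leq \underline u_k$ and $m_2 > \bar u_k$. Hence any $u'\in[\underline u_k,\bar u_k]$ lies in the interval $[m_1,m_2)$, the ``middle'' region of the pass--fail CDF.

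Given this location of $u'$, I would plug the definition \eqref{eq:pass-fail-distibution} of $G^{\mathbf x_k(u)}$ into the integral form of $c$:
\[
c_{G^{\mathbf x_k(u)}}(u') \;=\; \int_{u'}^{1}\bigl(1-G^{\mathbf x_k(u)}(m)\bigr)\,dm \;=\; \int_{u'}^{m_2}\bigl(1-F(\mathbf x_k(u))\bigr)\,dm \;=\; \bigl(1-F(\mathbf x_k(u))\bigr)\bigl(m_2-u'\bigr),
\]
where the contributions on $[m_2,1]$ vanish since $1-G^{\mathbf x_k(u)}\equiv 0$ there, and the lower endpoint is $u'$ (not $m_1$) because $u'\geq m_1$.

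The final step is to rewrite $(1-F(\mathbf x_k(u)))\,m_2$. Using the definition of conditional expectation, $(1-F(\mathbf x_k(u)))\,m_2 = \int_{\mathbf x_k(u)}^{1}\theta\,dF(\theta)$. Splitting $\theta = (\theta-\mathbf x_k(u)) + \mathbf x_k(u)$ inside the integral yields
\[
\bigl(1-F(\mathbf x_k(u))\bigr)\,m_2 \;=\; c_F(\mathbf x_k(u)) + \mathbf x_k(u)\bigl(1-F(\mathbf x_k(u))\bigr),
\]
and substituting back gives exactly the claimed identity
\[
c_{G^{\mathbf x_k(u)}}(u') \;=\; c_F(\mathbf x_k(u)) + \bigl(1-F(\mathbf x_k(u))\bigr)\bigl(\mathbf x_k(u)-u'\bigr).
\]

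There is no real obstacle here: the lemma is essentially a bookkeeping identity. The only substantive input is Lemma~\ref{lemma:support}, which guarantees that $u'$ sits between the two atoms of the pass--fail signal, ensuring that the linear branch of \eqref{eq:pass-fail-function} (rather than the autarky branch $c_{G^\varnothing}$) is the active one on the relevant range.
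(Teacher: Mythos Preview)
Your proof is correct. Both your argument and the paper's rely on \autoref{lemma:support} as the key substantive input, but the mechanics differ. The paper works from the max-formula \eqref{eq:pass-fail-function} and argues by contradiction: if the autarky branch $c_{G^\varnothing}(u')$ were strictly larger than the affine branch, algebraic manipulation would yield $\mathbb{E}_F[\theta\mid\theta\leq\mathbf x_k(u)]>u'$, contradicting \autoref{lemma:support}. You instead bypass \eqref{eq:pass-fail-function} altogether and compute $c_{G^{\mathbf x_k(u)}}(u')=\int_{u'}^1(1-G^{\mathbf x_k(u)}(m))\,dm$ directly from the CDF \eqref{eq:pass-fail-distibution}, using \autoref{lemma:support} upfront to place $u'$ between the two atoms. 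Your route is arguably more transparent, since it makes explicit why the affine expression arises (it is simply the area of a rectangle of height $1-F(\mathbf x_k(u))$ and width $m_2-u'$), and the rewriting of $(1-F(\mathbf x_k(u)))m_2$ via $c_F$ is a clean identity. The paper's route has the minor advantage of verifying consistency with the already-stated formula \eqref{eq:pass-fail-function}, but both arguments are equally valid and of comparable length.
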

\begin{proof}
 Fix any $k\in (0,m^\varnothing)$, and  $u, u'\in[\underline u_k, \bar u_k]$. As defined in \eqref{eq:pass-fail-function},  
 \begin{align*}
     c_{G^{\mathbf{x}_k(u)}}(u')&=\max\{c_{G^\varnothing}(u'), c_F(\mathbf{x}_k(u))+(1-F(\mathbf{x}_k(u)))(\mathbf{x}_k(u)-u')\}>0
\end{align*}
where the inequality follows because $F$ is absolutely continuous over $\Theta$ and because $u'\leq\bar u_k\leq\mathbf{x}_k(u) <1$ as stated in \autoref{lemma:pass-fail-sufficiency}. 

Recall that $c_{G^\varnothing}(u')=(m^\varnothing-u')^+$ by definition. Suppose for the sake of contradiction that $c_{G^\varnothing}(u')> c_F(\mathbf{x}_k(u))+(1-F(\mathbf{x}_k(u)))(\mathbf{x}_k(u)-u')$, which implies that $c_{G^\varnothing}(u')=m^\varnothing-u'$. We then have 
\begin{align*}
   &c_{G^\varnothing}(u')>c_F(\mathbf{x}_k(u))+(1-F(\mathbf{x}_k(u)))(\mathbf{x}_k(u)-u')\\[6pt]
   \Longleftrightarrow & m^\varnothing-u'  > \int^1_{\mathbf x_k(u)}\big(\theta-\mathbf x_k(u)\big)dF(\theta)+\big(1-F\big(\mathbf x_k(u)\big)\big)\big(\mathbf x_k(u)-u'\big)\\[6pt]
\Longleftrightarrow &\mathbb{E}_F[\theta|\theta\leq  \mathbf x_k(u)]> u',
\end{align*}
where the first equivalence follows because $c_{G^\varnothing}(u')=(m^\varnothing-u')^+$ by definition and $c_{G^\varnothing}(u')>0$ by assumption, and by expressing $c_F(\cdot)$ in its integral form, and the last equivalence follows from algebra. However, the last line contradicts \autoref{lemma:support}. Hence, we must have $c_{G^\varnothing}(u')\leq c_F(\mathbf{x}_k(u))+(1-F(\mathbf{x}_k(u)))(\mathbf{x}_k(u)-u')$ to avoid the contradiction, establishing the desired result. 
\end{proof}

\begin{lema}\label{lemma:order}
For all $k\in(0,m^\varnothing)$ and $u'\in[\underline u_k, \bar u_k]$, the mapping $u\mapsto c_{G^{\mathbf{x}_k(u)}}(u')$ is strictly increasing over the interval $[\underline u_k, \bar u_k]$.
\end{lema}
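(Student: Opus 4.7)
The plan is to exploit the explicit formula from \autoref{lemma:dist-function}: writing $x = \mathbf{x}_k(u)$,
\[
c_{G^{\mathbf{x}_k(u)}}(u') = \phi(x), \qquad \phi(x) := c_F(x) + (1-F(x))(x-u').
\]
Since \autoref{lemma:pass-fail-sufficiency} tells us $\mathbf{x}_k$ is continuous and strictly decreasing from $[\underline u_k, \bar u_k]$ onto $[\bar u_k, \mathbf{x}_k(\underline u_k)] \subset [0,1)$, it suffices to show that $\phi$ is strictly \emph{decreasing} on this image.

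Using $c_F'(s) = -(1-F(s))$ and the product rule, $\phi'(s) = -f(s)(s-u')$, where $f$ denotes the density of $F$. Integrating, for $x_2 < x_1$ in the image of $\mathbf{x}_k$ we obtain
\[
\phi(x_2) - \phi(x_1) = \int_{x_2}^{x_1} f(s)(s - u')\,ds.
\]
Because $s \geq x_2 \geq \bar u_k \geq u'$ throughout the domain of integration, the integrand is pointwise nonnegative, and the remaining task is to show the integral is strictly positive. To do so, I would exploit that $x_i = \mathbf{x}_k(u_i)$ implies $k = c_F(x_i) + (1-F(x_i))(x_i - u_i)$ for $i \in \{1,2\}$, where $u_1 < u_2$ correspond to $x_1 > x_2$. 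Subtracting these identities and applying the same computation gives
\[
(1 - F(x_2))(u_2 - u_1) = \int_{x_2}^{x_1} f(s)(s - u_1)\,ds.
\]
The left-hand side is strictly positive since $F(x_2) < 1$ and $u_1 < u_2$, and the inequality $s - u_1 > 0$ holds uniformly on $[x_2, x_1]$ because $u_1 < u_2 \leq \bar u_k \leq x_2 \leq s$. Hence $F(x_1) > F(x_2)$, so $f$ carries positive mass on $[x_2, x_1]$.

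To conclude, I handle two cases. If either $u' < \bar u_k$ or $x_2 > \bar u_k$, then $s - u' \geq x_2 - u' > 0$ throughout $[x_2, x_1]$, so $\phi(x_2) - \phi(x_1) \geq (x_2 - u')(F(x_1) - F(x_2)) > 0$. The remaining boundary case $u' = \bar u_k = x_2$ is the main obstacle, because the integrand $(s - \bar u_k) f(s)$ now vanishes at the left endpoint $s = \bar u_k$. Here I would argue measure-theoretically: absolute continuity of $F$ combined with $F(x_1) > F(\bar u_k)$ ensures that the set $\{s \in (\bar u_k, x_1] : f(s) > 0\}$ has positive Lebesgue measure, and on this set $(s - \bar u_k) f(s) > 0$, so the integral is again strictly positive. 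This yields $\phi(x_2) > \phi(x_1)$, completing the proof.
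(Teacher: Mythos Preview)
Your proof is correct and follows essentially the same approach as the paper: both invoke the explicit formula from \autoref{lemma:dist-function} and compute the derivative of $\phi(x)=c_F(x)+(1-F(x))(x-u')$, obtaining $\phi'(x)=-f(x)(x-u')$. The paper differentiates directly in $u$ via the chain rule and concludes from the a.e.\ positive derivative, whereas your reparametrization to $x$ together with the measure-theoretic argument (using the defining relation of $\mathbf{x}_k$ to force $F(x_1)>F(x_2)$) handles the strictness more carefully, but the core idea is the same.
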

\begin{proof}
Fix any $k\in (0,m^\varnothing)$ and $u'\in[\underline u_k, \bar u_k]$. From \autoref{lemma:dist-function}, we know that for all $u\in [\underline u_k,\bar u_k]$, $c_{G^{\mathbf{x}_k(u)}}(u')=c_F(\mathbf{x}_k(u))+(1-F(\mathbf{x}_k(u)))(\mathbf{x}_k(u)-u')$, which is a continuous function of $u$. Furthermore, $c_F(\cdot)$ is differentiable, while $F(\cdot)$ and $\mathbf x_k(\cdot)$ are differentiable almost everywhere--the absolute continuity of $F(\cdot)$ implies that $F(\cdot)$ is differentiable almost everywhere on $(0,1)$ and it also implies that $c_F(\cdot)$ is differentiable everywhere on $(0,1)$, and the a.e. differentiability of $\mathbf x_k(\cdot)$ follows via implicit differentiation. The lemma is then established by taking the derivative with respect to $u$ over the open interval $(\underline u_k, \bar u_k)$, which yields a.e.
\begin{align*}
    &\frac{d }{du}c_{G^{\mathbf{x}_k(u)}}(u')=-\frac{\partial \mathbf{x}_k(u)}{\partial u}f(\mathbf{x}_k(u))(\mathbf{x}_k(u)-u')>0,
\end{align*}
where the inequality follows from the fact that $\mathbf{x}_k(\cdot)$ is strictly decreasing and that $\mathbf{x}_k(u)>\mathbf{x}_k(\bar u_k)=\bar u_k\geq u'$ for all $u\in (\underline u_k, \bar u_k)$ as established in \autoref{lemma:pass-fail-sufficiency}.
\end{proof}

\subsection{Proofs from Section~\ref{sec:proofmain}}\label{sec:appendix_proposition}

\noindent\begin{proof}[Proof of \autoref{thm:monopoly}]
Let $n=1$. Given a non-empty set $E\subseteq\mathbb R^{2}$, define
\begin{align*}
\label{eq:min-max}
\tag{7}
\underline{\mathcal{V}}(E)\coloneqq \inf_{\beta, \psi} \hspace*{.2em} \sup_{o\in \mathcal{O}}  \hspace*{.2em} &V_1(o, \beta, \psi)\\[6pt]
\text{s.t.}\hspace*{1em} &  \beta \text{ is sequentially rational with respect to } \psi, \text{ and }\\[6pt]
& Im_\psi\subseteq E.
\end{align*}
We shall refer to $\mathcal{V}(E)$ as the monopolist's minimax payoff when continuation values are constrained to a set $E$. The following lemma shows that there is a close connection between this minimax value $\underline{\mathcal{V}}(E)$ and all other payoff profiles that can be supported by continuation values within $E$.

\begin{lemma}\label{lemma:supported}
Consider a non-empty set $E\subseteq \mathbb{R}^2$. Any payoff profile $y\in\mathcal{F}(1,k)$ such that $y_1>\underline{\mathcal{V}}(E)$ is supported by a tuple $(\alpha,\beta,\psi)$ with $Im_\psi\subseteq E\cup\{y\}$.
\end{lemma}
\noindent \begin{proof}
Fix a non-empty set $E\subseteq\mathbb{R}^2$ and a payoff profile $y\in\mathcal{F}(1,k)$ with $y_1>\underline{\mathcal{V}}(E)$. By construction (from \eqref{eq:min-max}), there exists a pair $(\beta', \psi')$, such that: $\beta'$ is sequentially rational with respect to $\psi'$,  $Im_{\psi'}\subseteq E$, and $y_1\geq V_1(o, \beta', \psi')$ for all $o\in \mathcal{O}$.

Additionally, by \autoref{lemma:stationary-implement}, there exist a TIOLI offer $o^*(y)$ and a pair $(\beta'',\psi'')$ such that: 
 $\beta''$ is sequentially rational with respect to $\psi''$,
 $\psi''(o^*(y), w, m)=y$ for all $(w,m)\in W\times \Theta$, and 
$y_1=V_1(o^*(y), \beta'', \psi'')$ and $y_2=U(o^*(y), \beta'', \psi'')$.

Let us define a new tuple $(\hat\alpha, \hat\beta, \hat\psi)$, where $\hat\alpha_i$ for all $i\in N$ is the Dirac measure centered on $o^*(y)$,  the agent's simple strategy is given by 
\[
\hat\beta^{\mathcal{w}}(\cdot|o)=\begin{cases}
      \beta^{''\mathcal{w}}(\cdot|o) & \mbox{if }  o=o^*(y)\\
    \beta^{'\mathcal{w}}(\cdot|o) & \mbox{if } o\neq o^*(y)
\end{cases} \quad \text{ and } \quad \hat\beta^{\mathcal{d}}(o, w, m)=\begin{cases}
    \beta^{''\mathcal{d}}(o, w, m) & \mbox{if }  o=o^*(y)\\
    \beta^{'\mathcal{d}}(o,w,m) & \mbox{if } o\neq o^*(y)
    \end{cases}
\]
for all $(w,m)\in W\times \Theta$, and the continuation value function is given by 
\[\hat\psi(o, w, m)=\begin{cases}
    \psi^{''}(o, w, m) & \mbox{if }  o=o^*(y)\\
    \psi^{'}(o,w,m) & \mbox{if }  o\neq o^*(y).
\end{cases}
\]
Then notice that 
$(i)$ $y_1=\mathbb{E}_{\hat\alpha}[V_1(o, \hat\beta, \hat\psi)]$ and $y_2=\mathbb{E}_{\hat\alpha}[U(o, \hat\beta, \hat\psi)]$,  
$(ii)$ $y_1\geq V_1(o, \hat\beta, \hat\psi)$ for all $o\in \mathcal{O}$, and 
$(iii)$ $\hat \beta$ is sequentially rational with respect to $\hat\psi$, which correspond to Points $(a)$-$(c)$ of \autoref{def:self-generation}. Hence, $y$ is supported by $(\hat\alpha, \hat\beta, \hat\psi)$. Moreover, by construction, $Im_{\hat\psi}\subseteq E\cup\{y\}$.
\end{proof}
\bigskip

Of particular note is the minimax value $\underline{\mathcal{V}}(\mathcal{S}(1,k))$, which is well-defined because $\mathcal{S}(1,k)$ is non-empty.\footnote{Recall that $\mathcal{E}(1,k)\subseteq\mathcal{S}(1,k)$ and $\mathcal{E}(1,k)$ is non-empty; \cite{mek23} (Online Appendix B) show the existence of a stationary equilibrium in this setting.} For ease of exposition, we henceforth denote $\underline{\mathcal{V}}(\mathcal{S}(1,k))$ simply as $\nu_k$. The value $\nu_k$ is the profit guarantee that a monopolist can secure for herself when continuation payoffs are constrained to be self-generating outcomes. Hence, if a payoff profile $y\in\mathcal{S}(1,k)$, then $y_1\geq \nu_k$. We show that this is also a sufficient condition. In order to state our result, let us first define the following notation: Given a subset $X\subseteq \mathbb{R}^n$, we write $\cl(X)$ and $\conv(X)$ to denote its  closure and its convex hull, respectively.

\begin{lemma}\label{lemma:monopoly-lower-bound}
For any $k\in (0,m^\varnothing)$ and any $y\in\mathcal{F}(1,k)$, $y\in \cl(\mathcal{S}(1, k))$ if and only if $y_1\geq \nu_k$.
\end{lemma}
\noindent \begin{proof}
(``Only-if" direction:) Consider a payoff profile $y\in \cl(\mathcal{S}(1,k))$. There exists a convergent sequence of payoff profiles $(y^\ell)_{\ell\in\mathbb N}$ with $y^\ell\in\mathcal{S}(1,k)$ for each $\ell\in \mathbb{N}$ such that $\lim_{\ell\to \infty}y^\ell=y$. Since $y^\ell\in\mathcal{S}(1,k)$ for each $\ell\in\mathbb N$, there exists a tuple $(\alpha^\ell, \beta^\ell,\psi^\ell)$ with $Im_{\psi^\ell}\subseteq \mathcal{S}(1,k)$ such that the tuple supports $y^\ell$.  Then, Point $(b)$ of \autoref{def:self-generation} immediately implies $y_{1}^\ell\geq \nu_k$ for all $\ell\in\mathbb N$, giving the desired conclusion. \\

\noindent (``If" direction:) Consider a payoff profile $y\in\mathcal{F}(1,k)$ with $y_1\geq \nu_k$. First, suppose $\nu_k=\bar u_k-\underline u_k$. In this case, feasibility implies that $y=(\bar u_k-\underline u_k, \underline u_k)$. From the ``only-if" direction, we know that $\cl(\mathcal{S}(1,k))\subseteq\{y'\in\mathcal{F}(1,k):y'_1\geq \nu_k\}=\{y\}$. As $\mathcal{S}(1,k)$ is non-empty, we conclude that $\cl(\mathcal{S}(1,k))=\{y\}$.

Now suppose $\nu_k<\bar u_k-\underline u_k$. Then the set $E'\coloneqq \{y'\in\mathcal{F}(1,k):y'_1>\nu_k\}$ is non-empty. From \autoref{lemma:supported}, any $y'\in E'$ is supported by a tuple $(\alpha', \beta', \psi')$ with  $Im_{\psi'}\subseteq \mathcal{S}(1,k)\cup\{y'\}$, which implies that $\mathcal{S}(1,k)\cup E'$ is self-generating. However, $\mathcal{S}(1,k)$ is the largest self-generating set, so $E'\subseteq \mathcal{S}(1,k)$, which further implies that $\cl(E')\subseteq \cl(\mathcal{S}(1,k))$. Noticing that $y\in \cl(E')$, we conclude that $y\in \cl(\mathcal{S}(1,k))$ as desired. 
\end{proof}
\bigskip

\begin{lemma}\label{lemma:monopoly-closure}
For any $k\in (0,m^\varnothing)$, $\nu_k=\underline{\mathcal{V}}\big(\cl(\mathcal{S}(1,k))\big)$. 
\end{lemma}
\noindent\begin{proof}
Notice  $\underline{\mathcal{V}}(\cdot)$ is weakly decreasing in the set-inclusion order, i.e., $\underline{\mathcal{V}}(E)\geq \underline{\mathcal{V}}(E')$ whenever $E\subseteq E'$. Hence, $\nu_k\geq \underline{\mathcal{V}}(\cl(\mathcal{S}(1,k)))$. For the sake of contradiction, suppose $\nu_k> \underline{\mathcal{V}}(\cl(\mathcal{S}(1,k)))$. Consider any payoff profile $y\in\cl(\mathcal{S}(1,k))$. From \autoref{lemma:monopoly-lower-bound}, we have $y_1\geq \nu_k$, and therefore, $y_1>\underline{\mathcal{V}}(\cl(\mathcal{S}(1,k)))$. By \autoref{lemma:supported}, $y$ is supported by a tuple $(\alpha,\beta, \psi)$ with $Im_\psi\subseteq\cl(\mathcal{S}(1,k))\cup\{y\}=\cl(\mathcal{S}(1,k))$. Hence, $\cl(\mathcal{S}(1, k))$ is a self-generating set. However, $\mathcal{S}(1, k)$ is the largest such set, implying $\mathcal{S}(1, k)=\cl(\mathcal{S}(1,k))$, and $\nu_k=\underline{\mathcal{V}}(\cl(\mathcal{S}(1,k)))$, contradicting the assumption that $\nu_k>\underline{\mathcal{V}}(\cl(\mathcal{S}(1,k)))$.
\end{proof}
\bigskip

The last two lemmas further simplify our analysis by showing that the closure of the largest self-generating payoff set is characterized by the intersection of three half spaces: two half spaces that determine feasibility of payoffs and a third one given by \autoref{lemma:monopoly-lower-bound}. In other words,
\[
\cl(\mathcal{S}(1,k))= \conv(\{(\bar u_k-\underline u_k, \underline u_k), (\nu_k,\underline u_k), (\nu_k, \bar u_k-\nu_k)\}). 
\]
Moreover, from \autoref{lemma:monopoly-closure},   
\[
\nu_k=\inf\left\{y_1\in[0,\bar u_k-\underline u_k]:y_1= \underline{\mathcal{V}}\Big( \conv\big(\{(\bar u_k-\underline u_k, \underline u_k), (y_1,\underline u_k), (y_1, \bar u_k-y_1)\}\big)\Big)\right\},
\]
which reduces the task of characterizing the largest self-generating payoff set into finding the smallest fixed point in a single-dimensional fixed-point problem. 

\begin{lemma}\label{lemma:monopoly-fixed-point}
There exists a unique $k^*\in (0, m^\varnothing)$ such that 
\[\nu_k=\begin{cases}
 0 &\mbox{if }  k\leq k^* \\ 
 \bar u_k-\underline u_k&\mbox{if }  k> k^*.
 \end{cases}
 \]
\end{lemma}
\noindent\begin{proof}
Consider the following fixed point problem: find $y_1\in[0,\bar u_k-\underline u_k]$ such that
\[
\label{eq:fixed-pt}
\tag{A5}
y_1= \underline{\mathcal{V}}\Big( \conv\big(\{(\bar u_k-\underline u_k, \underline u_k), (y_1,\underline u_k), (y_1, \bar u_k-y_1)\}\big)\Big).
\]
To prove the statement of \autoref{lemma:monopoly-fixed-point}, it suffices to show that $y_1=0$ solves \eqref{eq:fixed-pt} for all $k\leq k^*$, and that $y_1=\bar u_k-\underline u_k$ is the unique value that solves \eqref{eq:fixed-pt} for all $k>k^*$. Our proof proceeds in four steps.\medskip

\noindent \textbf{Step 1:} Fix any $k\in(0, m^\varnothing)$. We show that $y_1=\bar u_k-\underline u_k$ is a fixed point of \eqref{eq:fixed-pt}.

When $y_1=\bar u_k-\underline u_k$, the set $\conv\big(\{(\bar u_k-\underline u_k, \underline u_k), (y_1,\underline u_k), (y_1, \bar u_k-y_1)\}\big)$ collapses into a singleton $\{(\bar u_k-\underline u_k, \underline u_k)\}$. In other words, $\psi(o,w,m)=(\bar u_k-\underline u_k, \underline u_k)$ for all $(o,w,m)\in \mathcal{O}\times W\times \Theta$. Then, $\beta$ is sequentially rational with respect to $\psi$ only if for all $o\coloneqq (p, G)\in\mathcal{O}$,
    \[
\beta^\mathcal{w}(1|o)=\begin{cases}
    0 & \mbox{if } c_{G}(\underline u_k)-p<c_{G^\varnothing}(\underline u_k)\\
    1 & \mbox{if }  c_{G}(\underline u_k)-p>c_{G^\varnothing}(\underline u_k).
    \end{cases}
    \]
    
For $\epsilon>0$, consider offer $o_\epsilon\coloneqq (p_\epsilon, G^{\mathbf x_k(\bar u_k)})$ with $p_\epsilon=c_{G^{\mathbf x_k(\bar u_k)}}(\underline u_k)-c_{G^\varnothing}(\underline u_k)-\epsilon$. Notice
\begin{align*}
    c_{G^\varnothing}(\underline u_k)=k=c_{F}(\bar u_k)=c_{G^{\mathbf x_k(\bar u_k)}}(\bar u_k)<c_{G^{\mathbf x_k(\bar u_k)}}(\underline u_k),
\end{align*}
where the first equality follows because $\underline u_k$ solves \eqref{eq:1} when $c_G=c_{G^\varnothing}$, the second equality similarly follows as $\bar u_k$ solves \eqref{eq:1} when $c_G=c_F$, the third equality follows because $\bar u_k=\mathcal{u}_k({G^{\mathbf x_k(\bar u_k)}})$ as established in \autoref{lemma:pass-fail-sufficiency}, and the inequality follows because $c_{G^{\mathbf x_k(\bar u_k)}}(\cdot)$ is strictly decreasing over the interval $(0, \mathbb{E}_F[\theta|\theta\geq \mathbf x_k(\bar u_k)])$ and $\underline u_k<\mathbb{E}_F[\theta|\theta\geq \mathbf x_k(\bar u_k)]$ by \autoref{lemma:support}. Thus, for small enough $\epsilon>0$, $p_\epsilon\in\mathbb{R}_+$ and $o_\epsilon$ is well-defined. Furthermore, if the monopolist offers $o_\epsilon$, then sequential rationality of $\beta$ with respect to $\psi$ implies that the agent accepts $o_\epsilon$, i.e., $\beta^\mathcal{w}(1|o_\epsilon)=1$.

The broker's payoff from offer $o_\epsilon$ is then given by
\small{\begin{align*}
    V_1(o_\epsilon, \beta, \psi)&=p_\epsilon + \big(\bar u_k-\underline u_k\big)\int_\Theta \beta^\mathcal{d}(o_\epsilon,1,m) dG^{\mathbf x_k(\bar u_k)}(m)\\[6pt]
    &=c_{G^{\mathbf x_k(\bar u_k)}}(\underline u_k)-c_{G^\varnothing}(\underline u_k)-\epsilon+ \big(\bar u_k-\underline u_k\big)\int_\Theta \beta^\mathcal{d}(o_\epsilon,1,m) dG^{\mathbf x_k(\bar u_k)}(m)\\[6pt]
    &\geq c_{G^{\mathbf x_k(\bar u_k)}}(\underline u_k)-c_{G^\varnothing}(\underline u_k)-\epsilon + \big(\bar u_k-\underline u_k\big)G^{\mathbf x_k(\bar u_k)}(\underline u_k)\\[6pt]
    &=c_{G^{\mathbf x_k(\bar u_k)}}(\underline u_k)-c_{G^\varnothing}(\underline u_k)-\epsilon + \big(\bar u_k-\underline u_k\big)F\big(\mathbf x_k(\bar u_k)\big)\\[6pt]
    &=c_F(\mathbf x_k(\bar u_k))+\big(1-F(\mathbf x_k(\bar u_k))\big)\big(\mathbf x_k(\bar u_k)-\underline u_k\big)-c_{G^\varnothing}(\underline u_k)-\epsilon+ \big(\bar u_k-\underline u_k\big)F\big(\mathbf x_k(\bar u_k)\big)\\[6pt]
    &=c_F(\bar u_k)+\big(1-F(\bar u_k)\big)\big(\bar u_k-\underline u_k\big)-c_{G^\varnothing}(\underline u_k)-\epsilon+ \big(\bar u_k-\underline u_k\big)F\big(\bar u_k\big)\\[6pt]
    &=\bar u_k-\underline u_k-\epsilon,
\end{align*}}
\hspace{-1mm}\normalsize
where the second equality follows by substituting in the expression for $p_\epsilon$, the inequality follows because sequential rationality of $\beta$ implies that $\beta^\mathcal{d}(o_\epsilon, 1, m)\geq \mathbbm{1}_{[m<\underline u_k]}$ for all $m\in\Theta$ and because $\underline u_k\notin\supp(G^{\mathbf x_k(\bar u_k)})$ making $G^{\mathbf x_k(\bar u_k)}(\cdot)$ continuous at $\underline u_k$, the third equality follows from \eqref{eq:pass-fail-distibution} and the fact that $\mathbb{E}_F[\theta|\theta\leq \mathbf x_k(\bar u_k)]<\underline u_k<\mathbb{E}_F[\theta|\theta\geq \mathbf x_k(\bar u_k)]$ by \autoref{lemma:support}, the fourth equality follows by \autoref{lemma:dist-function}, the fifth equality follows by the fact that $\mathbf x_k(\bar u_k)=\bar u_k$ as stated in \autoref{lemma:pass-fail-sufficiency}, and the final equality follows from the fact that $c_F(\bar u_k)=c_{G^\varnothing}(\underline u_k)=k$. 

In the limit as $\epsilon$ becomes arbitrarily small, we  conclude that $\bar u_k-\underline u_k\leq \underline{\mathcal{V}}(\{(\bar u_k-\underline u_k, \underline u_k)\})$. However, 
$\underline{\mathcal{V}}(\{(\bar u_k-\underline u_k, \underline u_k)\})$ cannot exceed $\bar u_k-\underline u_k$, which therefore implies that $\bar u_k-\underline u_k$ is a fixed-point solution to \eqref{eq:fixed-pt}. This concludes Step 1.\medskip

For the remaining steps, we define the continuous mapping $\Phi:(0,m^\varnothing)\to \mathbb{R}$ given by 
\[
\label{eq:phi}
\tag{A6}
\Phi(k)\coloneqq c_F(\underline u_k)+\underline u_k-\big( c_{G^\varnothing}(\bar u_k)+\bar u_k\big).
\]

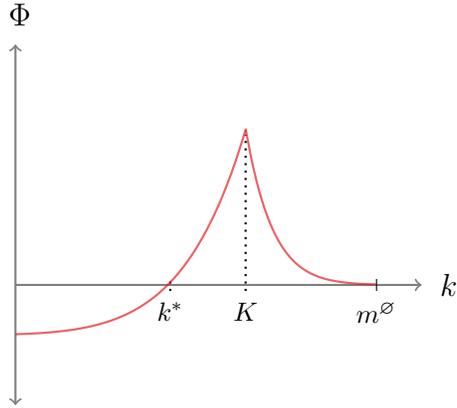
\begin{figure}[ht]
 \begin{center}
 \begin{tikzpicture}[xscale=5, yscale=4]
\draw [->, help lines,   thick] (0,0) -- (.9,0);
 \draw [<->, help lines,   thick] (0.0,-.4) --(0,.8);
 \node[right=3pt] at (.9,0) {$k$};
 \node[above=3pt] at (.01, .8) {$\Phi$};

 \draw [] (.8,-.02)node[below]{\footnotesize{$m^\varnothing$}}--(.8,.02);
  \draw [dotted, thick] (.51,-.02) node[below]{\footnotesize{$K$}}--(.51,.5);
  \draw [dotted, thick] (.343,-.02) node[below]{\footnotesize{$k^*$}}--(.343,.02);

   \draw[cyan, opacity=.15, domain=0:.51,smooth,variable=\x, thick] plot ({\x},{(.437+\x)^7-.167});
 \draw[cyan, opacity=.15, domain=.51:.8,smooth,variable=\x, thick] plot ({\x},{(1.47-\x)^16});
   \draw[red, opacity=.6, domain=0:.51,smooth,variable=\x, thick] plot ({\x},{(.437+\x)^7-.167});
 \draw[red, opacity=.6, domain=.51:.8,smooth,variable=\x, thick] plot ({\x},{(1.47-\x)^16});
\end{tikzpicture}
 \captionsetup{oneside,margin={0cm,0cm},justification=justified, singlelinecheck=false}
 \caption{The continuous mapping $\Phi:(0,m^\varnothing)\to \mathbb{R}$ is depicted by the red curve. As we will show, the figure also depicts $K$ which is the unique value such that $\bar u_{K}=m^\varnothing$,  and $k^*$ which is the unique value such that $\Phi(k^*)=0$. $\Phi(\cdot)$ is strictly increasing over the interval $(0,K)$ and strictly decreasing over the interval $(K,m^\varnothing)$.}
   \label{fig:phi}
 \end{center}
 \end{figure}
 
\noindent \textbf{Step 2:} Fix any $k\in(0, m^\varnothing)$ such that $\Phi(k)>0$. We show that $y_1=\bar u_k-\underline u_k$ is the unique fixed point of \eqref{eq:fixed-pt}.

Our proof proceeds by showing that $y_1<\underline{\mathcal{V}}\Big( \conv\big(\{(\bar u_k-\underline u_k, \underline u_k), (y_1,\underline u_k), (y_1, \bar u_k-y_1)\}\big)\Big)$ for all $y_1\in[0,\bar u_k-\underline u_k)$. To that end, fix $y_1\in[0,\bar u_k-\underline u_k)$. For some $\epsilon>0$, consider  offer $o_\epsilon\coloneqq (p_\epsilon, G^x)$ with $x=\underline u_k+y_1\mathbbm{1}_{[\bar u_k-y_1<m^\varnothing]}$ and $p_\epsilon=c_{G^x}(\underline u_k)+\underline u_k-\big(c_{G^\varnothing}(\bar u_k-y_1)+\bar u_k-y_1\big)-\epsilon$.

If $\bar u_k-y_1<m^\varnothing$, then
\begin{align*}
c_{G^x}(\underline u_k)+\underline u_k-\big(c_{G^\varnothing}(\bar u_k-y_1)+\bar u_k-y_1\big)=c_{G^x}(\underline u_k)+\underline u_k-m^\varnothing=c_{G^x}(\underline u_k)-k,
\end{align*}
where the equality follows because $c_{G^\varnothing}(u)=(m^\varnothing-u)^+$ and $\bar u_k-y_1<m^\varnothing$ by assumption, and the second equality follows because $\underline u_k=m^\varnothing-k$. Additionally, 
$
k=c_{F}(\bar u_k)<c_F(x)=c_{G^{x}}(x)\leq c_{G^{x}}(\underline u_k),
$
where the first equality follows because $\bar u_k$ solves \eqref{eq:1} when $c_G=c_F$, the first inequality follows because $c_F$ is a strictly decreasing function and because $x=\underline u_k+y_1<\bar u_k$ (recall that $y_1<\bar u_k-\underline u_k$), the second equality follows by the construction of $c_{G^x}$ as in \eqref{eq:pass-fail-function}, and the last inequality follows because $c_{G^x}$ is a weakly decreasing function. Thus, for small enough $\epsilon>0$, $p_\epsilon\in\mathbb{R}_+$ and $o_\epsilon$ is well-defined.

Similarly, if $\bar u_k-y_1\geq m^\varnothing$, then
\begin{align*}
c_{G^x}(\underline u_k)+\underline u_k-\big(c_{G^\varnothing}(\bar u_k-y_1)+\bar u_k-y_1\big)&=c_{F}(\underline u_k)+\underline u_k-\big(c_{G^\varnothing}(\bar u_k-y_1)+\bar u_k-y_1\big)\geq \Phi(k)
\end{align*}
where the equality follows because $x=\underline u_k$ and $c_{G^x}(x)=c_F(x)$ by construction, and the inequality follows because $c_{G^\varnothing}(u)+u$ is weakly increasing. As $\Phi(k)>0$ by assumption, then for small enough $\epsilon>0$, $p_\epsilon\in\mathbb{R}_+$ and $o_\epsilon$ is once again well-defined.

For any $(\beta, \psi)$ that satisfies the constraints of \eqref{eq:min-max}, the agent's payoff difference from accepting $o_\epsilon$ versus accepting the null offer is given by 
\begin{align*}
&\underbrace{\int_\Theta\max\{m, \psi_2(o_\epsilon, 1, m)\}dG^x(m)-k-p_\epsilon}_{\text{payoff from accepting } o_\epsilon}-\left(\underbrace{\int_\Theta\max\{m, \psi_2(o_\epsilon, \emptyset, m)\}dG^\varnothing(m)-k}_{\text{payoff from accepting null offer}}\right)\\[6pt]
\geq &\int_\Theta\max\{m, \underline u_k\}dG^x(m)-p_\epsilon-\left(\int_\Theta\max\{m, \bar u_k-y_1\}dG^\varnothing(m)\right)\\[6pt]
= &c_{G^x}(\underline u_k)+\underline u_k-p_\epsilon-\Big(c_{G^\varnothing}\big(\bar u_k-y_1\big)+\bar u_k-y_1\Big)\\[6pt]
=&\epsilon,
\end{align*}
where the inequality follows from the fact that $\underline u_k\leq \psi_2(o, w, m)\leq \bar u_k-y_1$ for all $(o,w, m)\in \mathcal{O}\times W\times \Theta$. Therefore, the agent strictly prefers to accept the offer $o_\epsilon$ over the null offer, i.e., $\beta^\mathcal{w}(1|o_\epsilon)=1$. 

Furthermore, for any  $(\beta, \psi)$ that satisfies the constraints of \eqref{eq:min-max}, the broker's payoff from offering $o_\epsilon$ is given by 
\begin{align*}
    V_1(o_\epsilon, \beta, \psi)&=p_\epsilon + \int_\Theta \beta^\mathcal{d}(o_\epsilon, 1, m)\psi_1(o_\epsilon, 1,m)dG^x(m)\\[6pt]
    &\geq p_\epsilon + y_1\int_\Theta \beta^\mathcal{d}(o_\epsilon, 1, m)dG^x(m)\\[6pt]
    &\geq p_\epsilon + y_1 G^x(\underline u_k)\\[6pt]
    &=c_{G^x}(\underline u_k)+\underline u_k-\big(c_{G^\varnothing}(\bar u_k-y_1)+\bar u_k-y_1\big)+ y_1 G^x(\underline u_k)-\epsilon
\end{align*}
where the first equality follows because  $\beta^\mathcal{w}(1|o_\epsilon)=1$, the first inequality follows from the fact that $\psi_1(o, w, m)\geq y_1$ for all $(o,w, m)\in \mathcal{O}\times W\times \Theta$, the second inequality follows because sequential rationality of $\beta$ implies that $\beta^\mathcal{d}(o, w, m)\geq\mathbbm{1}_{[m<\underline u_k]}$ for all $m\in\Theta$, and the last equality follows from substituting in the expression for $p_\epsilon$.

If $\bar u_k-y_1<m^\varnothing$, then 
\begin{align*}
    V_1(o_\epsilon, \beta, \psi)&\geq c_{G^x}(\underline u_k)+\underline u_k-\big(c_{G^\varnothing}(\bar u_k-y_1)+\bar u_k-y_1\big)+ y_1 G^x(\underline u_k)-\epsilon\\[6pt]
     &=c_{G^x}(\underline u_k)+\underline u_k-m^\varnothing+ y_1 G^x(\underline u_k)-\epsilon\\[6pt]
    &=c_{G^x}(\underline u_k)-k+ y_1 G^x(\underline u_k)-\epsilon\\[6pt]
    &=c_{G^x}(\underline u_k)-k+ y_1 F(x)-\epsilon\\[6pt]
    &= c_{F}(x)+\big(1-F(x)\big)(x-\underline u_k)-k+ y_1 F(x)-\epsilon\\[6pt]
    &=c_{F}(x)-k+y_1-\epsilon
    \end{align*}
where the first equality follows because $c_{G^\varnothing}(u)=(m^\varnothing-u)^+$ and $\bar u_k-y_1<m^\varnothing$ by assumption, the second equality follows because $\underline u_k=m^\varnothing-k$, the third equality follows because $G^{x}(m)=F(x)$ for all $m\in (\mathbb{E}_F[\theta|\theta\leq x], \mathbb{E}_F[\theta|\theta\geq x])$ by \eqref{eq:pass-fail-distibution} and the fact that $\mathbb{E}_F[\theta|\theta\leq x]\leq \mathbb{E}_F[\theta|\theta\leq \bar u_k]<\underline u_k<\mathbb{E}_F[\theta|\theta\geq \underline u_k]\leq \mathbb{E}_F[\theta|\theta\geq x]$ by \autoref{lemma:support}, the fourth equality follows from \autoref{lemma:dist-function}, and the final equality follows because $x=\underline u_k+y_1$ by construction. Therefore, for any $(\beta, \psi)$ that satisfies the constraints of \eqref{eq:min-max}, the broker's optimal payoff is bounded below by $c_F(x)-k+y_1$, and we can conclude that 
\begin{align*}
\underline{\mathcal{V}}\Big(\conv\big(\{(\bar u_k-\underline u_k, \underline u_k), (y_1,\underline u_k), (y_1, \bar u_k-y_1)\}\big)\Big)&\geq c_F(x)-k+y_1>y_1
\end{align*}
where the second inequality follows because $c_F(\cdot)$ is strictly decreasing with $c_F(\bar u_k)=k$ by \eqref{eq:fixed-pt}, and $x=\underline u_k+y_1$ with $y_1\in [0, \bar u_k-\underline u_k)$ by assumption. 

Similarly, if $\bar u_k-y_1\geq m^\varnothing$, then 
\begin{align*}
    V_1(o_\epsilon, \beta, \psi)&\geq c_{G^x}(\underline u_k)+\underline u_k-\big(c_{G^\varnothing}(\bar u_k-y_1)+\bar u_k-y_1\big)+ y_1 G^x(\underline u_k)-\epsilon\\[6pt]
     &=c_{G^x}(\underline u_k)+\underline u_k-\big(c_{G^\varnothing}(\bar u_k)+\bar u_k-y_1\big)+ y_1 G^x(\underline u_k)-\epsilon\\[6pt]
     &=c_{F}(\underline u_k)+\underline u_k-\big(c_{G^\varnothing}(\bar u_k)+\bar u_k-y_1\big)+ y_1 G^x(\underline u_k)-\epsilon\\[6pt]
     &=\Phi(k)+ y_1\big(1+ G^x(\underline u_k)\big)-\epsilon\end{align*}
where the first equality follows because $c_{G^\varnothing}(u)=(m^\varnothing-u)^+$ is a constant for all $u\geq m^\varnothing$ and $\bar u_k>\bar u_k-y_1\geq m^\varnothing$ by assumption, and the second equality follows because $x=\underline u_k$ and $c_{G^x}(x)=c_F(x)$ by construction. Therefore, for any $(\beta, \psi)$ satisfying the constraints of \eqref{eq:min-max}, the broker's optimal payoff is bounded below by $\Phi(k)+ y_1\big(1+F(\underline u_k)\big)$, implying
\begin{align*}
&\underline{\mathcal{V}}\Big(\conv\big(\{(\bar u_k-\underline u_k, \underline u_k), (y_1,\underline u_k), (y_1, \bar u_k-y_1)\}\big)\Big)
\geq \Phi(k)+ y_1\big(1+F(\underline u_k)\big)>y_1
\end{align*}
where the second inequality follows because $\Phi(k)>0$  by  assumption.

Consequently,  when $\Phi(k)>0$ (with either $\bar u_k-y_1<m^\varnothing$ or $\bar u_k-y_1\geq m^\varnothing$), there exists no fixed point of \eqref{eq:fixed-pt} with $y_1\in[0,\bar u_k-\underline u_k)$. By Step 1, we conclude that $y_1=\bar u_k-\underline u_k$ is the unique fixed point of \eqref{eq:fixed-pt} for any $k\in (0, m^\varnothing)$ that satisfies $\Phi(k)>0$. In other words, $\nu_k=\bar u_k-\underline u_k$ if $\Phi(k)>0$. This concludes Step 2.\medskip

\noindent \textbf{Step 3:} Fix any $k\in(0, m^\varnothing)$ such that  $\Phi(k)\leq 0$. We show $y_1=0$ is a fixed point of \eqref{eq:fixed-pt}.

When $y_1=0$, the set $\conv\big(\{(\bar u_k-\underline u_k, \underline u_k), (y_1,\underline u_k), (y_1, \bar u_k-y_1)\}\big)$ is equivalent to the feasible set $\mathcal{F}(1,k)$. Consider a pair $(\beta,\psi)$ where the reward function $\psi$ is
   \[
\psi(o,w,m)=\begin{cases}
    (0, \underline u_k) & \mbox{if }  w=1\\
    (0, \bar u_k) & \mbox{if }  w=\emptyset,
    \end{cases}
    \]
and the agent's simple strategy is given by $\beta^\mathcal{d}(o,w,m)=\mathbbm{1}_{[m\leq \psi_2(o,w,m)]}$, and for all $o\coloneqq (p, G)\in\mathcal{O}$,
    \[
\beta^\mathcal{w}(1|o)=\begin{cases}
    0 & \mbox{if }  c_{G}(\underline u_k)+\underline u_k-p\leq  c_{G^\varnothing}(\bar u_k)+\bar u_k\\
    1 & \mbox{if }  c_{G}(\underline u_k)+\underline u_k-p> c_{G^\varnothing}(\bar u_k)+\bar u_k.
    \end{cases}
    \]
Notice $Im_\psi\subseteq\mathcal{F}(1,k)$ and $\beta$ is sequentially rational with respect to $\psi$. Additionally, notice that for any TIOLI offer $o= (p,G)\in\mathcal{O}$,  
    $c_{G}(\underline u_k)+\underline u_k-p\leq  c_{F}(\underline u_k)+\underline u_k-p\leq  c_{G^\varnothing}(\bar u_k)+\bar u_k,$
where the first inequality follows because $c_G\leq c_F$ pointwise for all $G\in \mathcal{G}(F)$, and the second inequality follows by the ongoing assumption  of Step 3 that $\Phi(k)\leq 0$. Thus, the agent rejects all TIOLI offers made by the broker in this case, i.e., $\beta^\mathcal{w}(1|o)=0$ for all $o\in\mathcal{O}$. 

Given the prescribed $(\beta, \psi)$, the broker's payoff from offering any $o\in\mathcal{O}$ is 
\begin{align*}
    V_1(o, \beta, \psi)&=\int_\Theta \beta^\mathcal{d}(o, \emptyset, m)\psi_1(o, \emptyset,m)dG^\varnothing(m)=0
\end{align*}
where the first equality follows since the agent rejects all offers, and the second equality follows because $\psi_1(o,w,m)=0$ for all $(o,w,m)\in\mathcal{O}\times W\times \Theta$. 

We have thus shown that there exists a pair $(\beta, \psi)$ satisfying the constraints of \eqref{eq:min-max} with $\sup_{o\in\mathcal{O}}V_1(o,\beta, \psi)=0$. Therefore, $\underline{\mathcal{V}}\big(\mathcal{F}(1,k)\big)\leq 0$.  However, the smallest possible value for $\underline{\mathcal{V}}(\mathcal{F}(1,k))$ cannot be lower than zero, which therefore implies that $y_1=0$ is the smallest fixed-point solution to \eqref{eq:fixed-pt} when $\Phi(k)\leq 0$. In other words, $\nu_k=0$ if $\Phi(k)\leq 0$. This concludes Step 3.\medskip

\noindent \textbf{Step 4:} We finalize the proof by showing that there exists a unique $k^*\in (0, m^\varnothing)$ such that $\Phi(k)\leq 0$ for all $k\leq k^*$ and $\Phi(k)>0$ for all $k>k^*$. 

We claim that there exists a unique $K\in (0,m^\varnothing)$ such that $\bar u_{K}=m^\varnothing$. To see this, recall from \autoref{lemma:surplus-compstat-cost} that the mapping $k\mapsto \bar u_k$ is continuous and decreasing. Moreover, $\lim_{k\to 0}\bar u_k=1>m^\varnothing$ (the agent is happy to keep searching until he finds the highest quality good when searching is costless) and $\lim_{k\to m^\varnothing}\bar u_k=0<m^\varnothing$ (the agent immediately stops his search when searching is too costly). Therefore, the claim is completed by appealing to the intermediate value theorem.

Recall that $c_{G^\varnothing}(u)=(m-u)^+$. Thus,  for all $k\geq K$,  we have $\bar u_k\leq m^\varnothing$, which implies that  $c_{G^\varnothing}(\bar u_k)+\bar u_k=m^\varnothing$.  In this case $\Phi(k)=c_F(\underline u_k)+\underline u_k-m^\varnothing$. As $\underline u_k$ is strictly decreasing in $k$ and $c_F(u)+u$ is strictly increasing in $u$, $\Phi(k)$ is a strictly decreasing function when $k\in(K, m^\varnothing)$. Additionally, $\lim_{k\to m^\varnothing} \Phi(k)=c_F(0)-m^\varnothing=0$. Therefore, $\Phi(k)>0$ for all $k\in[ K, m^\varnothing)$.

In contrast, when $k<K$, then $\bar u_k> m^\varnothing$, which implies that $c_{G^\varnothing}(\bar u_k)=0$. In this case $\Phi(k)=c_F(\underline u_k)-\big(\bar u_k-\underline u_k\big)$. As $\underline u_k$ is strictly decreasing in $k$ and $c_F(u)$ is strictly decreasing in $u$, $c_F(\underline u_k)$ is strictly increasing in $k$. Furthermore, by \autoref{lemma:surplus-compstat-cost}, $\bar u_k-\underline u_k$ is strictly decreasing in $k$. Thus $\Phi(k)$ is a strictly increasing function when $k\in(0,K)$. Additionally, 
\[
\lim_{k\to 0} \Phi(k)=c_F(m^\varnothing)-(1-m^\varnothing)=\int^1_{m^\varnothing}\big(1-F(m)\big)dm-(1-m^\varnothing)<0.
\]
Since $\Phi$ is continuous and strictly increasing with $\Phi(K)>0$ and $\lim_{k\to 0}\Phi(k)<0$, there exists a unique $k^*\in (0,K)$ such that $\Phi(k^*)=0$, and $\Phi(k)>0$ for all $k>k^*$ and $\Phi(k)<0$ for all $k<k^*$ (see \autoref{fig:phi}). This concludes Step 4, and the proof of the lemma. 
\end{proof}
\bigskip

In other words, \autoref{lemma:monopoly-fixed-point} states that $\cl\big(\mathcal{S}(1,k)\big)=\{(\bar u_k-\underline u_k, \underline u_k)\}$ if $k>k^*$ and $\cl\big(\mathcal{S}(1,k)\big)=\mathcal{F}(1,k)$ if $k\leq k^*$. Moreover, the proof of \autoref{lemma:monopoly-fixed-point} shows that the inf-sup in \eqref{eq:min-max} is always attained, i.e., for any $k\in(0,m^\varnothing)$, there exists a pair $(\beta^k, \psi^k)$ with $\beta^k$ sequentially rational with respect to $\psi^k$ and $Im_{\psi^k}\subseteq \cl(\mathcal{S}(1,k))$ such that $\sup_{o\in \mathcal{O}}V_1(o, \beta^k,\psi^k)=\underline{\mathcal{V}}(\cl(\mathcal{S}(1,k)))=\nu_k$. 

As a result, $\mathcal{S}(1,k)$ is closed. To see why, consider any $y\in\cl(\mathcal{S}(1,k))$, which by \autoref{lemma:monopoly-lower-bound} satisfies $y_1\geq \nu_k$. By \autoref{lemma:stationary-implement}, there exists a tuple  $(\alpha, \beta,\psi)$ such that $\alpha$ is centered on $o^*(y)$, $\beta$ is sequentially rational with respect to $\psi$,  $\psi(o^*(y), w,m)=y\in\cl(\mathcal{S}(1,k))$ for all $(w,m)\in W\times \Theta$, and $y$ is generated by $(\alpha, \beta,\psi)$. Additionally, following an off-path offer $o\neq o^*(y)$, we can choose the agent's strategy to be $\beta=\beta^k$ and the reward function to be $\psi=\psi^k$, which yields the monopolist a payoff of at most $\nu_k$, making deviations unprofitable. Hence, $y$ is supported by $(\alpha, \beta,\psi)$, and crucially, $Im_{\psi}\subseteq\cl(\mathcal{S}(1,k))$, which makes $\cl(\mathcal{S}(1,k))$ self-generating. However, $\mathcal{S}(1,k)$ is the largest self-generating set, which immediately implies that $\cl(\mathcal{S}(1,k))=\mathcal{S}(1,k)$. Consequently, $\mathcal{S}(1,k)$ is closed for all $k\in (0,m^\varnothing)$ and, therefore, Borel.

\begin{lemma}\label{lemma:measurable}
 For all $k\in (0,m^\varnothing)$, $\mathcal{S}^m(1,k)=\mathcal{S}(1,k)$. 
\end{lemma}
\noindent\begin{proof}
It suffices to show that $\mathcal{S}(1,k)\subseteq\mathcal{S}^m(1,k)$ for all $k\in (0,m^\varnothing)$. First, suppose that $k>k^*$. Let $y^{FS}\coloneqq (\bar u_k-\underline u_k, \underline u_k)$ be the full-surplus extraction payoff profile. By \autoref{lemma:monopoly-fixed-point}, $\mathcal{S}(1,k)=\{y^{FS}\}$. Note that $y^{FS}$ is supported by the tuple $(\alpha, \beta,\psi)$ such that $\alpha$ is the Dirac measure centered on $o^*(y^{FS})$, the agent's simple strategies are
\[
\beta^\mathcal{w}(1|o)=\begin{cases}
1 & \mbox{if } c_{G}(\underline u_k)-c_{G^\varnothing}(\underline u_k)\geq p\\
0 & \mbox{if } c_{G}(\underline u_k)-c_{G^\varnothing}(\underline u_k)< p
    \end{cases}
    \]
for all $o=(p,G)\in \mathcal{O}$, and
\[
\beta^\mathcal{d}(o,w,m)=\begin{cases}
0 & \mbox{if }m\geq \underline u_k\\
1 & \mbox{if } m< \underline u_k,
    \end{cases}
    \]
and the reward function is $\psi(o,w,m)=y^{FS}$ for all $(o,w,m)\in \mathcal{O}\times W\times \Theta$. 

The functions in this tuple are Borel measurable. Indeed, $G\mapsto c_G(\underline u_k)$ is continuous under the weak-* topology, so the set $\{(p,G):c_G(\underline u_k)-c_{G^\varnothing}(\underline u_k)\ge p\}$ is Borel; the remaining rules are constant or indicators of Borel threshold
sets. Hence, the tuple is universally measurable. Since $\mathcal{S}(1,k)$ is a singleton, defining the universally measurable functions $ (\bm{\alpha},\bm{\beta}, \bm{\psi})$ with $(\bm{\alpha}[y],\bm{\beta}[y], \bm{\psi}[y])=(\alpha,\beta,\psi)$ for all $y\in \mathcal{S}(1,k)$ and noting that $Im_{\bm{\psi}}=\mathcal{S}(1,k)$ yields the desired result. 

Next, consider any $k\leq k^*$. By \autoref{lemma:monopoly-fixed-point}, $\mathcal{S}(1,k)=\mathcal{F}(1,k)$. Let $y^a\coloneqq (0,\underline u_k)$ and $y^r=(0, \bar u_k)$. Consider any $y\in \mathcal{F}(1,k)=\mathcal S(1,k)$. Clearly, $\{y, y^a, y^r\}\subseteq \mathcal{F}(1,k)$. It is also a Borel set. We proceed by showing that $\{y, y^a, y^r\}$ is measurably self-generating, and hence $\{y, y^a, y^r\}\subseteq \mathcal{S}^m(1,k)$. Since $y\in\mathcal S(1,k)$ is arbitrary, this suffices to conclude that $\mathcal{S}(1,k)\subseteq \mathcal S^m(1,k)$ as desired.

To that end, define the functions $(\bm{\alpha}, \bm{\beta}, \bm{\psi})$ such that for all $y'\in\{y, y^a, y^r\}$, the monopolist's simple strategy $\bm{\alpha}(y')$ is the Dirac measure centered on $o^*(y')$, the agent's simple strategies are
\[
\bm{\beta^\mathcal{w}}(1|y',o)=\begin{cases}
1 & \mbox{if } o=o^*(y')\\
0 & \mbox{if } o\neq o^*(y')
    \end{cases} \quad \text{ and } \quad \bm{\beta^\mathcal{d}}(y',o,w,m)=\begin{cases}
0 & \mbox{if } o=o^*(y'), w=1,  \text{ and } m\geq y'_2\\
0 & \mbox{if } o\neq o^*(y'), w=1,  \text{ and } m\geq y^a_2\\
1 & \mbox{otherwise, }
    \end{cases}
    \]
    and
    \[
\bm{\psi}(y',o,w,m)=\begin{cases}
y' & \mbox{if } o=o^*(y')\\
y^a & \mbox{if } o\neq o^*(y') \text{ and } w=1\\
y^r & \mbox{if } o\neq o^*(y') \text{ and } w=\emptyset.
    \end{cases}
    \]
The functions $(\bm{\alpha},\bm{\beta}, \bm{\psi})$ are Borel measurable as, once again, these are simple functions based on Borel-measurable sets. Hence,  $(\bm{\alpha},\bm{\beta}, \bm{\psi})$ are universally measurable. It is also straightforward to check that each $y'\in\{y,y^a,y^r\}$ is supported by $(\bm{\alpha}[y'],\bm{\beta}[y'], \bm{\psi}[y'])$ and that $Im_{\bm{\psi}}\subseteq \{y,y^a,y^r\}$, and thus, $\{y,y^a,y^r\}$ is measurably self-generating.  
\end{proof}
\bigskip

Combining \autoref{lemma:equilibrium-characterization} and \autoref{lemma:measurable}, we now conclude that $\mathcal{S}(1,k)=\mathcal{E}(1,k)$, which gives us the desired result of the proposition.
\end{proof}
\bigskip

\noindent\begin{proof}[Proof of \autoref{thm:bertrand}]
Fix any $k\in (0,m^\varnothing)$ and $n>1$. We prove the proposition by showing that $\{y^B\}$ is measurably self-generating. To that end, consider the tuple $(\alpha, \beta, \psi)$ such that
\begin{enumerate}[$(\roman*)$]
    \item $\alpha_i\big((0, F)\big)=1$ for all $i\in N$, 
    \item $\beta$ is sequentially rational with respect to $\psi$, and 
    \item $\psi\big(O, w, m)=y^B$ for all $(O, w, m)\in \mathcal{O}^n\times W\times\Theta$. 
\end{enumerate}
Notice that the specified tuple $(\alpha, \beta, \psi)$ implies the agent searches in each period by observing full information for free. As such, the tuple $(\alpha, \beta, \psi)$ generates the payoff profile $y^B$. Hence, $(\alpha, \beta,\psi)$ satisfies Points $(a)$ and $(c)$ of \autoref{def:self-generation}. Additionally, Broker $i\in N$ gets a payoff of 0 by offering the on-path contract $(0,F)$. If she instead deviates by offering $o_i\coloneqq (p_i, G_i)\neq (0,F)$ while the other $n-1$ brokers offer $(0,F)$, then it is optimal for the agent to reject Broker $i$'s contract, which again yields her a payoff of 0. Hence, no broker has a profitable deviation, i.e., Point $(b)$ of \autoref{def:self-generation} is satisfied. Thus, $y^B$ is supported by the tuple $(\alpha, \beta,\psi)$. By construction, $Im_\psi=\{y^B\}$. Hence, $\{y^B\}$ is self-generating. Finally, the tuple $(\alpha, \beta,\psi)$ is Borel measurable.\footnote{A similar argument to the one used in the proof of \autoref{lemma:measurable} for the case when $k>k^*$ establishes the measurability property.} Thus, $\{y^B\}$ is measurably self-generating, which by \autoref{lemma:equilibrium-characterization}, implies $y^B\in\mathcal{E}(n,k)$ as desired.\end{proof}
\bigskip

\noindent\begin{proof}[Proof of \autoref{thm:competition}]
Fix any $k\leq k^*$ and $n>1$. We prove \autoref{thm:competition} by showing that for any $y\in\mathcal{F}(n,k)$, the set $\{y, y^A, y^B\}$ is measurably self-generating, which suffices to establish $\mathcal{F}(n,k)\subseteq \mathcal{S}^m(n,k)$. To that end, consider any $y'\in\{y,y^A,y^B\}$ and the functions $(\bm{\alpha}, \bm{\beta}, \bm{\psi})$ such that
\begin{enumerate}[$(\roman*)$]
    \item $\bm{\alpha_i}[y']\big(o^*(y')\big)=1$ for all $i\in N$, 
    \item $\bm{\beta}[y']$ is sequentially rational with respect to $\bm{\psi}[y']$, and 
    \item $\bm{\psi}$ is given by 
    \[
    \label{eq:cont}
    \tag{8}
\bm{\psi}\big(y', O, w, m)=\begin{cases}
 y'   &  \mbox{if }  o_i=o^*(y') \text{ for all } i\in N\\
y^A   &  \mbox{if }  o_i\neq o^*(y') \text{ for some } i\in N \text{ and } w= i\\
y^B   &  \mbox{if }  o_i\neq o^*(y') \text{ for some } i\in N \text{ and } w\neq i.
\end{cases}
\]
\end{enumerate}
Notice that the specified functions $(\bm{\alpha}, \bm{\beta}, \bm{\psi})$ are Borel measurable.\footnote{A similar argument to the one used in the proof of \autoref{lemma:measurable} for the case when $k\leq k^*$ establishes the measurability property.}  By construction, $Im_{\bm{\psi}}=\{y, y^A, y^B\}$. Moreover, for all  $y'\in\{y,y^A,y^B\}$, the specified tuple $(\bm{\alpha}[y'], \bm{\beta}[y'], \bm{\psi}[y'])$ satisfies the conditions of \autoref{lemma:stationary-implement}, and therefore generates the payoff profile $y'$. Thus, Points $(a)$ and $(c)$ of \autoref{def:self-generation} are satisfied. Hence, to show that $\{y,y^A,y^B\}$ is measurably self-generating, it suffices to show that no broker has a profitable deviation (Point $(b)$ of \autoref{def:self-generation}). 

Suppose Broker $i$ deviates by offering $o_i\coloneqq (p_i, G_i)\neq o^*(y')$ while the other $n-1$ brokers offer $o^*(y')$. Given the reward function $\bm{\psi}$ specified in \eqref{eq:cont}, the agent's expected value from accepting $o_i$ is given by $\underline u_k+c_{G_i}(\underline u_k)-k-p_i.$ Conversely, if the agent optimally accepts either the null offer or $o^*(y')$ from one of the non-deviating brokers, his payoff is $\max\{\bar u_k+c_{G^{\mathbf x_k(\norm{y'})}}(\bar u_k)-P(y')-k, \bar u_k+c_{G^{\varnothing}}(\bar u_k)-k\}.$ Therefore, it is sequentially rational for the agent to accept $o_i$ only if 
\begin{align*}
\underline u_k+c_{G_i}(\underline u_k)-p_i\geq& \max\{\bar u_k+c_{G^{\mathbf{x}_k(\norm{y'})}}(\bar u_k)-P(y'), \bar u_k+c_{G^\varnothing}(\bar u_k)
\}.
\end{align*}
Indeed, the agent's willingness-to-pay for signal $G_i\in\mathcal{G}(F)$ in this setting is given by 
\[
\label{eq:wtp}
\tag{9}
\mathcal{P}_k(G_i;y')\coloneqq \big(\underline u_k-\bar u_k+c_{G_i}(\underline u_k)- \max\{c_{G^{\mathbf{x}_k(\norm{y'})}}(\bar u_k)-P(y'), c_{G^\varnothing}(\bar u_k)\}\big)^+,
\]
and it is sequentially rational for the agent to accept $o_i=(p_i, G_i)$ instead of $o^*(y')$ or the null offer only if $p_i\leq \mathcal{P}_k(G_i;y')$. Furthermore, $\mathcal{P}_k(G_i;y')$ serves as an upper bound on Broker $i$'s deviation payoff from offering $o_i$ because $\mathbb{E}_{\bm{\alpha_{-i}}[y']}\big[ V_i\left(o_i, O_{-i}, \bm{\beta}[y'],\bm{\psi}[y'] \big) \right]=p_i \bm{\beta^\mathcal{w}}[y'](i|o_i,O^*(y')_{-i})\leq \mathcal{P}_k(G_i;y').$ The following lemma allows us to further bound $\mathcal{P}_k(G;y')$ for all $G\in\mathcal{G}(F)$, thereby providing a uniform upper bound on any deviation payoff that a broker can attain.

\begin{lemma}\label{lemma:no-wtp}
For any $k\leq k^*$, $\mathcal{P}_k(G;y)=0$ for all $y\in\mathcal{F}(n,k)$ and all $G\in \mathcal{G}(F)$. 
\end{lemma}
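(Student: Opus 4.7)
The plan is to dominate the maximum inside the definition of $\mathcal{P}_k(G;y)$ by its second argument $c_{G^\varnothing}(\bar u_k)$, reducing the claim to an inequality that involves only $G$ and $k$ (and neither $y$ nor $n$), and then to reduce further to the most informative signal $G=F$. The remaining inequality will turn out to be precisely the one that characterizes the threshold $k^*$ in \autoref{lemma:monopoly-fixed-point}, so the statement follows for all $k\leq k^*$.

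First, I would observe that $\max\{c_{G^{\mathbf{x}_k(\norm{y})}}(\bar u_k)-P(y),\,c_{G^\varnothing}(\bar u_k)\}\geq c_{G^\varnothing}(\bar u_k)$. Plugging this lower bound into the expression inside the $(\cdot)^+$ in \eqref{eq:wtp}, a sufficient condition for $\mathcal{P}_k(G;y)=0$ is
\[
c_G(\underline u_k)+\underline u_k\leq \bar u_k+c_{G^\varnothing}(\bar u_k).
\]
Intuitively, this step discards the non-deviating brokers' on-path contracts from the agent's outside-option menu and asks only that the agent be willing to reject the deviation in favor of the null offer with continuation $\bar u_k$. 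Since the outside option actually faced in the competitive game is weakly better than this, the bound is legitimate.

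Second, because every $G\in\mathcal{G}(F)$ is a mean-preserving contraction of $F$, we have $c_G\leq c_F$ pointwise, so it suffices to establish the worst-case inequality
\[
c_F(\underline u_k)+\underline u_k\leq \bar u_k+c_{G^\varnothing}(\bar u_k).
\]
This is exactly the condition stating that an agent who gets continuation $\bar u_k$ upon rejection and $\underline u_k$ upon acceptance is unwilling to pay any positive price even for the most informative signal $F$. Equivalently, it is the condition that forces the monopolist's minimax payoff $\nu_k$ to equal $0$, so by \autoref{lemma:monopoly-fixed-point} it holds if and only if $k\leq k^*$, which gives the lemma.

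The main obstacle is the last step: cleanly tying this inequality to the characterization of $k^*$ in \autoref{lemma:monopoly-fixed-point}. If the proof of that lemma already identifies $k^*$ as the unique $k\in(0,m^\varnothing)$ at which the inequality binds, the reduction is immediate; otherwise a short argument is needed using \autoref{lemma:surplus-compstat-cost} to establish monotonicity in $k$ of the difference $c_F(\underline u_k)+\underline u_k-\bar u_k-c_{G^\varnothing}(\bar u_k)$, so as to reconcile the two characterizations of $k^*$. All remaining steps are purely algebraic.
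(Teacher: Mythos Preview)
Your proposal is correct and follows essentially the same approach as the paper: bound the max in \eqref{eq:wtp} below by $c_{G^\varnothing}(\bar u_k)$, reduce to $G=F$ via $c_G\leq c_F$, and invoke $\Phi(k)\coloneqq c_F(\underline u_k)+\underline u_k-(c_{G^\varnothing}(\bar u_k)+\bar u_k)\leq 0$ for $k\leq k^*$. Your ``main obstacle'' is not one: Step~4 of the proof of \autoref{lemma:monopoly-fixed-point} establishes precisely that $\Phi(k)\leq 0$ if and only if $k\leq k^*$, so no additional monotonicity argument is needed.
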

\noindent\begin{proof}
Recall that $c_{G'}\leq c_{G''}$ pointwise for any $G', G''\in\mathcal{G}(F)$ such that $G'$ is a mean-preserving contraction of $G''$. Consequently, for all $y\in\mathcal{F}(n,k)$, $\mathcal{P}_k(G';y)\leq \mathcal{P}_k(G'';y)$. Thus, to prove the lemma, it suffices to show that $\mathcal{P}_k(F;y)=0$ for any $k\leq k^*$ and any $y\in\mathcal{F}(n,k)$. To see that this condition holds, note $\mathcal{P}_k(F;y)\geq 0$ by construction. Additionally,  
\begin{align*}
\mathcal{P}_k(F;y)&=\Big(\Phi(k)-\big(c_{G^{\mathbf{x}_k(\norm{y})}}(\bar u_k)-c_{G^\varnothing}(\bar u_k)-P(y)\big)^+\Big)^+\leq \Phi(k)^+=0
\end{align*}

\noindent where the first equality follows from 
\eqref{eq:phi}, and the last equality follows because $\Phi(k)\leq 0$ for all $k\leq k^*$ (see Step 4 in the proof of \autoref{lemma:monopoly-fixed-point}).\end{proof}\bigskip

When $k\leq k^*$, \autoref{lemma:no-wtp} implies that $\mathbb{E}_{\bm{\alpha_{-i}}[y']}\left[ V_i\big(o_i, O_{-i}, \bm{\beta}[y'],\bm{\psi}[y'] \big) \right]=0$. Thus, no broker has a profitable deviation, which implies that Point $(b)$ of \autoref{def:self-generation} is also satisfied. Consequently, $\mathcal{F}(n,k)\subseteq\mathcal{S}^m(n,k)$, which by \autoref{lemma:equilibrium-characterization}, implies $\mathcal{F}(n,k)=\mathcal{E}(n,k)$ as desired. \end{proof}
\bigskip

\noindent \begin{proof}[Proof of \autoref{thm:multiple}]
For all $k\leq k^*$ and $n>1$, the desired result trivially follows: for all $\epsilon>0$, $\mathcal{F}^\epsilon(n,k)\subseteq \mathcal{F}(n,k)$, and \autoref{thm:competition} establishes that $\mathcal{F}(n, k)=\mathcal{E}(n,k)$. 

Next, let us consider the case when the search cost exceeds $k^*$. Our proof proceeds to show that there exists a threshold $k^{**}\in(k^*, m^\varnothing)$ and a constant $\epsilon>0$ such that for each $k\in (k^*, k^{**}]$, the payoff set $\mathcal{F}^\epsilon(n,k)\subseteq\mathcal{S}^m(n,k)$, which establishes the desired result that $\mathcal{F}^\epsilon(n,k)\subseteq \mathcal{E}(n,k)$. To that end, we first make the following observation. 
\begin{lemma}\label{lemma:epsilon}
There exists a $k^{**}\in (k^*, m^\varnothing)$ and a constant $\epsilon>0$ such that for all $k\in (k^*,  k^{**}]$, for all $n>1$, for all $y\in \mathcal{F}^\epsilon(n,k)$, and for all $G\in\mathcal{G}(F)$, $\mathcal{P}_k(G;y)=0$.
\end{lemma}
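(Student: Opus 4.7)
The plan is to reduce the conclusion to a single scalar inequality by maximizing the agent's willingness-to-pay over $G\in\mathcal G(F)$, verify strict uniform slack at $k=k^*$ on the face $\{y:y_{n+1}=\norm{y}\}$, and then propagate the bound by joint continuity to a neighborhood.

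For the reduction, because $G\mapsto c_G(\underline u_k)$ is monotone in the Blackwell order and $F$ is Blackwell-maximal in $\mathcal G(F)$, $\sup_{G\in\mathcal G(F)}c_G(\underline u_k)=c_F(\underline u_k)$. The requirement $\mathcal{P}_k(G;y)=0$ for every $G$ is therefore equivalent to
\[
\Phi(k,y):=c_F(\underline u_k)-(\bar u_k-\underline u_k)-\max\bigl\{c_{G^{\mathbf x_k(\norm{y})}}(\bar u_k)-P(y),\,c_{G^\varnothing}(\bar u_k)\bigr\}\leq 0.
\]
Since $\Phi$ depends on $y$ only through $\norm{y}$ and $y_{n+1}$, uniformity in $n$ is automatic. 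From the monopoly fixed-point analysis (\autoref{lemma:monopoly-fixed-point}), I would extract the characterization $c_F(\underline u_{k^*})=(\bar u_{k^*}-\underline u_{k^*})+c_{G^\varnothing}(\bar u_{k^*})$: this is precisely the monopoly analogue of the displayed inequality (with the $c_{G^{\mathbf x_k(\norm{y})}}$ entry dropped, reflecting the absence of a non-deviating broker after rejection) evaluated at the on-path monopoly payoff $\nu=0$, and $k^*$ is defined so that this analogue switches sign.

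Specializing $\Phi$ to $k=k^*$ and $y$ with $y_{n+1}=\norm{y}=:u^*$ gives $P(y)=0$ by definition of $\mathbf x_k$ and $P$, so
\[
\Phi(k^*,y)=c_{G^\varnothing}(\bar u_{k^*})-\max\bigl\{c_{G^{\mathbf x_{k^*}(u^*)}}(\bar u_{k^*}),\,c_{G^\varnothing}(\bar u_{k^*})\bigr\}\leq 0.
\]
The inequality is strict whenever $c_{G^{\mathbf x_{k^*}(u^*)}}(\bar u_{k^*})>c_{G^\varnothing}(\bar u_{k^*})$. A direct computation using the two-point support of the pass-fail signal yields $c_{G^{\mathbf x_{k^*}(u^*)}}(\bar u_{k^*})-c_{G^\varnothing}(\bar u_{k^*})=F(\mathbf x_{k^*}(u^*))\bigl(\bar u_{k^*}-\mathbb{E}_F[\theta\mid\theta\leq\mathbf x_{k^*}(u^*)]\bigr)$, which is strictly positive by absolute continuity of $F$ together with $\mathbf x_{k^*}(u^*)\geq\bar u_{k^*}>\mathbb{E}_F[\theta\mid\theta\leq\mathbf x_{k^*}(u^*)]$. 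Continuity of $u^*\mapsto\mathbf x_{k^*}(u^*)$ (\autoref{lemma:pass-fail-sufficiency}) and compactness of $[\underline u_{k^*},\bar u_{k^*}]$ then deliver a uniform bound $\Phi(k^*,y)\leq-\delta_0<0$ for some $\delta_0>0$.

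Finally, the maps $k\mapsto(\underline u_k,\bar u_k)$ (\autoref{lemma:surplus-compstat-cost}), $(k,u^*)\mapsto\mathbf x_k(u^*)$, $(G,u)\mapsto c_G(u)$, and $P(\cdot)$ are continuous; moreover, $|P(y)|\leq\norm{y}-y_{n+1}\leq\epsilon$ on $\mathcal F^\epsilon(n,k)$ because $c_{G^{\mathbf x_k(\norm{y})}}$ has right derivative in $[-1,0]$. Joint continuity of $\Phi$ thus yields $k^{**}>k^*$ and $\epsilon>0$ with $\Phi(k,y)\leq-\delta_0/2<0$ for every $k\in(k^*,k^{**}]$, every $n>1$, and every $y\in\mathcal F^\epsilon(n,k)$, which is the desired conclusion. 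The main obstacle is securing the uniform strict slack in the previous step, especially at the boundary $u^*=\bar u_{k^*}$ where $\mathbf x_{k^*}(u^*)$ degenerates to $\bar u_{k^*}$ itself; verifying that $G^{\bar u_{k^*}}$ remains strictly more informative than $G^\varnothing$ at the evaluation point $\bar u_{k^*}$ is precisely where the distinctive force of competition---that the agent can substitute to an informative non-deviator's offer rather than to the null---enters the argument.
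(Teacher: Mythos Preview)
Your approach is essentially the paper's: reduce to $G=F$ by Blackwell monotonicity, exploit the strict slack at $k=k^*$ coming from the fact that the non-deviating brokers' pass-fail offer is strictly more valuable to the agent at $\bar u_k$ than the null signal, and then extend by continuity. The paper is just more explicit---it sets $\epsilon_k:=c_{G^{\mathbf x_k(\underline u_k)}}(\bar u_k)$ (the minimum over $\norm{y}$ by \autoref{lemma:order}), takes $\epsilon:=\inf_k\epsilon_k>0$, and bounds $\mathcal P_k(F;y)\le(\Phi(k)-\epsilon F(\bar u_k))^+$---whereas you invoke compactness and joint continuity to achieve the same end.

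One computational slip: your displayed identity
\[
c_{G^{\mathbf x_{k^*}(u^*)}}(\bar u_{k^*})-c_{G^\varnothing}(\bar u_{k^*})=F(\mathbf x_{k^*}(u^*))\bigl(\bar u_{k^*}-\mathbb{E}_F[\theta\mid\theta\leq\mathbf x_{k^*}(u^*)]\bigr)
\]
is not quite right. Since $k^*<K$ (Step~4 of the proof of \autoref{lemma:monopoly-fixed-point}), we have $\bar u_{k^*}>m^\varnothing$ and hence $c_{G^\varnothing}(\bar u_{k^*})=(m^\varnothing-\bar u_{k^*})^+=0$; the correct left-hand side is therefore $c_{G^{\mathbf x_{k^*}(u^*)}}(\bar u_{k^*})=(1-F(\mathbf x_{k^*}(u^*)))\bigl(\mathbb{E}_F[\theta\mid\theta\geq\mathbf x_{k^*}(u^*)]-\bar u_{k^*}\bigr)$, which differs from your right-hand side by the (negative) term $m^\varnothing-\bar u_{k^*}$. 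This does not damage your argument: the quantity is still strictly positive by \autoref{lemma:support}, and at the boundary $u^*=\bar u_{k^*}$ it equals $c_F(\bar u_{k^*})=k^*>0$, so the ``degeneration'' you worry about does not occur and the uniform lower bound $\delta_0>0$ follows.
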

\noindent \begin{proof}
For each $k\in (0,m^\varnothing)$, let $\epsilon_k\coloneqq c_{G^{\mathbf{x}_k(\underline u_k)}}(\bar u_k)$ and notice that 
\begin{align*}
 c_{G^{\mathbf{x}_k(\underline u_k)}}(\bar u_k)&=c_F(\mathbf{x}_k(\underline u_k))+(1-F(\mathbf{x}_k(\underline u_k)))(\mathbf{x}_k(\underline u_k)-\bar u_k)>c_F(\mathbf{x}_k(\underline u_k))>0,
\end{align*}
where the equality follows from \autoref{lemma:dist-function} and the  inequalities follow because $F$ is absolutely continuous and $\bar u_k<\mathbf{x}_k(\underline u_k)<1$ as stated in \autoref{lemma:pass-fail-sufficiency}. Let $\epsilon\coloneqq \inf_{k\in(k^*, m^\varnothing)}\epsilon_k$, and notice that $\epsilon>0$. 

Consider a search cost $k\in(k^*, K]$, where $K$ is the unique value of the search cost for which $\bar u_{K}=m^\varnothing$.\footnote{Step 4 of the proof of \autoref{lemma:monopoly-fixed-point} shows that a unique $K$ satisfying the desired property exists and that $k^*<K$.} For all $n>1$ and all $y\in \mathcal{F}^{\epsilon}(n,k)$, we have
\begin{align*}
\mathcal{P}_k(F;y)&=\Big(\Phi(k)-\big(c_{G^{\mathbf{x}_k(\norm{y})}}(\bar u_k)-c_{G^\varnothing}(\bar u_k)-P(y)\big)^+\Big)^+\\[6pt]
&=\Big(\Phi(k)-\big(c_{G^{\mathbf{x}_k(\norm{y})}}(\bar u_k)-P(y)\big)^+\Big)^+\\[6pt]
&\leq \Big(\Phi(k)-c_{G^{\mathbf{x}_k(\norm{y})}}(\bar u_k)+P(y)\Big)^+\\[6pt]
&=\Big(\Phi(k)-c_{G^{\mathbf{x}_k(\norm{y})}}(\bar u_k)+\big(1-F(\mathbf{x}_k(\norm{y}))\big)\big(\norm{y}-y_{n+1}\big)\Big)^+\\[6pt]
&\leq \Big(\Phi(k)-\epsilon F(\bar u_k)\Big)^+
\end{align*}
where the first equality is established in the Proof of \autoref{lemma:no-wtp}, the second equality follows because $\bar u_k\geq m^\varnothing$ for all $k\leq  K$, implying that $c_{G^\varnothing}(\bar u_k)=(m^\varnothing-\bar u_k)^+=0$, the last equality follows by the alternative characterization of $P(y)$ given in \eqref{eq:a3}, and the last inequality follows because \autoref{lemma:order} implies $c_{G^{\mathbf{x}_k(\norm{y})}}(\bar u_k)\geq c_{G^{\mathbf{x}_k(\underline u_k)}}(\bar u_k)=\epsilon_k\geq \epsilon$, because $\norm{y}-y_{n+1}\leq \epsilon$, and because $\mathbf x_k(\cdot)$ is a bounded from below by $\bar u_k$ as established in \autoref{lemma:pass-fail-sufficiency}.

The mapping $k\mapsto \Phi(k)-\epsilon F(\bar u_k)$ is continuous. Additionally, for all $k\leq k^*$, $\Phi(k)-\epsilon F(\bar u_{k})<0$ because $\Phi(k)\leq 0$ for all $k\leq k^*$ by construction (see Step 4 of the proof of \autoref{lemma:monopoly-fixed-point}) while $\epsilon F(\bar u_{k})>0$ for all $k\in (0,m^\varnothing)$. Hence, there exists a $k^{**}\in (k^*, K]$ such that $\Phi(k)-\epsilon F(\bar u_{k})\leq 0$ for all $k\leq k^{**}$. Consequently, for all $k\in(k^*, k^{**}]$, all $n>1$, and all $y\in\mathcal{F}^{\epsilon}(n,k)$, we have $\mathcal{P}_k(F;y)=0$. From the proof of \autoref{lemma:no-wtp}, $\mathcal{P}_k(F;\cdot)\geq \mathcal{P}_k(G;\cdot)\geq 0$ for all $G\in\mathcal{G}(F)$. Therefore, for all $k\in(k^*, k^{**}]$, all $n>1$, all $y\in\mathcal{F}^{\epsilon}(n,k)$, and all $G\in\mathcal{G}(F)$, we have $\mathcal{P}_k(G;y)=0$ as desired. 
\end{proof}
\bigskip

With this observation in hand, we proceed by fixing an $\epsilon>0$ such that the conclusion of \autoref{lemma:epsilon} holds. Fix some $k\in(k^*,  k^{**}]$ and some $n>1$.  For any payoff profile $y\in\mathcal{F}^\epsilon(n,k)$,  consider the set $\{y, y^A, y^B\}$. We shall show that $\{y, y^A, y^B\}$ is measurably self-generating. Since the choice of $y\in\mathcal{F}^\epsilon(n,k)$ is arbitrary, this suffices to establish that $\mathcal{F}^\epsilon(n,k)\subseteq\mathcal{S}^m(n,k)$.

To that end, take the functions $(\bm{\alpha}, \bm{\beta}, \bm{\psi})$ prescribed in the proof of \autoref{thm:competition}, which are Borel measurable. By construction, $Im_{\bm\psi}=\{y, y^A, y^B\}$. Moreover, for each $y'\in\{y,y^A,y^B\}$, the tuple $(\bm{\alpha}[y'], \bm{\beta}[y'], \bm{\psi}[y'])$ satisfies the conditions of \autoref{lemma:stationary-implement}, and therefore generates the payoff profile $y'$. Hence, Points $(a)$ and $(c)$ of \autoref{def:self-generation} are satisfied. Finally, if Broker $i$ deviates by offering $o_i=(p_i, G_i)\neq o^*(y')$ while the other $n-1$ brokers offer $o^*(y')$, then her deviation payoff is
\begin{align*}
\mathbb{E}_{\bm{\alpha_{-i}}[y']}\big[ V_i\left(o_i, O_{-i}, \bm{\beta}[y'],\bm{\psi}[y'] \big) \right]=p_i \bm{\beta^\mathcal{w}}[y'](i|o_i,O^*(y')_{-i})\leq \mathcal{P}_k(G_i;y')= 0,
\end{align*}
where the last equality follows from \autoref{lemma:epsilon}.
Thus, no broker has a profitable deviation, which implies Point $(b)$ of \autoref{def:self-generation}. Consequently, $\mathcal{F}^\epsilon(n,k)\subseteq\mathcal{S}^m(n,k)\subseteq\mathcal{E}(n,k)$.
\end{proof}

\subsection{Proofs from \Cref{prelim}}

\noindent\begin{proof}[Proof of \autoref{lemma:surplus-compstat-cost}]
To prove the first statement, fix some $G\in\mathcal{G}(F)$ and consider $k', k''\in (0, m^\varnothing )$ such that $k'<k''$. Since $c_G(\mathcal{u}_{k'}(G))=k'<k''=c_G(\mathcal{u}_{k''}(G))$ by construction, and since $c_G$ is a decreasing function, we have that $\mathcal{u}_{k'}(G)>\mathcal{u}_{k''}(G)$. Thus, for each $G\in\mathcal G(F)$, the mapping $k\to \mathcal u_k(G)$ is continuous and strictly decreasing. 

To prove the second statement, note that $\underline u_k=m^\varnothing -k$, and hence $\partial \underline u_k/\partial k=-1$. Additionally,  $c_F(\bar u_k)=k$ for all $k\in (0,m^\varnothing)$. Recall that $c_F(\cdot)$ is differentiable since $F$ is assumed to be absolutely continuous. Thus, we can implicitly differentiate to conclude 
\[
\frac{\partial \bar u_k}{\partial k}=\frac{-1}{1-F(\bar u_k)}<-1,
\]
where the inequality follows from the fact that $\bar u_k>0$ for all $k\in (0,m^\varnothing)$; otherwise, we would have $c_F(0)=k$ by \eqref{eq:1}, and $c_F(0)=m^\varnothing$ by construction, which violates the assumption that $k<m^\varnothing$. Therefore, $\bar u_k-\underline u_k$ is decreasing in $k$. \end{proof}
\bigskip

\noindent \begin{proof}[Proof of \autoref{lemma:equilibrium-characterization}]
For the entirety of this proof, fix any $n\geq 1$ and any $k\in (0,m^\varnothing)$. \medskip

\noindent \underline{$\mathcal{E}(n,k)\subseteq \mathcal{S}(n,k)$ direction:} Consider any $y\in \mathcal{E}(n,k)$. By definition, there exists some equilibrium strategy profile $\sigma$ such that  $y=((\mathbf V_{i}(\sigma))_{i\in N}, \mathbf U(\sigma))$. For each Broker $i\in N$, we can decompose her payoff, $\mathbf V_{i}(\sigma)$, into a sum of her payoff in the current period
\[
\mathbb{E}_{(\sigma_j|h^1)_{j\in N}}\left[p_{i}\sigma_A^\mathcal{w}(i|h^1, O)\right]
\]
and her expected continuation payoff 
\[
\mathbb{E}_{(\sigma_j|h^1)_{j\in N}}\left[\sum_{w\in W}\sigma_A^\mathcal{w}(w|h^1, O)\int_\Theta \mathbf V_{i 2}(\sigma|h^1\cup\{O, w, m\})\sigma_A^\mathcal{d}(h^1,O, w, m)dG_{w}(m)\right],
\]
where $\mathbb{E}_{(\sigma_j|h^1)_{j\in N}}$ is the expectation taken over $\mathcal{O}^n$ with respect to the probability induced by  $(\sigma_j(\cdot|h^1))_{j\in N}$. Similarly, we can decompose the agent's payoff, $U(\sigma)$, into his payoff in the current period 
\[
-k+\mathbb{E}_{(\sigma_j|h^1)_{j\in N}}\left[\sum_{w\in W}\sigma_A^\mathcal{w}(w|h^1, O)\left(-p_{w}+\int_\Theta m \hspace*{.15em} \big(1-\sigma^\mathcal{d}_A(h^1, O, w, m)\big)dG_{w}(m)\right)\right]
\]
and his expected continuation payoff
\[
\mathbb{E}_{(\sigma_j|h^1)_{j\in N}}\left[\sum_{w\in W}\sigma_A^\mathcal{w}(w|h^1, O)\int_\Theta \mathbf U_{2}(\sigma|h^1\cup\{O, w, m\})\sigma_A^\mathcal{d}(h^1,O, w, m)dG_{w}(m)\right].
\]

Define simple strategies $(\alpha, \beta)$ and a continuation payoff function $\psi$ as follows: for all $(O,w,m)\in\mathcal{O}^n\times W\times \Theta$, let
\begin{itemize}
    \item $\alpha_i(\cdot)=\sigma_i(\cdot|h^1)$ for all $i\in N$,
    \item $\beta^\mathcal{w}(\cdot|O)=\sigma^\mathcal{w}_A(\cdot|h^1, O)$,
    \item $\beta^\mathcal{d}(\cdot|O, w,m)=\sigma^\mathcal{d}_A(h^1, O, w,m)$,
    \item $\psi_i(O,w,m)=\mathbf V_{i 2}(\sigma|h^1\cup\{O,w,m\})$  for all $i\in N$, and 
    \item $\psi_{n+1}(O,w,m)= \mathbf U_{2}(\sigma|h^1\cup\{O,w,m\})$.
\end{itemize}

Clearly, $\mathbb{E}_{\alpha}[V_i(O, \beta, \psi)]=\mathbf V_i(\sigma)$ for all $i\in N$, and $\mathbb{E}_\alpha[U(O,\beta, \psi)]=\mathbf U(\sigma)$. Thus, $(\alpha, \beta, \psi)$ generates the payoff profile $y=((\mathbf V_{i}(\sigma))_{i\in N}, \mathbf U(\sigma))$, and the defined simple strategies and continuation payoff function satisfy Point $(a)$ of \autoref{def:self-generation}. Furthermore, taking $\psi$ as a given, the simple strategy profile $(\alpha, \beta)$ satisfies Points $(b)$ and $(c)$ of the definition; otherwise, there would be a one-shot deviation from $\sigma$ that improves the payoff of either a broker or the agent, which would contradict the assumption that $\sigma$ is an equilibrium strategy profile. Thus, $y$ is supported by the tuple $(\alpha, \beta,\psi)$. 

Thus, all that remains is to show is that the the constructed tuple $(\alpha,\beta,\psi)$ that supports $y$ is such that $Im_\psi\subseteq \mathcal{E}(n,k)$. Since $\sigma$ is a PBE, for any history $h_t$, the strategy $\sigma\vert h_t$ is an equilibrium in the game starting at $h_t$. Thus, following any $(O,w,m)\in\mathcal{O}^n\times W\times \Theta$ and the resulting period-2 history $h^2=h^1\cup\{O,w,m\}\in H^2$, the payoff profile $((\mathbf V_{i2}(\sigma|h^2))_{i\in N}, \mathbf U_{2}(\sigma|h^2))$ is in $\mathcal{E}(n,k)$. Consequently, $Im_\psi\subseteq \mathcal{E}(n,k)$. As the choice of $y\in\mathcal{E}(n,k)$ was arbitrary, we have established that $\mathcal{E}(n,k)$ satisfies \autoref{def:gen}. Hence, $\mathcal{E}(n,k)\subseteq \mathcal{S}(n,k)$, as desired.
\medskip

\noindent \underline{$\mathcal{S}^m(n,k)\subseteq \mathcal{E}(n,k)$ direction:} Consider any $y\in \mathcal{S}^m(n,k)$. By definition, there exists some Borel set $E\subseteq \mathcal{S}^m(n,k)$ that contains $y$. Moreover, since $E$ is measurably self-generating, there exist universally measurable functions $(\bm\alpha,\bm\beta,\bm\psi)$ such that
\begin{enumerate}[$(\roman*)$]
    \item Each $y'\in E$ is supported by the tuple $(\bm\alpha[y'], \bm\beta[y'], \bm\psi[y'])$, and 
    \item $Im_{\bm\psi}\subseteq E$. 
\end{enumerate}

For $t=1$, define $Y(h^1)\coloneqq y$. For each $t\geq 1$,  $h^{t}\in H$, and $(O_t, w_t,m_t)\in\mathcal{O}^n\times W\times\Theta$, define $Y(h^{t+1})\coloneqq \bm{\psi}[Y(h^{t})](O_t, w_t, m_t)$ where $h^{t+1}=h^{t}\cup\{O_t, w_t, m_t\}$. By construction, $Y(h)\in E$  for all $h\in H$ because $Im_{\bm{\psi}}\subseteq E$. Furthermore, the mapping $h\mapsto Y(h)$ is universally measurable as it is derived from countably many compositions of $\bm\psi$---a universally measurable function---with itself.

Let us now construct a strategy profile $\sigma=((\sigma_i)_{i\in N}, \sigma^{\mathcal{w}}_A, \sigma^{\mathcal{d}}_A)$ where 
\begin{itemize}
    \item For each $i\in N$,  $\sigma_i:H\to\Delta(\mathcal{O})$ is given by $\sigma_i(\cdot|h^t)=\bm{\alpha}_i[Y(h^t)](\cdot)$, 
    \item $\sigma_A^\mathcal{w}:H\times \mathcal{O}^n\to \Delta(W)$ is given by $\sigma_A^\mathcal{w}(\cdot|h^t, O_t)=\bm{\beta^\mathcal{w}}[Y(h^t)](\cdot|O_t)$, and  
    \item $\sigma_A^\mathcal{d}:H\times \mathcal{O}^n\times W\times \Theta\to [0,1]$ is given by $\sigma_A^\mathcal{d}(h^t, O_t, w_t, m_t)=\bm{\beta^\mathcal{d}}[Y(h^t)](O_t, w_t, m_t)$.
\end{itemize}
This constructed strategy profile $\sigma$ is universally measurable as, once again, it is the composition of universally measurable functions. 

Fix a non-terminal history $h^t\in H$. Since $Y(h^t)\in E$,  $Y(h^t)$ is supported by the tuple $\big((\bm{\alpha}_i[Y(h^t)])_{i\in N}, \bm{\beta}[Y(h^t)], \bm{\psi}[Y(h^t)]\big)$. In particular, players cannot improve their payoff from $Y(h^t)$ with a one-shot deviation, and $Y(h^t)$ is generated by the tuple. Given the constructed strategies, we can recursively express the payoff of Broker $i\in N$ as
\[
Y_i(h^t)=\mathbb{E}_{\sigma(h^t)}\Big[ p_{it}\mathbbm{1}_{[w_t=i]}+(1-d_t)Y_i(h^{t+1})\Big],
\]
and the agent's payoff as 
\[
Y_{n+1}(h^t)=\mathbb{E}_{\sigma(h^t)}\Big[ d_t\theta_t-k-p_{w_t t}+(1-d_t)Y_{n+1}(h^{t+1})\Big].
\]
Iterating this identity until period $T>t$ for Broker $i\in N$ gives 
\begin{equation}
\label{eq:broker-recursive}
\tag{A1}
    Y_i(h^t)=\mathbb{E}_{\sigma(h^t)}\left[ \sum_{\tau=t}^{ T\wedge \tilde T_\sigma}p_{i\tau}\mathbbm{1}_{[w_\tau=i]}+\mathbbm{1}_{[\tilde T_\sigma>T]}Y_i(h^{T+1})\right],
\end{equation}
where $\tilde T_\sigma$ is the stopping time induced under $\sigma$. Similarly, iterating the agent's payoff gives 
\begin{equation}
\label{eq:agent-recursive}
\tag{A2}
    Y_{n+1}(h^t)=\mathbb{E}_{\sigma(h^t)}\left[ \theta_{\tilde T_\sigma}\mathbbm{1}_{[\tilde T_\sigma\leq  T]}-\sum_{\tau=t}^{T\wedge \tilde T_\sigma}(k+p_{w_\tau \tau})+\mathbbm{1}_{[\tilde T_\sigma> T]}Y_{n+1}(h^{T+1})\right].
\end{equation}
In order to show that $Y_i(h^t)=\mathbf{V}_{it}(\sigma|h^t)$ as defined in \eqref{eq:broker-payoff} for all $i\in N$, and that $Y_{n+1}(h^t)=\mathbf{U}_{t}(\sigma|h^t)$ as defined in \eqref{eq:agent-payoff}, we need to show that the last terms in \eqref{eq:broker-recursive} and \eqref{eq:agent-recursive} vanish. 

Let us focus on the agent's payoff. Since $E\subseteq \mathcal{F}(n,k)$ and $Y(h)\in E$ for each non-terminal history $h\in H$, the agent's continuation payoff satisfies $\underline u_k\leq Y_{n+1}(h)\leq \bar u_k$. Moreover, the agent's match value is bounded above by $\theta=1$ and price is bounded below by zero. Hence, from \eqref{eq:agent-recursive}, we get
\begin{align*}
\underline u_k\leq Y_{n+1}(h^t)\leq 1-k\mathbb{E}_{\sigma(h^t)}\left[T\wedge \tilde T_\sigma-t+1\right]+\bar u_k\\[6pt]
\implies  \mathbb{E}_{\sigma(h^t)}\left[T\wedge \tilde T_\sigma-t+1\right]\leq \frac{1+\bar u_k-\underline u_k}{k}.
\end{align*}
Since the last inequality holds for all $T>t$, we have 
\[
\mathbb{E}_{\sigma(h^t)}\left[\tilde T_\sigma-t+1\right]\leq \frac{1+\bar u_k-\underline u_k}{k},
\]
which implies that $\mathbb{P}_{\sigma(h^t)}(\tilde T_\sigma>T)\to 0$ as $T$ grows large. Thus, the last term in \eqref{eq:agent-recursive} becomes
\[
\mathbb{E}_{\sigma(h^t)}\left[\mathbbm{1}_{[\tilde T_\sigma>T]}Y_{n+1}(h^{T+1})\right]\leq \bar u_k\cdot \mathbb{P}_{\sigma(h^t)}(\tilde T_\sigma>T),
\]
with the right-hand side of the inequality going to zero as $T\to \infty$. We can therefore conclude that $Y_{n+1}(h^t)=\mathbf{U}_{t}(\sigma|h^t)$ as defined in \eqref{eq:agent-payoff}. A symmetric argument applies for the brokers establishing  $Y_i(h^t)=\mathbf{V}_{it}(\sigma|h^t)$ as defined in \eqref{eq:broker-payoff} for all $i\in N$. 

Thus far, we have shown that the constructed profile of strategies $\sigma$ satisfies $(i)$ $y_i=\mathbf{V}_{i}(\sigma)$ for all $i\in N$, $(ii)$ $y_{n+1}=\mathbf{U}(\sigma)$,  and $(iii)$ no player can improve her payoff via a one-shot deviation, and thus, no player can improve her payoff by deviating away from $\sigma$ finitely often.

It remains to argue that, following any history, no player can improve her payoff by deviating infinitely often.\footnote{The payoffs in our model do not satisfy the usual ``continuous at infinity" assumption, so we cannot simply appeal to the one-shot deviation principle.} Suppose, for the sake of contradiction, the agent can deviate to strategy $\tilde \sigma_A$ following some history $h^t\in H$ and earn a  payoff of \[
\mathbf{U}_t((\sigma_i)_{i\in N}, \tilde \sigma_A|h^t)-\mathbf{U}_t(\sigma|h^t)>\Delta\]
for some $\Delta>0$. Let $\tilde \sigma\coloneqq ((\sigma_i)_{i\in N}, \tilde \sigma_A)$ denote the deviation strategy profile. 

Since the agent's match value is bounded above by $\theta=1$, prices are non-negative, and the agent pays $k>0$ in each search period, this deviation strategy can yield a finite non-negative payoff only if the induced stopping time $\tilde T_{\tilde \sigma}$ is finite almost surely. Hence, we can choose a finite $T$ such that 
\[
\mathbb{P}_{\tilde\sigma(h^t)}(\tilde T_{\tilde\sigma}>T)<\frac{\Delta}{2}.
\]

Let $\sigma'_A$ coincide with $\tilde\sigma_A$ through period $T$ and with $\sigma_A$ afterward. Thus, the only histories at which $\tilde\sigma_A$ and $\sigma'_A$ differ are those with $\tilde T_{\tilde\sigma}>T$. At such histories, the maximum continuation payoff under $\tilde\sigma_A$ is at most 1, while the continuation payoff under $\sigma_A$ is non-negative. Therefore,
\begin{align*}
\mathbf{U}_t(\tilde \sigma|h^t)-\mathbf{U}_t(\sigma'|h^t)\leq \mathbb{P}_{\tilde\sigma(h^t)}(\tilde T_{\tilde\sigma}>T)<\frac{\Delta}{2}.
\end{align*}
Consequently,  
\begin{align*}
\mathbf{U}_t(\sigma'|h^t)-\mathbf{U}_t(\sigma|h^t)>\mathbf{U}_t(\tilde \sigma|h^t)-\frac{\Delta}{2}-\mathbf{U}_t(\sigma|h^t)>\frac{\Delta}{2},
\end{align*}
so $\sigma_A'$ is a profitable deviation for the agent. However, $\sigma_A'$ differs from $\sigma_A$ finitely often, which contradicts the fact that no one player can improve payoffs by deviating from $\sigma$ finitely often.  A similar argument establishes that no broker can profitably deviate. 

We thus conclude that the strategy profile $\sigma$ is sequentially rational, and therefore, an equilibrium. Finally, noting that $y_i=\mathbf{V}_{i}(\sigma)$ for all $i\in N$ and $y_{n+1}=\mathbf{U}(\sigma)$, we conclude that $y\in\mathcal{E}(n,k)$. As $y\in \mathcal{S}^m(n,k)$ was arbitrary, we get the desired result that $\mathcal{S}^m(n,k)\subseteq\mathcal{E}(n,k)$.
\end{proof}
\bigskip

\noindent\begin{proof}[Proof of \autoref{lemma:pass-fail-sufficiency}]
Fix any $k\in (0,m^\varnothing )$. We show that for each $u\in [\underline u_k, \bar u_k]$, there exists a unique value $\mathbf x_k(u)\in [\bar u_k,1)$ such that $u=\mathcal{u}_k(G^{\mathbf x_k(u)} )$. By \eqref{eq:1}, this is equivalent to showing that $c_{G^{\mathbf x_k(u)}}(u)=k$. 

To that end, consider the function $L_k:\Theta\times [\underline u_k, \bar u_k]\to \mathbb{R}$ given by $L_k(x, u)\coloneqq c_F(x)+\big(1-F(x)\big)(x-u)$. Notice that $(i)$ $L_k$ is continuous in both variables, $(ii)$ for  each $x\in \Theta$, $L_k(x,\cdot)$ is strictly decreasing in its second argument, and $(iii)$ for each $u\in [\underline u_k, \bar u_k]$, $L_k(\cdot, u)$ is strictly increasing in its first argument over the interval $(0,u)$ and strictly decreasing over $(u,1)$.\footnote{ Recalling that $F(\cdot)$ and $c_F(\cdot)$ are absolutely continuous functions, it follows that $L_k(\cdot,u)$ is absolutely continuous in its first argument, and $\frac{\partial L_k(x, u)}{\partial x}=f( x)(u-x)$ almost everywhere. Absolutely continuity of $L_k(\cdot,u)$ coupled with the fact that $f(x)(u-x)$ is strictly positive for $x<u$ and strictly negative for $x>u$ implies the desired claims.} 

For each $u\in [\underline u_k, \bar u_k]$,  
\begin{align*}
L_k(\bar u_k, u)=c_F(\bar u_k)+\big(1-F(\bar u_k)\big)(\bar u_k-u)=k+\big(1-F(\bar u_k)\big)(\bar u_k-u)\geq k,
\end{align*}
where the second equality follows because $\bar u_k$ solves \eqref{eq:1} when $c_G=c_F$. Similarly, 
\begin{align*}
L_k(1, u)=c_F(1)+\big(1-F(1)\big)(1-u)=0<k.
\end{align*}
Therefore, there is a unique $\mathbf x_k(u)\in [\bar u_k,1)$ such that $L_k\big(\mathbf x_k(u), u\big)=k$. 

Notice that $\mathbf x_k(\bar u_k)=\bar u_k$ because $L(\bar u_k, \bar u_k)=c_F(\bar u_k)=k$. Furthermore, $\mathbf x_k(\cdot)$ is strictly decreasing in $u$ because $L_k$ is strictly decreasing in $(x,u)$ over the relevant region $(\bar u_k, 1)\times (\underline u_k, \bar u_k)$. Hence, $\bar u_k=\mathbf x_k(\bar u_k)<\mathbf x_k(\underline u_k)<1$.

To conclude the proof, note that for all $u\in[\underline u_k,\bar u_k]$, $k=c_{G^\varnothing }(\underline u_k)\geq c_{G^\varnothing }(u)$, where the  equality follows from the fact that $\underline u_k$ solves \eqref{eq:1} when $c_G=c_{G^\varnothing }$, and the inequality follows from the fact that $c_{G^\varnothing }(\cdot)$ is a decreasing function. Hence, 
\begin{align*}
c_{G^{\mathbf x_k(u)}}(u)=\max\{c_{G^\varnothing }(u), L_k\big(\mathbf x_k(u), u\big)\}=\max\{c_{G^\varnothing }(u), k\}=k
\end{align*}
which implies that $u=\mathcal{u}_k(G^{\mathbf x_k(u)})$, as desired.
\end{proof}
\bigskip

\noindent \begin{proof}[Proof of \autoref{lemma:stationary-implement}]
Fix any $n\geq 1$, $k\in (0,m^\varnothing)$, and $y\in \mathcal{F}(n, k)$. We first show that we can generate a surplus of $\norm{y}$ while giving the agent a payoff of $y_{n+1}$. We then show that we can distribute the remaining $\norm{y}-y_{n+1}$ surplus to the brokers in such a way that each Broker $i\in N$ earns $y_i$. 

To that end, let $(\alpha, \beta, \psi)$ be as in the statement of the lemma, i.e., $\alpha_{i}(o^*(y))=1$ for all $i\in N$, $\psi\big(O^*(y), w,m\big)=y$ for all $(w,m)\in W\times \Theta$, and $\beta=(\beta^\mathcal{w},\beta^\mathcal{d})$ is sequentially rational with respect to $\psi$. The latter necessarily implies that 
\[
\beta^\mathcal{d}\big(O^*(y), w,m\big)=\begin{cases}
    1 & \mbox{if }  m<y_{n+1}\\
    0 & \mbox{if }  m>y_{n+1}
    \end{cases}
\]
for all $(w,m)\in W\times \Theta$, and that
\[
\supp\beta^\mathcal{w}\big(\cdot|O^*(y)\big)\subseteq\argmax_{w\in W} \int_\Theta \max\left\{m, y_{n+1}\right\}dG_w(m)-k-p_w,
\]
where $(p_w, G_w)=\big(P(y), G^{\mathbf x_k(\norm{y})}\big)$ if $w\in N$ and $(p_w, G_w)=(0, G^\varnothing)$ if $w=\emptyset$.

Clearly, the agent is indifferent across the $n$ identical $o^*(y)$ offers. If he chooses one of the $o^*(y)$ offers (i.e., $w\in N$), his expected payoff is 
\begin{align*}
\int_\Theta\max\{m, y_{n+1}\}dG^{\mathbf x_k(\norm{y})}(m)-k-P(y)=&y_{n+1}+c_{G^{\mathbf x_k(\norm{y})}}(y_{n+1})-k-P(y)\\[6pt]
=&y_{n+1}+c_{G^{\mathbf x_k(\norm{y})}}(\norm{y})-k\\[6pt]
=&y_{n+1},
\end{align*}
where the second equality follows from substituting in the expression for $P(y)$ as defined in \eqref{eq:price}, and the last equality follows because $\norm{y}=\mathcal{u}_k(G^{\mathbf x_k(\norm{y})})$ (see \autoref{lemma:pass-fail-sufficiency}). 

If the agent instead chooses the null offer (i.e., $w=\emptyset$), his expected payoff is 
\begin{align*}
\int_\Theta\max\{m, y_{n+1}\}dG^{\varnothing}(m)-k=y_{n+1}+c_{G^\varnothing}(y_{n+1})-k
\leq y_{n+1}+c_{G^\varnothing}(\underline u_k)-k
=y_{n+1},
\end{align*}
where the inequality follows because $c_{G^\varnothing}(\cdot)$ is a decreasing function and $y_{n+1}\geq \underline u_k$, and the last equality follows because $\underline u_k=\mathcal{u}_k(G^\varnothing)$. Therefore, the agent (weakly) prefers one of the $n$ identical $o^*(y)$ offers instead of the null offer. Additionally, picking any one of the $n$ offers generates the agent a payoff of $y_{n+1}$. 

It remains to show that we can also generate $y_i$ for each broker $i\in N$. Let $\beta^\mathcal{d}\big(O^*(y), w, m\big)=\mathbbm{1}_{[m\leq y_{n+1}]}$ for all $(w,m)\in W\times \Theta$ so that Broker $i$'s payoff is given by 
\begin{align*}
V_i\big(O^*(y), \beta, \psi\big)=\beta^\mathcal{w}\big(i|O^*(y)\big) P(y) + G^{\mathbf x_k(\norm{y})}(y_{n+1})y_i.\label{eq:a2}
\tag{A3}
\end{align*}
Let us simplify \eqref{eq:a2} by making two observations. Our first observation is about the price $P(y)=c_{G^{\mathbf x_k(\norm{y})}}(y_{n+1})-c_{G^{\mathbf x_k(\norm{y})}}(\norm{y})$. From \autoref{lemma:dist-function}, we know that 
\[
c_{G^{\mathbf x_k(\norm{y})}}(\norm{y})=c_F\big(\mathbf x_k(\norm{y})\big)+\big(1-F\big(\mathbf x_k(\norm{y})\big)\big)\big(\mathbf x_k(\norm{y})-\norm{y}\big),
\]
and
\[
c_{G^{\mathbf x_k(\norm{y})}}(y_{n+1})=c_F\big(\mathbf x_k(\norm{y})\big)+\big(1-F\big(\mathbf x_k(\norm{y})\big)\big)\big(\mathbf x_k(\norm{y})-y_{n+1}\big).
\]
Therefore, 
\begin{align*}
P(y)=c_{G^{\mathbf x_k(\norm{y})}}(y_{n+1})-c_{G^{\mathbf x_k(\norm{y})}}(\norm{y})=\big(1-F\big(\mathbf x_k(\norm{y})\big)\big)\sum_{i\in N}y_i. \label{eq:a3}
\tag{A4}
\end{align*}

Our second observation is about $G^{\mathbf x_k(\norm{y})}(y_{n+1})$. From \autoref{lemma:support}, we know that
\[
\mathbb{E}_F[\theta|\theta\leq  \mathbf x_k(\norm{y})]\leq y_{n+1}<\mathbb{E}_F\big[\theta|\theta\geq \mathbf x_k(\norm{y})\big],
\]
which in turn implies that $G^{\mathbf x_k(\norm{y})}(y_{n+1})=F(\mathbf x_k(\norm{y}))$ by \eqref{eq:pass-fail-distibution}.

These two observations--$P(y)=\big(1-F(\mathbf x_k(\norm{y}))\big)\sum_{i\in N}y_i$ and $G^{\mathbf x_k(\norm{y})}(y_{n+1})=F(\mathbf x_k(\norm{y}))$--allow us to simplify \eqref{eq:a2}: 
\begin{align*}
V_i\big(O^*(y), \beta, \psi\big)=\beta^\mathcal{w}\big(i|O^*(y)\big)\left(\big(1-F(\mathbf x_k(\norm{y}))\big)\sum_{j\in N}y_j\right)+F(\mathbf x_k(\norm{y}))y_i.\label{eq:profit}
\tag{A5}
\end{align*}

The agent is indifferent across the $n$ identical TIOLI offers made by the brokers, and  he weakly prefers these offers to the null offer. Let 
\[
\beta^\mathcal{w}\big(i|O^*(y)\big)=\begin{cases}
 \frac{y_i}{\sum_{j\in N} y_j}    &  \mbox{if } \sum_{j\in N}y_j>0\\
   1/n &  \mbox{if }  \sum_{j\in N}y_j=0 .
\end{cases}
\]
Under this specification, Broker $i$'s payoff in \eqref{eq:profit} reduces to
$V_i\big(O^*(y), \beta, \psi\big)=y_i.$
Thus, we have shown that there exists a tuple $(\alpha, \beta, \psi)$, as specified in the statement of the lemma, that generates $y\in \mathcal{F}(n,k)$. This completes the proof.
\end{proof}

\singlespacing
\bibliographystyle{abbrvnat}
\nocite{}\bibliography{bibref}

@String{Academic = "Academic Press" }

@article{Fershtman_and_pavan,
 author = {Daniel Fershtman and Pavan, Alessandro},
 journal = {mimeo},
  title = {Searching for ``Arms'': Experimentation with Endogenous Consideration Sets},
 year = {2025}
}

@article{Wittwer2026,
 author = {Milena Wittwer},
 journal = {mimeo},
  title = {Moral Hazard and Imperfect Competition in Financial Markets},
 year = {2026}
}

@article{wei79,
  title={Optimal search for the best alternative},
  author={Weitzman, Martin L},
  journal={Econometrica},
  volume={47},
  number={3},
  pages={641--654},
  year={1979},
  publisher={JSTOR}
}

@article{mcc70,
  title={Economics of information and job search},
  author={McCall, John Joseph},
  journal={The Quarterly Journal of Economics},
  pages={113--126},
  year={1970},
  publisher={JSTOR}
}

@article{bla53,
  title={Equivalent comparisons of experiments},
  author={Blackwell, David},
  journal={The Annals of Mathematical Statistics},
  pages={265--272},
  year={1953},
  publisher={JSTOR}
}

@article{boa19,
  title={Competitive information disclosure in search markets},
  author={Board, Simon and Lu, Jay},
  journal={Journal of Political Economy},
  volume={126},
  number={5},
  pages={1965--2010},
  year={2018},
  publisher={University of Chicago Press Chicago, IL}
}

@article{au20,
  title={Competitive information disclosure by multiple senders},
  author={Au, Pak Hung and Kawai, Keiichi},
  journal={Games and Economic Behavior},
  volume={119},
  pages={56--78},
  year={2020},
  publisher={Elsevier}
}

@article{do22,
  title={Consumer search and optimal information},
  author={Dogan, Mustafa and Hu, Ju},
  journal={The RAND Journal of Economics},
  volume={53},
  number={2},
  pages={386--403},
  year={2022},
  publisher={Wiley Online Library}
}

@article{hu22,
  title={Industry-optimal Information in the Search Market},
  author={Hu, Ju},
  journal={SSRN WP 3935299},
  year={2022}
}

@article{au2023attraction,
  title={Attraction versus Persuasion: Information Provision in Search Markets},
  author={Au, Pak Hung and Whitmeyer, Mark},
  journal={Journal of Political Economy},
  volume={131},
  number={1},
  pages={202--245},
  year={2023},
  publisher={The University of Chicago Press Chicago, IL}
}

@article{kamenica2019bayesian,
  title={Bayesian persuasion and information design},
  author={Kamenica, Emir},
  journal={Annual Review of Economics},
  volume={11},
  pages={249--272},
  year={2019},
  publisher={Annual Reviews}
}

@article{kamenica2017competition,
  title={Competition in persuasion},
  author={Gentzkow, Matthew and Kamenica, Emir},
  journal={The Review of Economic Studies},
  volume={84},
  number={1},
  pages={300--322},
  year={2016},
  publisher={Review of Economic Studies Ltd}
}

@article{ber18,
  title={The design and price of information},
  author={Bergemann, Dirk and Bonatti, Alessandro and Smolin, Alex},
  journal={American economic review},
  volume={108},
  number={1},
  pages={1--48},
  year={2018},
  publisher={American Economic Association 2014 Broadway, Suite 305, Nashville, TN 37203}
}

@article{ber15,
  title={Selling cookies},
  author={Bergemann, Dirk and Bonatti, Alessandro},
  journal={American Economic Journal: Microeconomics},
  volume={7},
  number={3},
  pages={259--294},
  year={2015},
  publisher={American Economic Association 2014 Broadway, Suite 305, Nashville, TN 37203-2425}
}

@article{es07,
  title={The price of advice},
  author={Es{\H{o}}, P{\'e}ter and Szentes, Bal{\'a}zs},
  journal={The Rand Journal of Economics},
  volume={38},
  number={4},
  pages={863--880},
  year={2007},
  publisher={Wiley Online Library}
}

@article{adm86,
  title={A monopolistic market for information},
  author={Admati, Anat R and Pfleiderer, Paul},
  journal={Journal of Economic Theory},
  volume={39},
  number={2},
  pages={400--438},
  year={1986},
  publisher={Elsevier}
}

@article{adm90,
  title={Direct and indirect sale of information},
  author={Admati, Anat R and Pfleiderer, Paul},
  journal={Econometrica},
  pages={901--928},
  year={1990},
  publisher={JSTOR}
}

@article{aut08,
  title={Does job testing harm minority workers? Evidence from retail establishments},
  author={Autor, David  and Scarborough, David},
  journal={The Quarterly Journal of Economics},
  volume={123},
  number={1},
  pages={219--277},
  year={2008},
  publisher={MIT Press}
}

@article{mek23,
  title={Persuaded search},
  author={Mekonnen, Teddy and Murra-Anton, Zeky and Pakzad-Hurson, Bobak},
  journal={Journal of Political Economy},
  volume={133},
  number={10},
  pages={3167--3207},
  year={2025}
}

@article{li21,
  title={Sequential persuasion},
  author={Li, Fei and Norman, Peter},
  journal={Theoretical Economics},
  volume={16},
  number={2},
  pages={639--675},
  year={2021},
  publisher={Wiley Online Library}
}

@article{abr90,
  title={Toward a theory of discounted repeated games with imperfect monitoring},
  author={Abreu, Dilip and Pearce, David and Stacchetti, Ennio},
  journal={Econometrica},
  pages={1041--1063},
  year={1990},
  publisher={JSTOR}
}

@article{fud86,
 author = {Drew Fudenberg and Eric Maskin},
 journal = {Econometrica},
 number = {3},
 pages = {533--554},
 publisher = {[Wiley, Econometric Society]},
 title = {The Folk Theorem in Repeated Games with Discounting or with Incomplete Information},
 urldate = {2024-10-12},
 volume = {54},
 year = {1986}
}

@article{sato23,
  title={Persuasion in ordered search},
  author={Sato, Hiroto and Shirakawa, Ryo},
  journal={SSRN WP 4483732},
  year={2023}
}

@article{au24,
  title={Attraction via prices and information},
  author={Au, Pak Hung and Whitmeyer, Mark},
  journal={arXiv preprint arXiv:2402.11754},
  year={2024}
}

@article{dana2011productquality,
 author = {James D. Dana and Yuk-Fai Fong},
 journal = {International Economic Review},
 number = {4},
 pages = {1059--1076},
 title = {Product Quality, Reputation, and Market Structure},
 volume = {52},
 year = {2011}
}

@article{dia71,
  title={A model of price adjustment},
  author={Diamond, Peter A},
  journal={Journal of economic theory},
  volume={3},
  number={2},
  pages={156--168},
  year={1971},
  publisher={Academic Press}
}

@article{gen17,
  title={Bayesian persuasion with multiple senders and rich signal spaces},
  author={Gentzkow, Matthew and Kamenica, Emir},
  journal={Games and Economic Behavior},
  volume={104},
  pages={411--429},
  year={2017},
  publisher={Elsevier}
}

@article{bon24,
  title={Selling information in competitive environments},
  author={Bonatti, Alessandro and Dahleh, Munther and Horel, Thibaut and Nouripour, Amir},
  journal={Journal of Economic Theory},
  volume={216},
  pages={105779},
  year={2024},
  publisher={Elsevier}
}

@article{rod24,
  title={Strategic incentives and the optimal sale of information},
  author={Rodr{\'\i}guez Olivera, Rosina},
  journal={American Economic Journal: Microeconomics},
  volume={16},
  number={2},
  pages={296--353},
  year={2024},
  publisher={American Economic Association 2014 Broadway, Suite 305, Nashville, TN 37203-2425}
}

@article{wu23,
  title={Sequential bayesian persuasion},
  author={Wu, Wenhao},
  journal={Journal of Economic Theory},
  volume={214},
  pages={105763},
  year={2023},
  publisher={Elsevier}
}

@article{he23,
  title={Competitive information disclosure in random search markets},
  author={He, Wei and Li, Jiangtao},
  journal={Games and Economic Behavior},
  volume={140},
  pages={132--153},
  year={2023},
  publisher={Elsevier}
}

@article{li18,
  title={On Bayesian persuasion with multiple senders},
  author={Li, Fei and Norman, Peter},
  journal={Economics Letters},
  volume={170},
  pages={66--70},
  year={2018},
  publisher={Elsevier}
}

@article{coey2020discounts,
  title={Discounts and deadlines in consumer search},
  author={Coey, Dominic and Larsen, Bradley J and Platt, Brennan C},
  journal={American Economic Review},
  volume={110},
  number={12},
  pages={3748--3785},
  year={2020},
  publisher={American Economic Association 2014 Broadway, Suite 305, Nashville, TN 37203}
}

@article{aumann1961borel,
  title={Borel structures for function spaces},
  author={Aumann, Robert J},
  journal={Illinois Journal of Mathematics},
  volume={5},
  number={4},
  pages={614--630},
  year={1961},
  publisher={Duke University Press}
}

@article{bonatti2020consumer,
  author  = {Bonatti, Alessandro and Cisternas, Gonzalo},
  title   = {Consumer Scores and Price Discrimination},
  journal = {The Review of Economic Studies},
  year    = {2020},
  volume  = {87},
  number  = {2},
  pages   = {750--791}
}

@article{kayaRoy2024repeated,
  author  = {Kaya, Ay{\c c}a and Roy, Santanu},
  title   = {Repeated Trading: Transparency and Market Structure},
  journal = {American Economic Review},
  year    = {2024},
  volume  = {114},
  number  = {8},
  pages   = {2388--2435}
}

@article{dilme2025repeated,
  author  = {Dilm{\'e}, Francesc},
  title   = {Repeated Trade with Imperfect Information about Previous Transactions},
  journal = {Theoretical Economics},
  year    = {2025},
  volume  = {20},
  pages   = {209--254}
}
\end{document}